\documentclass[12pt]{article}
\usepackage{setspace} 
\usepackage[titletoc,title]{appendix}
%\doublespacing
\usepackage[dvips]{graphicx} 
\usepackage{amsmath,amsthm,amsfonts}
\usepackage{epsfig}
\usepackage{subfig}
\usepackage{hyperref}
\hypersetup{
colorlinks = true,
citecolor=blue,
urlcolor=blue,
linkcolor=blue,
pdfauthor = {Daniel Hackmann},
pdfkeywords = { },
pdftitle = {},
pdfpagemode = UseNone
}
\usepackage{mathtools}
%****************************************************************************************************************
    \oddsidemargin -1.0cm
    \evensidemargin -1.0cm
    \topmargin -1.5cm
    \textwidth 18.7cm
    \textheight 23.5cm
%****************************************************************************************************************
\usepackage{diagbox}
\usepackage{setspace}
\usepackage{amsthm}
\usepackage{subfig}
\usepackage{appendix}
\usepackage{enumerate}
\usepackage{changepage}
\usepackage{calc}
\graphicspath{{./}}
\usepackage{calligra} 
\DeclareMathAlphabet{\mathcalligra}{T1}{calligra}{m}{n} 
\DeclareFontShape{T1}{calligra}{m}{n}{<->s*[2.2]callig15}{}
\numberwithin{equation}{section} 
    \def\qed{\hfill$\sqcap\kern-8.0pt\hbox{$\sqcup$}$\\}
    \def\beq{\begin{eqnarray}}
    \def\eeq{\end{eqnarray}}
    \def\beqq{\begin{eqnarray*}}
    \def\eeqq{\end{eqnarray*}}
    
    \def\re{\textnormal {Re}}

    \def\p{{\mathbb P}}
    \def\e{{\mathbb E}}
    
    \def\r{{\mathbb R}}

    \def\c{{\mathbb C}}

    \def\d{{\textnormal d}}

    \def\ind{{\mathbb I}}

    \def\eq{\textbf{e}(q)}

    	\def\hr{\hat{\rho}}
    	
    	\def\hz{\hat{\zeta}}

    	\def\hN{\hat{N}}
    	\def\hM{\hat{M}}
%****************************************************************************************************************
	\newtheorem{theorem}{Theorem}
	\newtheorem{lemma}{Lemma}
	\newtheorem{proposition}{Proposition}
	\newtheorem{corollary}{Corollary}
	\theoremstyle{definition}
	\newtheorem{example}{Example}

	\newtheorem{remark}{Remark}

	\newcommand\xqed[1]{%
  	\leavevmode\unskip\penalty9999 \hbox{}\nobreak\hfill
  	\quad\hbox{#1}}
	\newcommand\demo{\xqed{$\dagger$}}
	\newcommand\rqed[1]{%
  	\leavevmode\unskip\penalty9999 \hbox{}\nobreak\hfill
  	\quad\hbox{#1}}
	\newcommand\rdemo{\rqed{$\ddagger$}}

%****************************************************************************************************************
%****************************************************************************************************************
%****************************************************************************************************************

\title{Analytic techniques for option pricing under a hyperexponential L\'{e}vy model}
\author{
{Daniel Hackmann
\footnote{  
E-mail: dan@danhackmann.com. Web: www.danhackmann.com.}}
 }
 
 \date{\today}

%****************************************************************************************************************
%****************************************************************************************************************
%****************************************************************************************************************

\begin{document}

\maketitle
\abstract{\noindent We develop series expansions in powers of $q^{-1}$ and $q^{-1/2}$ of solutions of the equation $\psi(z) = q$, where $\psi(z)$ is the Laplace exponent of a hyperexponential L\'{e}vy process. As a direct consequence we derive analytic expressions for the prices of European call and put options and their Greeks (Theta, Delta, and Gamma) and a full asymptotic expansion of the short-time Black-Scholes at-the-money implied volatility. Further we demonstrate how the speed of numerical algorithms for pricing exotic options, which are based on the Laplace transform, may be increased.
\section{Introduction}
A hyperexponential L\'{e}vy process $X$ is one with a L\'{e}vy measure of the form
\begin{align*}
\nu(\d x) = \ind(x < 0)\sum_{\ell=1}^{\hat{N}}\hat{a}_{\ell}\hat{\rho}_{\ell}e^{\hat{\rho}_{\ell}x} \d x + \ind(x > 0)\sum_{\ell=1}^{N}a_{\ell}\rho_{\ell}e^{-\rho_{\ell}x} \d x,
\end{align*}
where the $\{\hat{a}_{\ell}\}_{1\leq\ell\leq \hat{N}}$ and $\{a_{\ell}\}_{1\leq\ell\leq N}$ are all positive real numbers and $0 < \rho_1 < \rho_2 < \ldots < \rho_{N-1} < \rho_{N}$ and $0 < \hat{\rho}_1 < \hat{\rho}_2 < \ldots < \hat{\rho}_{N-1} < \hat{\rho}_{N}$ hold. The Laplace exponent $\psi(z) := \frac{1}{t}\log\left(\e\left[e^{zX_t}\right]\right)$ has the form
\begin{align}\label{eq:main}
\psi(z) = \frac{\sigma^2z^2}{2} + az + z\sum_{\ell=1}^{N}\frac{a_{\ell}}{\rho_{\ell} - z} - z\sum_{\ell=1}^{\hat{N}}\frac{\hat{a}_{\ell}}{\hat{\rho}_{\ell} + z},\quad -\hat{\rho}_1 < \re(z) < \rho_1,
\end{align}
where $a \in \r$ and $\sigma \geq 0$. When $\sigma > 0$ hyperexponential processes are also called hyperexponential diffusions or hyperexponential jump diffusions in the literature. \\ \\
\noindent Despite their apparent simplicity -- they are compound Poisson processes plus a Brownian motion component when $\sigma > 0$ -- they have been studied extensively in the literature for a number of reasons. First, hyperexponential processes are dense in the $\mathcal{CM}$-class  of processes, i.e. those L\'{e}vy processes with completely monotone jump densities (also known as generalized hyperexponential processes)  \cite{Jeannin}. The $\mathcal{CM}$-class includes infinite activity models like the Variance Gamma (VG) process, the Normal Inverse Gaussian (NIG) process, and the CGMY/Kobol/Generalized Tempered Stable process, which have become very popular in finance. Second, there are a number of fast and accurate algorithms that exploit this first quality, i.e. methods by which a $\mathcal{CM}$-class process can be approximated by a hyperexponential process arbitrarily well \cite{crosby_lesaux,al_and_me}. Third, because $\psi(z)$ can be extended to a rational function with real poles on $\c$, hyperexponential processes are ``analytically tractable". For example, we have analytic expressions for the Laplace transform (in $t$) of the distribution of $X_t$ (see Theorem \ref{theo:atq}) and the Wiener-Hopf factors \cite{exp_phase}. For financial applications,  under the assumption of an exponential model for the stock price, analytic expressions for the Laplace transform of the prices of barrier \cite{Jeannin,Cai_double,sepp} and look-back \cite{Cai_Kou} options, for the double Laplace transform of the price of an Asian option \cite{Cai_Kou_Asian_options}, and for the prices of Russian options and for perpetual American strangles \cite{exp_phase,boy_strangle} are known. If we restrict $N=\hat{N} = 1$ to get the so-called double exponential or Kou model, we have also analytic expressions for prices of European call and put options and European options on futures contracts \cite{kou_sad}, as well as perpetual American options \cite{kou_wang}.\\ \\
In almost all of the cases mentioned above, the formula for the derivative price, or the Laplace transformed price, is expressed in terms of the solutions of the equation 
\begin{align}\label{eq:main_idea}
\psi(z) = q,\quad q > 0.
\end{align} 
If we exclude those cases where there are fewer than four solutions, then the solutions need to be determined numerically. As a practical matter, finding solutions to \eqref{eq:main_idea} is a time consuming part of the algorithm for inverting the Laplace transform to obtain option prices (Asian options, barrier options, look-back options), especially because it becomes necessary to solve \eqref{eq:main_idea} for $q \in \c$.\\ \\
\noindent The main idea behind this article straightforward: we develop convergent series in powers of $q^{-1}$ (when $\sigma = 0$) and $q^{-1/2}$ (when $\sigma > 0$) of the solutions of \eqref{eq:main_idea} for $q\in\c$ with $\vert q \vert$ large enough. Since the series converge quite rapidly, an immediate consequence is that the (truncated) series may be used to speed up algorithms for determining derivative prices based on numerical inversion of the Laplace transform.\\ \\
While this is a useful result, further, interesting results follow from the main idea. We are also able to use the expansions to develop analytic expressions for the prices European call and put options and their Greeks. This is rather rare in exponential L\'{e}vy models, to the best of the author's knowledge there are only two other L\'{e}vy processes for which this is true: a) Merton's model \cite{merty} and b) Kou's model \cite{kou_sad}. The resulting expressions involve series of functions in $T$, the time of expiry of the option, which when $\sigma = 0$ are, in fact, just Taylor series. In the at-the-money (ATM) case, when $\sigma > 0$, the formulas are essentially series in powers of $T^{1/2}$; this allows us to develop a full asymptotic expansion of the short-time ATM Black-Scholes implied volatility. Implied volatiles, together with short-time asymptotic expansions of call option prices, have seen a large amount of recent interest in the financial mathematics literature owing to their application to the calibration problem (see for example \cite{figruot} and the references therein).\\ \\
It should be noted that we are generalizing Kou's results \cite{kou_sad}. While Kou also develops analytic formulas for European call and put option prices, his approach relies on the decomposition of sums of double exponential random variables; this technique does not seem to have a natural extension to the general case, where the number of exponential factors in the L\'{e}vy density exceeds two.\\ \\
\noindent  Our approach is therefore rather different and analytical in nature, relying on results from complex analysis and the theory of Laplace transforms. We devote Section \ref{sect:tools} of the article to reviewing the relevant theory and developing notation. In Section \ref{sect:key_hyp} we gather some key results for hyperexponential processes and develop the series expansions of the solutions of \eqref{eq:main_idea}. Then in Section \ref{sect:option} we develop analytic formulas for European option prices and Greeks, derive a full asymptotic expansion of the ATM implied volatility, and consider several numerical examples. In one of these, we show how the speed of computing the price of a digital barrier option via inverting the Laplace transform can be at least doubled. In another, we demonstrate that our formulas for put and call option prices are much faster for computing short to medium-time prices than the traditional approach based on numerical Laplace inversion (faster by a factor of at least five for 100 option prices with maturities shorter than 0.5). Software used to compute the various examples given throughout the article can be obtained from the author's website.
\section{Tools from complex analysis}\label{sect:tools}
\subsection{Basic notation}
Assuming $R>0$ and $z_0 \in \r$ we define
\begin{align*}
\c^+ := \{z \in \c : z \notin (-\infty,0] \},\quad \c_R := \{z \in \c : \vert z \vert > R\},\quad\text{ and }\quad\mathbb{H}_{z_0} := \{z \in \c: \re(z) > z_0 \}, 
\end{align*}
and using these $\c_R^{+} := \c^+ \cap \c_R$ and $\mathbb{H} := \mathbb{H}_0$. The notation $\mathbb{Z}^+$ refers to the non-negative integers, with the analogous meaning for the notation $\mathbb{Z}^-$. We will use $B$ to denote an open ball in $\c$ centered at $0$, and $B_0$ to denote a punctured open ball excluding the point 0. If we want to be specific about the radius $R$ we will write $B(R)$ and $B_0(R)$. \\ \\
The collection of solutions $w$ of the equation $w^k = z$, $k\in\mathbb{N}$ are denoted ${z}^{1/k}_m$. It follows that ${z}^{1/k}_m$ is a multi-valued function (see pg. 24 in \cite{mark} for a rigorous definition) taking exactly $k$ values for all $z \neq 0$. The principal branch of ${z}^{1/k}_m$ will be denoted simply $z^{1/k}$. As usual, the principal branch is that branch for which ${1}^{1/k} = 1$. Further, we define ${z}^{n/k}_m := ({z}^{1/k}_m)^n$ for $n \in \mathbb{Z}$, which is again a $k$-valued function when $k$ is relatively prime to $n$. Our primary concern will be the case $k=2$. In this scenario, the non-principal branch can be expressed in terms of the principal branch as $-z^{1/2}$; the two branches of ${z}^{n/2}_m$ are then just given by $z^{n/2} := (z^{1/2})^n$ and $(-1)^nz^{n/2}$. The notation $\log(z)$ always refers to the principal branch of the logarithm, i.e. that branch for which $\log(1) = 0$. The notation $\Gamma(z)$ refers to the gamma function.
\subsection{Working with series}\label{sect:working}
We will work with a Laurent series
\begin{align}\label{eq:basic_taylor}
f(z) := \sum_{n=k}^{\infty}f_nz^n,
\end{align}
where $\{f_n\}_{n\geq k} \subset \c$ and $k \in \mathbb{Z}$. We assume that $f_k \neq 0$, that the series converges on $B_0(R)$, and that $f(z) \in \r$ for $z \in \r$. The notation $_{k}\bar{f}_n$ denotes the $(n+k+1)$-tuple
\begin{align}\label{def:-kfn}
_{k}\bar{f}_n := (f_{k},\,f_{k+1},\,\ldots,\,f_n) \in \c^{n+k+1},\quad n\geq k,
\end{align}
and we define $\bar{f}_n :=\, _0\bar{f}_n$. Note that we can apply this latter notation to any sequence $\{a_n\}_{n\geq k}$, not necessarily only in the context of an underlying series.  \\ \\
\noindent If $f(z)$ converges on some $B_0(R)$, then it is well known (see Theorems 16.1 and 16.2 in \cite{mark}), that $1/f(z)$ also has a convergent series representation on $B_0(R')$ for some $R' > 0$. Further, the $n$-th coefficient is a function $r_n:\c^{n-k+1} \rightarrow \c$ that depends only on $_{k}\bar{f}_{n-2k}$, which can be easily computed (see Theorems 1.3 and 2.3d/f in \cite{cca}). We have
\begin{align}\label{eq:recip}
\frac{1}{f(z)} = \sum_{n=-k}^{\infty}r_nz^{n},\quad\text{where}\quad r_{-k} := r_{-k}(_{k}\bar{f}_{k}) := \frac{1}{f_{k}},\quad r_{-k+1} := r_{-k+1}(_{k}\bar{f}_{k+1}):= -\frac{f_{k+1}}{(f_{k})^{2}},
\end{align}
and for $n \geq -k+2$,
\begin{align}\label{def:rn}
r_{n} := r(_{k}\bar{f}_{n+2k}) := \frac{(-1)^{n+k}}{(f_{k})^{n+k+1}}
\begin{vmatrix}
f_{k+1} & f_{k+2} & \ldots & f_{n+2k} \\
f_{k} & f_{k+1} & \ldots & f_{n+2k-1} \\
0 & f_{k} & \ldots & f_{n+2k-2} \\
& & \ldots & \\
0 & 0 & f_{k} & f_{k+1}
\end{vmatrix}.
\end{align} 
Similarly, assuming that $k \geq 0$ and that $f(z)$ converges on some ball $B$, then for $c>0$ and $z \in B$ we have
\begin{align}\label{eq:power}
c^{f(z)} = \sum_{n=0}^{\infty}p_nz^{n},\quad p_0 := p_0(c,f_0):= c^{f_0},
\end{align}
and for $n \in \mathbb{N}$,
\begin{align}\label{def:pn}
p_n :=p_n(c,\bar{f}_n) := \frac{1}{n!}\sum_{m=1}^{n}c^{f_0}\left(\log(c)\right)^m B_{n,m}\left(1! f_1,\,2!f_2,\,\ldots,\,(n-m+1)!f_{n-m+1}\right),
\end{align}
where $\{B_{n,m}\}_{n\geq0,m\geq0}$ are the exponential partial Bell partition polynomials (see Definition 11.2 in \cite{char}). The derivation of \eqref{def:pn} follows from Fa{\`a} di Bruno's generalization of the chain rule for higher derivatives (see Lemma 1.3.1 in \cite{realan}). Note that we will write $p_n(c,-\bar{f}_n)$ for the coefficients of the series expansion of $c^{-f(z)}$.\\ \\
\noindent If $F = \{f_1(z),\,f_2(z),\,\ldots f_N(z)\}$ is a collection of series of the form \eqref{eq:basic_taylor}, we will write the $n$-th coefficient of the $i$-th series as $f_{i,n}$, and the index of the first non-zero coefficient of the $i$-th series as $k_i$. Then if $k_i \geq 0$ for all $f_i(z) \in F$, and all members of $F$ converge on a common ball $B$, then it is well known that for $z \in B$
\begin{align}\label{def:sn}
\sum_{i=1}^Nf_i(z) = \sum_{n=j}^{\infty}s_nz^{n},\quad s_n := s_n(f_{1,n},\,f_{2,n},\,\ldots,\,f_{N,n}) := \sum_{i=1}^{N}f_{i,n},
\end{align}
where $j = \min\{k_1,\,k_1,\,\ldots,\,k_N\}$ and we set $f_{i,n} = 0$ whenever $n < k_i$. Similarly, for $z \in B$,
\begin{align}\label{def:mn}
\prod_{i=1}^Nf_i(z) = \sum_{n=j}^{\infty}m_nz^{n},\quad m_n := m_n(\bar{f}_{1,n},\,\bar{f}_{2,n},\,\ldots,\, \bar{f}_{N,n}):= \pi_n^{N},
\end{align}
where $j = \sum_{i=1}^Nk_i$, and $\pi_n^{N}$ is defined recursively with $\pi_n^{1} : = f_{1,n}$ and
\begin{align*}
\pi_n^{i} := \sum_{k=0}^{n}\pi_{k}^{i-1}f_{i,n-k},\quad i \in \{2,\,3,\,\ldots,\,N\};
\end{align*}
again we abide by the convention $f_{i,n} = 0$ whenever $n < k_i$.\\ \\
We can also consider objects of the form $f(t(z))$, for some map $t:\c \rightarrow \c$. It is clear, that if for some $z_0$ we have $f(t(z_0)) \in \c$, then $f(z)$ is absolutely convergent on $B_0(\vert t(z_0) \vert)$ and therefore that $f(t(z))$ converges absolutely on $t^{-1}[B_0(\vert t(z_0) \vert)]$. Series expressions for $1/f(t(z))$ and $c^{f(t(z))}$ in powers of $t(z)$ can then be derived simply by replacing $z$ in \eqref{eq:recip} and \eqref{eq:power} by $t(z)$, with the understanding that the series converge on sets the form $t^{-1}[B_0]$; we will avoid any cases where these sets are empty. Analogously, formulas \eqref{def:sn} and \eqref{def:mn} also hold if we replace $z$ by $t(z)$, provided we consider the proper domain.\\ \\
Such series, i.e. those where $z$ is replaced by some transformation of $z$, occur naturally when we wish to derive the inverse series. This can be done via the Lagrange Inversion Theorem (Theorems 3.4 and 3.6 in  \cite{mark2}). This tells us that if $k \geq 1$ and $w = f(z)$ on $B(R)$, then there exists $R' > 0$ such $f(z)$ has a $k$-valued inverse $f^{-1}(w)$ on $B(R')$ of the form
\begin{align}\label{eq:invlagm}
f^{-1}(w) = \sum_{n=1}^{\infty}l_n w_m^{n/k},
\end{align}
where
\begin{gather}\label{eq:invlagm2}
l_n = \frac{1}{n!}\left(\frac{\d^{n-1}}{\d z^{n-1}}\left(\kappa(z)\right)^{n} \right)_{z=0},\quad
\kappa(z) := \frac{z}{\left( f(z) \right)_s^{1/k}},
\end{gather}
and where $(f(z))_s^{1/k}$ is any single-valued branch of the multiple-valued $(f(z))_m^{1/k}$. An explicit formula for $l_n$ can be obtained by choosing
\begin{align}\label{eq:new_form_kappa}
(\kappa(z))^n = f_k^{-n/k}\left(1 + \frac{f_{k+1}}{f_k}z + \frac{f_{k+2}}{f_k}z^2 + \ldots\right)^{-n/k}.
\end{align}
and applying Fa{\`a} di Bruno's Formula. This yields
\begin{align}\label{def:ln}
l_n := l_n(_k\bar{f}_{n+k-1}) := \frac{1}{n!f_k^{n/k}}\sum_{m=1}^{n-1}(-1)^m\left(\frac{n}{k}\right)_mB_{n-1,m}\left(1!\frac{f_{k+1}}{f_k},\,2!\frac{f_{k+2}}{f_k},\,\cdots,\,(n-m)!\frac{f_{k+n-m}}{f_k}\right),
\end{align}
where $(x)_i := x(x+1)\cdots(x + i -1)$ denotes the rising factorial. Formula \ref{def:ln} is valid for all $n \geq 2$; for $n=1$ we set $l_1 := l_1(_k\bar{f}_k) :=  1/\sqrt[k]{f_{k}} \neq 0$, which can be justified via \eqref{eq:new_form_kappa} and \eqref{eq:invlagm2}.

\begin{remark}
While the formulas in this section are a little daunting, it should be noted that most software packages that have a symbolic computation component have routines to handle series manipulations, even for fractional powers of the argument. Therefore, it is unnecessary to carry out computations by hand, or even write computer programs to compute, for example, the coefficients of a reciprocal or inverse series. For the remainder of the paper we use a Mathematica implementation to perform all series manipulations; the corresponding software can be found on the author's web page. All computations are carried out on a machine with 32GB of memory and an Intel i7-2600K CPU @ 3.40GHz.\rdemo
\end{remark}
\subsection{Termwise inversion of Laplace transforms represented by series}
In this brief section we recall an important result of Doetsch \cite{doetsch} concerning the inversion of Laplace transforms given by series and state a useful corollary. These will be the keys to developing series expansions of option prices in Section \ref{sect:option}. Here and throughout we use the notation $\mathcal{L}\{f\}(z)$ or $\mathcal{L}\{f(t)\}(z)$  to denote the Laplace transform of the function $f(t)$, which is defined
\begin{align}\label{eq:laplace}
\mathcal{L}\{f\}(z) := \int_0^{\infty}e^{-zt}f(t)\d t.
\end{align}
A key result for Laplace transforms is that if the integral converges for some $z_0 \in \r$ then it converges for all $z \in \mathbb{H}_{z_0}$ and is an analytic function of $z$ there. Note that in this section, the notation $f_n$ is used to denote functions rather than constant coefficients as was the case in the last section.
\begin{theorem}[Satz 30.1 in \cite{doetsch}]\label{theo:tbt}
Suppose that, for some collection of functions $\{f_n(t)\}_{n\geq0}$, the Laplace transforms $G_n(z) := \int_0^{\infty} e^{-zt}\vert f_n(t)\vert\d t$ and $F_n(z) := \mathcal{L}\{f_n\}(z)$ exist for every $n \in \mathbb{Z}^+$ on some common half-plane $\mathbb{H}_{z_0}$. Further, suppose that the series
\begin{align}\label{eq:doetsch}
G(z) := \sum_{n=0}^{\infty}G_n(z),\quad\text{ and therefore also}\quad F(z) := \sum_{n=0}^{\infty}F_n(z),
\end{align} 
converge on $\mathbb{H}_{z_0}$. Then, $\sum_{n=0}^{\infty}f_n(t)$ converges absolutely and for almost all $t \geq 0$ to a function $f(t)$ . Further $\mathcal{L}\{f(t)\}(z) = F(z)$.
\end{theorem}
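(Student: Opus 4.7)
The plan is to reduce the theorem to a straightforward application of Tonelli's theorem on the real axis, followed by Fubini on the full half-plane. First I would fix any real number $x$ with $x > z_0$; since $x \in \mathbb{H}_{z_0}$, the hypothesis gives $G(x) = \sum_{n=0}^{\infty}G_n(x) = \sum_{n=0}^{\infty}\int_0^{\infty}e^{-xt}|f_n(t)|\,\d t < \infty$. Each summand is a non-negative measurable function of $t$, so Tonelli's theorem permits swapping sum and integral, yielding
\begin{equation*}
\int_0^{\infty}e^{-xt}\sum_{n=0}^{\infty}|f_n(t)|\,\d t \;=\; G(x) \;<\; \infty.
\end{equation*}
Hence the inner sum is finite for almost every $t \geq 0$, and because $e^{-xt} > 0$ this forces $\sum_{n=0}^{\infty}|f_n(t)| < \infty$ almost everywhere. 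In particular $f(t) := \sum_{n=0}^{\infty}f_n(t)$ is well defined and absolutely convergent for almost all $t \geq 0$, which is the first conclusion.

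For the statement about the Laplace transform, I would fix $z \in \mathbb{H}_{z_0}$ and set $x := \re(z) > z_0$. The identity $|e^{-zt}f_n(t)| = e^{-xt}|f_n(t)|$ together with the previous step gives
\begin{equation*}
\sum_{n=0}^{\infty}\int_0^{\infty}|e^{-zt}f_n(t)|\,\d t \;=\; G(x) \;<\; \infty.
\end{equation*}
This $L^1$ bound, applied to the product of counting measure on $\mathbb{Z}^+$ and Lebesgue measure on $[0,\infty)$, justifies interchanging the sum and the integral by Fubini's theorem, so
\begin{equation*}
\mathcal{L}\{f\}(z) \;=\; \int_0^{\infty}e^{-zt}\sum_{n=0}^{\infty}f_n(t)\,\d t \;=\; \sum_{n=0}^{\infty}\int_0^{\infty}e^{-zt}f_n(t)\,\d t \;=\; \sum_{n=0}^{\infty}F_n(z) \;=\; F(z).
\end{equation*}

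The only subtle point is that the hypothesis places $G$ on the whole half-plane, whereas the positivity needed for Tonelli is only naturally available at real $z$. This obstacle is cosmetic, however, because $|e^{-zt}|$ depends only on $\re(z)$, and $\re(z) \in \mathbb{H}_{z_0}$ for every $z \in \mathbb{H}_{z_0}$. The Tonelli step applied at $x = \re(z)$ therefore produces exactly the dominating function required to justify the Fubini step at the complex $z$, and no further analytic machinery is needed.
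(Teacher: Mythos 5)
Your proof is correct. Note, however, that the paper does not prove this statement at all: it is quoted verbatim as Satz 30.1 from Doetsch's \emph{Einf\"uhrung in Theorie und Anwendung der Laplace-Transformen} and used as a black box, so there is no in-paper argument to compare against. Your Tonelli--then--Fubini argument is the standard and complete proof of this result: fixing a real abscissa $x>z_0$, Tonelli gives the a.e.\ absolute convergence of $\sum_n f_n(t)$, and the identity $\lvert e^{-zt}\rvert = e^{-\re(z)t}$ turns that same bound into the integrability needed for Fubini at every complex $z\in\mathbb{H}_{z_0}$. One small detail worth spelling out, which your write-up leaves implicit: the exceptional null set produced by Tonelli at the chosen real $x$ is independent of $x$, because the statement $\sum_n\lvert f_n(t)\rvert<\infty$ makes no reference to $x$; a single choice of $x>z_0$ therefore suffices, and the resulting $f(t)$ serves for every $z$ in the half-plane. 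You might also remark, as the theorem's phrasing ``and therefore also'' suggests, that the absolute convergence of $\sum_n F_n(z)$ follows from $\lvert F_n(z)\rvert\le G_n(\re(z))$ and the convergence of $G$ on the real axis, so $F(z)$ is well defined before you identify it with $\mathcal{L}\{f\}(z)$.
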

\noindent The following Corollary follows directly from Theorem 6 together with Satz 5.1, 5.5 and Satz 30.2 in \cite{doetsch}. In particular Satz 30.2 is a generalized version of the following.
\begin{corollary}\label{cor:main}
Suppose $f(t)$ is a continuous function on $[0,\infty)$ such that for some $0 \leq R < \infty$
\begin{align*}
\mathcal{L}\{f\}(z) = \sum_{n=k+j}^{\infty}\frac{a_n}{z^{n/k}}, \quad z \in \mathbb{H}_R,
\end{align*}
where $k \in \{1,2\}$ and $j \in \mathbb{Z}^+$, the series converges on $\c_R$. Then
\begin{align*}
f(t) = \sum_{n=j}^{\infty}\frac{a_{n+k}}{\Gamma\left(\frac{n}{k}+1\right)}t^{n/k},
\end{align*}
and the series converges for all $t\in \c$.
\end{corollary}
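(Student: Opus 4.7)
The plan is to exhibit the candidate inverse, justify term-by-term inversion via Theorem \ref{theo:tbt}, and then upgrade the resulting almost-everywhere equality to a genuine pointwise identity using continuity together with the uniqueness of Laplace inverses. The elementary identity $\mathcal{L}\{t^{\alpha}\}(z) = \Gamma(\alpha+1)/z^{\alpha+1}$, valid for $\alpha > -1$ and $\re(z) > 0$, shows that each summand $a_n/z^{n/k}$ is the Laplace transform of $a_n t^{n/k-1}/\Gamma(n/k)$. Reindexing by $m := n-k$ identifies the candidate
\begin{align*}
\tilde{f}(t) := \sum_{m=j}^{\infty} \frac{a_{m+k}}{\Gamma(m/k + 1)}\, t^{m/k},
\end{align*}
which is exactly the proposed right-hand side.

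To invoke Theorem \ref{theo:tbt}, I would set $f_n(t) := a_{n+k} t^{n/k}/\Gamma(n/k+1)$ for $n \geq j$, so that $F_n(z) = a_{n+k}/z^{n/k+1}$ and $G_n(z) = |a_{n+k}|/|z|^{n/k+1}$ whenever $\re(z) > 0$. The substitution $w := z^{-1/k}$ (principal branch) converts the given series $\sum a_n z^{-n/k}$ into an ordinary power series $\sum a_n w^n$, so the hypothesis ``the series converges on $\c_R$'' is equivalent to convergence of the $w$-series on the open disc $\{|w| < R^{-1/k}\}$. Standard power-series theory then yields both absolute convergence on this disc and the coefficient growth estimate $\limsup_{n\to\infty} |a_n|^{1/n} \leq R^{1/k}$. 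Consequently, for any $R' > R$ and any $z \in \mathbb{H}_{R'}$ one has $|z| \geq \re(z) > R$, so $\sum G_n(z)$ converges; Theorem \ref{theo:tbt} applies and provides a function, agreeing almost everywhere with $\tilde{f}(t)$, whose Laplace transform equals $\mathcal{L}\{f\}(z)$ on $\mathbb{H}_{R'}$.

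Next I would verify that the series defining $\tilde{f}(t)$ in fact converges for every $t \in \c$. Combining the bound $\limsup |a_n|^{1/n} \leq R^{1/k}$ with Stirling's asymptotic $\Gamma(n/k+1)^{1/n} \to \infty$ gives, by the root test,
\begin{align*}
\limsup_{n\to\infty}\left|\frac{a_{n+k}}{\Gamma(n/k+1)}\right|^{1/n}|t|^{1/k} \;=\; 0
\end{align*}
for every $t \in \c$, with the convergence uniform on compact subsets of $\c$ (where $t^{n/k}$ is taken as the principal branch when $k=2$). In particular $\tilde{f}$ is continuous on $[0,\infty)$.

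The argument is then closed by invoking the uniqueness theorem for Laplace transforms of continuous functions (Satz 5.5 in \cite{doetsch}, already cited in the paragraph preceding the corollary): $f$ and $\tilde{f}$ are both continuous on $[0,\infty)$ and share the same Laplace transform on $\mathbb{H}_{R'}$, which forces $f \equiv \tilde{f}$. The main technical point is really the switch in the second paragraph between the two equivalent readings of the hypothesis --- as a Dirichlet-type series on $\c_R$ in the variable $z$, and as an ordinary power series on a disc in the variable $z^{-1/k}$ --- since that switch is what simultaneously supplies the uniform summability needed to apply Theorem \ref{theo:tbt} and the coefficient growth bound that drives convergence of $\tilde{f}$ on all of $\c$; everything else is bookkeeping.
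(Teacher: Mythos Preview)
Your argument is correct and follows precisely the route the paper indicates: the paper's own ``proof'' is nothing more than a pointer to Satz 5.1, 5.5, 30.1 (which is Theorem \ref{theo:tbt}) and 30.2 in \cite{doetsch}, and you have spelled out exactly how those ingredients assemble --- termwise inversion via Theorem \ref{theo:tbt}, the root-test/Stirling estimate for entire convergence, and Lerch-type uniqueness for continuous originals. One small slip worth correcting: $G_n(z) = |a_{n+k}|/z^{n/k+1}$, not $|a_{n+k}|/|z|^{n/k+1}$; but since the latter is $|G_n(z)|$, your convergence verification on $\mathbb{H}_{R'}$ goes through unchanged.
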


\section{Key results for hyperexponential processes}\label{sect:key_hyp}
\subsection{Overview}
Recall that the Laplace exponent $\psi(z)$ of a hyperexponential process $X$ is a rational function of the form \eqref{eq:main} with real poles  $\{\rho_{\ell}\}_{1\leq \ell \leq N}$ and $\{-\hat{\rho}_{\ell}\}_{1\leq \ell \leq \hat{N}}$, which we assume are arranged according to increasing magnitude. Further, solutions of the equation $\psi(z) = q$ are of particular interest. It is not difficult to show that these are always real when $q > 0$; we denote the positive (resp. negative) solutions by $\{\zeta_{\ell}\}_{1\leq \ell \leq M}$ ($\{-\hat{\zeta}_{\ell}\}_{1\leq \ell \leq \hat{M}}$) where $M=N$ or $M=N+1$ (resp. $\hat{M}= \hat{N}$ or $\hat{M} = \hat{N} +1$) and $M$ (resp. $\hat{M}$) is determined by the values $\sigma$ and $a$. If we want to emphasize the argument $q$ we will write, for example, $\zeta_{\ell}(q)$.\\ \\
\noindent Importantly, we have the \emph{interlacing property}
\begin{align}\label{eq:inter}
%\rho_{} < \zeta_{n+1} < \rho_{n+1}, \quad 1 \leq n \leq N-1,\\
%\hat{\rho}_{n} < \hat{\zeta}_{n+1} < \hat{\rho}_{n+1}, \quad 1 \leq n \leq \hat{N}-1\\
 -\hat{\rho}_{\hat{N}} <-\hat{\zeta}_{\hat{N}}<\ldots-\hat{\rho}_1 <-\hat{\zeta}_2 < -\hat{\rho}_1 < -\hat{\zeta}_1 < 0 <  \zeta_1 < \rho_1 < \zeta_2 < \rho_2 < \ldots < \zeta_{N} < \rho_{N}.
\end{align}
When $\sigma \neq 0$ we have $M =N +1$ and $\hat{M} = \hat{N} + 1$, that is we have two additional solutions $-\hat{\zeta}_{\hat{M}}$ and $\zeta_M$ occurring to the left and right of $-\hr_{\hN}$ and $\rho_N$ respectively. Otherwise, if $\sigma = 0$ and $a > 0$ we have $M =N +1$ and $\hM = \hN$, and if $\sigma = 0$ and $a < 0$ then $M = N$ and $\hM = \hN + 1$. Again, in the cases where $-\hz_{\hM}$ and $\zeta_M$ represent additional solutions, they will occur to the left and right of $-\hr_{\hN}$ and $\rho_N$ respectively. Finally if both $\sigma=0$ and $a=0$ then $\hM = \hN$ and $M = N$ and there are no additional solutions. These ideas are illustrated in Figure \ref{fig:psiz} where the case $\sigma=0$, $a > 0$ is shown.\\ \\ 
\noindent \noindent We are interested in the distribution of the random variable $X_{\eq}$, which represents the process $X$ at the random time $\eq$. Here $\eq$ is an exponential random variable independent of $X$ with mean $q^{-1}$. To determine the distribution of $X_{\eq}$ we take the Laplace transform, which has the form
\begin{align*}
F(z;q) := \e[e^{zX_{\eq}}] = \frac{q}{q-\psi(z)},\quad -\hat{\zeta}_1 < \re(z) < \zeta_1,
\end{align*} 
and observe that, like $\psi(z)$, $F(z;q)$ extends to a rational function on $\c$. From the discussion above, it is clear that $F(z;q)$ has simple zeros at points $\{\rho_{\ell}\}_{1\leq \ell \leq N}$ and $\{-\hr_{\ell}\}_{1\leq \ell \leq \hN}$ and simple poles at points $\{\zeta_{\ell}\}_{1\leq \ell \leq M}$ and $\{-\hz_{\ell}\}_{1\leq \ell \leq \hM}$.
To simplify the presentation of what follows, we adopt the notation
\begin{align}\label{eq:gamma}
\gamma := \frac{1}{\left(q + \sum_{n=1}^{N}a_n + \sum_{n=1}^{\hat{N}}\hat{a}_n\right)},
\end{align}
and adhere to the conventions $\rho_0 := \hat{\rho}_0 := 0$ and $\rho_{N+1} := \hat{\rho}_{\hat{N}+1} := +\infty$.
\begin{figure}[!h]
\centering
\def\svgscale{0.35}
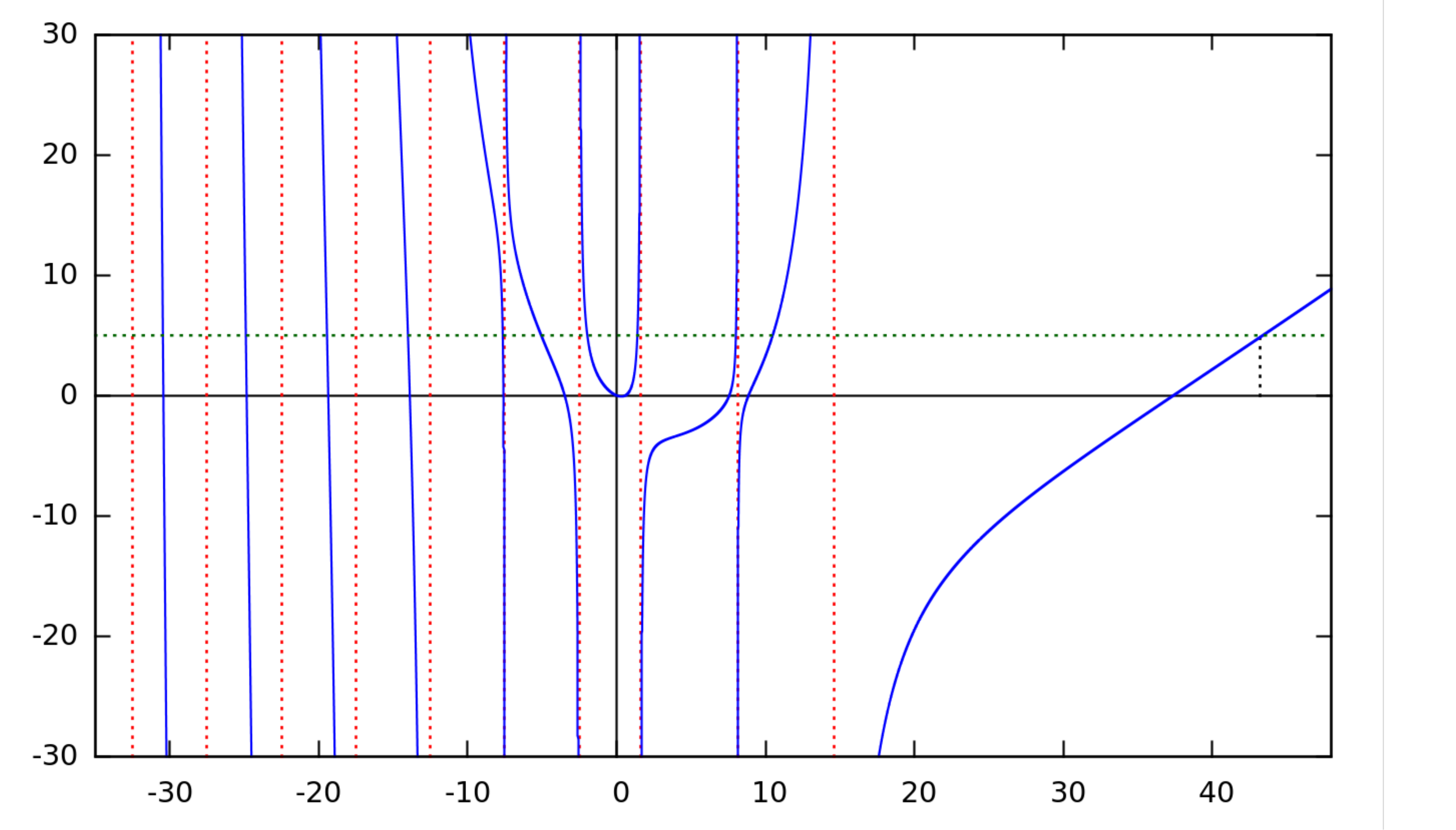
\caption{A plot of $\psi(z)$. The vertical dashed lines show the position of the poles while the horizonal dashed line shows the position of $q$.}\label{fig:psiz}
\end{figure}
\begin{theorem}\label{theo:atq}
The random variable $X_{\eq}$ has distribution
\begin{align*}
\p\left(X_{\eq} \in \d x\right) = q\alpha\delta_0(\d x) + q\left(\ind(x > 0)\sum_{\ell=1}^M\frac{e^{-\zeta_{\ell} x}}{\psi'(\zeta_{\ell})} - \ind(x < 0)\sum_{\ell=1}^{\hat{M}}\frac{e^{\hat{\zeta}_{\ell} x}}{\psi'(-\hat{\zeta}_{\ell})}\right)\d x,
\end{align*}
where $\alpha = \gamma$ when $\sigma^2$ and $a$ are both zero, and $\alpha = 0$ otherwise.
\end{theorem}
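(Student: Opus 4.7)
The plan is to perform partial fraction decomposition of the rational function $F(z;q)=q/(q-\psi(z))$ on $\c$ and to invert each resulting term by recognising it as the moment generating function (MGF) of a signed exponential density or an atom at the origin.

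First I would verify, by clearing the common denominator $\prod_{\ell=1}^{N}(\rho_\ell-z)\prod_{\ell=1}^{\hat N}(\hat\rho_\ell+z)$, that $F(z;q)$ extends to a rational function on $\c$ whose poles are exactly the simple points $\{\zeta_\ell\}_{1\leq\ell\leq M}$ and $\{-\hat\zeta_\ell\}_{1\leq\ell\leq\hat M}$ (the solutions of $\psi(z)=q$) and whose zeros are exactly $\{\rho_\ell\}$ and $\{-\hat\rho_\ell\}$ (the poles of $\psi$). A degree count then shows that as $|z|\to\infty$, $F(z;q)\to 0$ unless $\sigma=0$ and $a=0$, in which case $\psi(z)\to-\sum_\ell a_\ell-\sum_\ell\hat a_\ell$ and $F(z;q)\to q/(q+\sum_\ell a_\ell+\sum_\ell\hat a_\ell)=q\gamma$. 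This limit at infinity is precisely the constant $\alpha$ appearing in the theorem.

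Second, since all poles are simple, the partial fraction expansion gives
\begin{align*}
F(z;q) = q\alpha + \sum_{\ell=1}^{M}\frac{-q/\psi'(\zeta_\ell)}{z-\zeta_\ell} + \sum_{\ell=1}^{\hat M}\frac{-q/\psi'(-\hat\zeta_\ell)}{z+\hat\zeta_\ell},
\end{align*}
where the residues are computed from $F(z;q)=q/(q-\psi(z))$ at the simple zeros of $q-\psi(z)$. Third, inside the strip $-\hat\zeta_1<\re(z)<\zeta_1$ I would identify each summand as an MGF: for $\re(z)<\zeta_\ell$,
\begin{align*}
\frac{-q/\psi'(\zeta_\ell)}{z-\zeta_\ell}=\int_0^{\infty}e^{zx}\,\frac{q}{\psi'(\zeta_\ell)}e^{-\zeta_\ell x}\d x,
\end{align*}
and for $\re(z)>-\hat\zeta_\ell$,
\begin{align*}
\frac{-q/\psi'(-\hat\zeta_\ell)}{z+\hat\zeta_\ell}=\int_{-\infty}^{0}e^{zx}\left(-\frac{q}{\psi'(-\hat\zeta_\ell)}\right)e^{\hat\zeta_\ell x}\d x,
\end{align*}
while the constant $q\alpha$ is the MGF of a point mass of size $q\alpha$ at $0$. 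Adding these representations and invoking uniqueness of the MGF on a vertical strip inside its domain of analyticity yields the asserted distribution.

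The main obstacle is bookkeeping across the four regimes of $(\sigma,a)$ so that $\alpha$ and the pole counts $M,\hat M$ are handled correctly, and also verifying that the coefficients in the density are non-negative. Positivity of $-q/\psi'(-\hat\zeta_\ell)$ and of $q/\psi'(\zeta_\ell)$ follows from the interlacing property \eqref{eq:inter} together with the direction in which $\psi$ crosses the level $q$ on each subinterval determined by consecutive poles of $\psi$, which forces $\psi'(\zeta_\ell)>0$ and $\psi'(-\hat\zeta_\ell)<0$. Once this sign analysis is in place, the partial fraction inversion delivers a bona fide probability measure, and the fact that total mass equals one may be checked as a sanity condition by setting $z=0$ in $F(z;q)$.
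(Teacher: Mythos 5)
Your proposal follows essentially the same route as the paper: a partial-fraction decomposition of $F(z;q)$ at its simple poles $\{\zeta_\ell\}$, $\{-\hat\zeta_\ell\}$, determination of the constant term by sending $|z|\to\infty$, and termwise inversion of each simple-pole term as a one-sided exponential transform. The only quibble is the sentence claiming the limit at infinity equals ``the constant $\alpha$''; it equals $q\gamma=q\alpha$, and your displayed partial fraction already carries the correct factor $q\alpha$, so this is a harmless wording slip rather than a gap.
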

\begin{proof} A partial fraction decomposition of $F(z;q)$ has the form
\begin{align*}
F(z;q) = c + \sum_{\ell=1}^{M}\frac{\textnormal{Res}(F,\zeta_{\ell})}{z - \zeta_{\ell}} + \sum_{\ell=1}^{\hM}\frac{\textnormal{Res}(F,\hz_{\ell})}{z+\hz_{\ell}} = c - \sum_{\ell=1}^{M}\frac{q}{\psi'(\zeta_{\ell})(z - \zeta_{\ell})} - \sum_{\ell=1}^{\hM}\frac{q}{\psi'(-\hz_{\ell})(z+\hz_{\ell})}
\end{align*}
for some constant $c$. To determine $c$ we can take the limit $z \rightarrow +\infty$ on the left and right of the previous expression. It is easy to see that $c$ is non-zero only when both $\sigma$ and $a$ are identically zero and takes the value $q\gamma$ in this case. Now, inverting the Laplace transform gives the result.
% Maybe say something about why $\psi'(\zeta_n)$ and $\psi(\hz_n)$ are positive and negative respectively. Could probably make an argument using strict concavity. Also, credit Alexey et. al. for the result. 
\end{proof}
\begin{remark}
The reader may wish to compare Theorem \ref{theo:atq} with Theorem 2 (v) in \cite{KuzKyPa2011} which gives the analogous result for the meromorphic family of processes.\rdemo
\end{remark}
\noindent Let us conclude this section by defining two specific parameter sets which we will use for numeric examples throughout the remainder of the paper. These are taken from \cite{Jeannin} and are derived from stock market data by means of approximation of a VG and NIG process. We define:\\ \\
Parameter Set 1:
\begin{align*}
\{a_1,\,a_2,\,a_3,\,a_4,\,a_5,\,a_6,\,a_7\} &= \{0.0255,\,0.0255,\,0.0510,\,0.3060,\,0.6120,\,0.9690,\,3.1110\} \\
\{\rho_1,\,\rho_2,\,\rho_3,\,\rho_4,\,\rho_5,\,\rho_6,\,\rho_7\} &= \{5,\,10,\,15,\,25,\,30,\,60,\,80\}\\
\{\hat{a}_1,\,\hat{a}_2,\,\hat{a}_3,\,\hat{a}_4,\,\hat{a}_5,\,\hat{a}_6,\,\hat{a}_7\} &= \{0.3200,\,0.1920,\,0.7040,\,0.5120,\,0.6400,\,2.5600,\,1.4720\} \\
\{\hat{\rho}_1,\,\hat{\rho}_2,\,\hat{\rho}_3,\,\hat{\rho}_4,\,\hat{\rho}_5,\,\hat{\rho}_6,\,\hat{\rho}_7\} &= \{5,\,10,\,15,\,25,\,30,\,60,\,80\},
\end{align*}
Parameter Set 2:
\begin{align*}
\{a_1,\,a_2,\,a_3,\,a_4,\,a_5,\,a_6,\,a_7\} &= \{0.0066,\,0.0154,\,0.4620,\,0.1760,\,0.5720,\,0.4180,\,0.5500\} \\
\{\rho_1,\,\rho_2,\,\rho_3,\,\rho_4,\,\rho_5,\,\rho_6,\,\rho_7\} &= \{5,\,10,\,15,\,25,\,30,\,60,\,80\}\\
\{\hat{a}_1,\,\hat{a}_2,\,\hat{a}_3,\,\hat{a}_4,\,\hat{a}_5,\,\hat{a}_6,\,\hat{a}_7\} &= \{0.0300,\,0.2700,\,0.9300,\,0.9300,\,0.3000,\,0.2400,\,0.3000\} \\
\{\hat{\rho}_1,\,\hat{\rho}_2,\,\hat{\rho}_3,\,\hat{\rho}_4,\,\hat{\rho}_5,\,\hat{\rho}_6,\,\hat{\rho}_7\} &= \{2,\,5,\,10,\,30,\,50,\,80,\,100\}.
\end{align*}
\subsection{Solutions of $\psi(z) = q$}
\noindent In this section, using the tools developed in Section \ref{sect:working}, we derive series expansions of  solutions of $\psi(z)=q$, for $q \in \c$ such that $\vert q \vert$ is large. We remark that the numbering, order, and multiplicity of the solutions $\{\zeta_{\ell}\}_{1\leq \ell \leq M}$ and $\{-\hat{\zeta}_{\ell}\}_{1\leq \ell \leq \hat{M}}$, which we defined for real $q > 0$, in particular the interlacing property \eqref{eq:inter}, lose their meaning if we allow $q$ to be a complex number. We show here, however, that for each $\zeta_{\ell}$, $1 \leq \ell \leq M$ (resp. $\hat{\zeta}_\ell$, $1 \leq \ell \leq \hat{M}$) and $\vert q \vert$ large enough, there exists a solution $z_{\ell}$ (resp. $\hat{z}_{\ell}$) of $\psi(z) = q$, which is an analytic function on  $\c_R^+$ and corresponds to exactly one element of $\{\zeta_{\ell}\}_{\ell=1}^M \cup \{-\hat{\zeta}_{\ell}\}_{\ell=1}^{\hat{M}}$, namely $\zeta_{\ell}$ (resp. $\hat{\zeta}_{\ell}$). That is, each $\zeta_{\ell}$ (resp. $\hat{\zeta}_{\ell}$) may be extended to an analytic function, which also solves $\psi(z)=q$,  provided $q \in \c^+$ is large enough. We will denote this extended solution using the same notation, i.e. $\zeta_{\ell}$ (resp. $\hat{\zeta}_{\ell}$).\\ \\
\noindent  To begin, let us define the functions 
\begin{align*}
g(z;{\ell}) := \psi(z+\rho_{\ell}),\quad 1 \leq {\ell} \leq N,\quad \hat{g}(z;{\ell}):=\psi(z-\hat{\rho}_{\ell}),\quad 1 \leq {\ell} \leq \hat{N},\quad \text{ and } \quad h(z) := \psi\left(\frac{1}{z}\right).
\end{align*}
Our first goal is to derive the Laurent series expansions of these functions at $0$. Starting with $h(z)$, and using the summation formula for geometric series, we obtain
\begin{align}\label{eq:psi_coef}
h(z) &= \frac{\sigma^2}{2}z^{-2} + az^{-1} -\left( \sum_{\ell=1}^{N}\frac{a_{\ell}}{1-\rho_{\ell}z} + \sum_{\ell=1}^{\hat{N}}\frac{\hat{a}_{\ell}}{1 + \hat{\rho}_{\ell}z}\right)\nonumber\\
&= \frac{\sigma^2}{2}z^{-2} + az^{-1} + \sum_{n=0}^{\infty}\eta_nz^n,
\end{align}
\begin{align}\label{eq:eta}
\eta_n :=-\left( \sum_{\ell=1}^{N}a_{\ell}(\rho_{\ell})^n +   (-1)^{n}\sum_{\ell=1}^{\hat{N}}\hat{a}_{\ell}(\hat{\rho}_{\ell})^n\right),
\end{align}
which is valid on the deleted neighbourhood $0 < \vert z \vert < \min\left\{\frac{1}{\rho_N},\frac{1}{\hat{\rho}_{\hat{N}}}\right\}$. Going forward, we will write $\{h_n\}_{n\geq -2}$ to refer to the coefficients of the expansion of $h(z)$. Continuing, again via the technique of geometric series, we have
\begin{align}\label{eq:g_series}
g(z;{\ell}) &= \frac{\sigma^2}{2}(z+\rho_{\ell})^2 + a(z + \rho_{\ell}) + (z + \rho_{\ell})\left(\sum_{n=1}^{N}\frac{a_n}{\rho_n - (z + \rho_{\ell})} - \sum_{n=1}^{\hat{N}}\frac{\hat{a}_n}{\hat{\rho}_n + (z + \rho_{\ell})}\right) \nonumber\\
&= -a_{\ell}\rho_{\ell}z^{-1} + \left(\frac{\sigma^2\rho_{\ell}^2}{2} + a\rho_{\ell} + \omega_0\right) + \left(\sigma^2 \rho_{\ell} + a + \omega_1\right)z\nonumber \\ &\hphantom{= -a_{\ell}\rho_{\ell}z^{-1} }+ \left(\frac{\sigma^2}{2} + \omega_2\right)z^2 + \sum_{n=3}^{\infty}\omega_n z^{n}, 
\end{align}
\begin{gather}\label{eq:omega}
\omega_n := \omega_{{\ell},n} := \begin{dcases}
\sum_{\substack{i=1 \\ i \neq {\ell}}}^{N}\frac{a_i\rho_{\ell}}{\rho_i - \rho_{\ell}} - \sum_{i=1}^{\hat{N}}\frac{\hat{a}_i\rho_{\ell}}{\hat\rho_i + \rho_{\ell}} - a_{\ell} & n=0\\
 \sum_{\substack{i=1 \\ i \neq {\ell}}}^{N}\frac{a_i\rho_i}{(\rho_i - \rho_{\ell})^{n+1}} +  (-1)^{n}\sum_{i=1}^{\hat{N}}\frac{\hat{a}_i\hat{\rho}_i}{(\hat{\rho}_i + \rho_{\ell})^{n+1}}& n \in \mathbb{N}
\end{dcases},
\end{gather}
which is valid for $ 0 < \vert z \vert < \min\left\{ \vert \rho_{{\ell}-1} - \rho_{\ell} \vert,\,\vert \rho_{{\ell}+1} - \rho_{\ell} \vert\right\}$.
\noindent In an analogous fashion we obtain
\begin{align}\label{eq:gh_series}
\hat{g}(z;\ell) &= \frac{\sigma^2}{2}(z - \hat{\rho}_{\ell})^2 + a(z - \hat{\rho}_{\ell}) + (z - \hat{\rho}_{\ell})\left(\sum_{n=1}^{N}\frac{a_n}{\rho_n - (z - \hat{\rho}_{\ell})} - \sum_{n=1}^{\hat{N}}\frac{\hat{a}_n}{\hat{\rho}_n + (z - \hat{\rho}_{\ell})}\right) \nonumber\\
&= \hat{a}_{\ell}\hat{\rho}_{\ell}z^{-1} + \left(\frac{\sigma^2\hat{\rho}_{\ell}^2}{2} - a\hat{\rho}_{\ell} + \hat{\omega}_0\right) + \left(a - \sigma^2 \hat{\rho}_{\ell}  + \hat{\omega}_1\right)z \nonumber\\ &
\hphantom{=\hat{a}_{\ell}\hat{\rho}_{\ell}z^{-1}} + \left(\frac{\sigma^2}{2} + \hat{\omega}_2\right)z^2 + \sum_{i=3}^{\infty}\hat{\omega}_n z^{n}, 
\end{align}
\begin{gather}\label{eg:omega_hat}
\hat{\omega}_n := \hat{\omega}_{\ell,n} := \begin{dcases}
\sum_{\substack{i=1 \\ i \neq {\ell}}}^{\hat{N}}\frac{\hat{a}_i\hat{\rho}_{\ell}}{\hat{\rho}_i - \hat{\rho}_{\ell}} - \sum_{i=1}^{N}\frac{a_i\hat{\rho}_{\ell}}{\rho_i + \hat{\rho}_{\ell}} - \hat{a}_{\ell} & n=0\\
 \sum_{i=1}^{N}\frac{a_i\rho_i}{(\rho_i + \hat{\rho}_{\ell})^{n+1}}+  (-1)^{n} \sum_{\substack{i=1 \\ i \neq {\ell}}}^{\hat{N}}\frac{\hat{a}_i\hat{\rho}_i}{(\hat{\rho}_i - \hat{\rho}_{\ell})^{n+1}} & n \in \mathbb{N},
\end{dcases}
\end{gather}
which is valid for $0 < \vert z \vert <  \min\{\vert \hat{\rho}_{{\ell}-1} - \hat{\rho}_{\ell}\vert,\vert\hat{\rho}_{{\ell}+1} - \hat{\rho}_{\ell} \vert\}$. Going forward we will write $\{g_{\ell,n}\}_{n\geq-1}$ or just $\{g_n\}_{n\geq -1}$ for the coefficients of the expansion of $g(z;\ell)$ and we will adopt analogous notation for $\hat{g}(z;\ell)$. Further, we remind the reader that $_{k}\bar{f}_n$ denotes the first $n+k+1$ elements of the collection $\{f_n\}_{n\geq k}$, e.g. the first $n+k+1$ coefficients of a series, and that $s_n,\,m_n,\,r_n,\,p_n$ and $l_n$ denote the coefficients of series derived via summation, multiplication, reciprocation, composition with $c^z$, and inversion respectively. See respectively \eqref{def:-kfn}, \eqref{def:sn}, \eqref{def:mn}, \eqref{def:rn}, \eqref{def:pn}, \eqref{def:ln} in Section \ref{sect:working}. 
\begin{theorem}\label{theo:extra_roots}\ \\
\begin{enumerate}[(i)]
\item Assume $\sigma > 0$. Then there exists $R > 0$ such that for $q \in \c^+_{R}$
\begin{align}\label{eq:zeta_M}
\zeta_M = \sum_{n=-1}^{\infty}\frac{b_n}{q^{n/2}}, \quad\text{ and }\quad \hat{\zeta}_{\hat{M}} = -\sum_{n=-1}^{\infty}(-1)^{n}\frac{b_n}{q^{n/2}},
\end{align}
where
\begin{gather*}
b_n = r_n(_1\bar{c}_{n+2}),\quad c_i = l_i(_2\bar{d}_{i+1}),\quad\text{ and }\quad d_j = r_j(_{-2}\bar{h}_{j-4}),
\end{gather*}
and the series in \eqref{eq:zeta_M} converge on $\c_{R}$.
\item If $\sigma = 0$ and $a > 0$ (respectively $a < 0$) then there exists $R > 0$ such that for $q \in \c_{
R}$,
\begin{align*}
\zeta_M = \sum_{n=-1}^{\infty}\frac{b_n}{q^{n}}\quad\left(\text{resp. } \hat{\zeta}_{\hat{M}} = -\sum_{n=-1}^{\infty}\frac{b_n}{q^n}\right),
\end{align*}
where
\begin{gather*}
b_n = r_n(_1\bar{c}_{n+2}),\quad c_i = l_i(_1\bar{d}_{i}),\quad\text{ and }\quad d_j = r_j(_{-1}\bar{h}_{j-2}),
\end{gather*}
and the series converges on $\c_{R}$.
\end{enumerate}
\end{theorem}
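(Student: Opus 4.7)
The plan is to reduce $\psi(z)=q$ for large $|q|$ to a fixed-point problem near the origin by the substitution $u = 1/z$, and then to solve $h(u) = q$ by three successive applications of the series machinery of Section \ref{sect:working}: reciprocate $h$, invert via Lagrange, and reciprocate once more to recover $\zeta_M = 1/u$. The recipe for $b_n$ in the statement encodes exactly this chain.

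The Laurent expansion \eqref{eq:psi_coef} of $h$ at $0$ has leading term $(\sigma^2/2)u^{-2}$ in case (i), so $h$ has a pole of order $2$ with $h_{-2}=\sigma^2/2 \neq 0$; in case (ii) the leading term is $au^{-1}$, a simple pole with $h_{-1}=a \neq 0$. Setting $K=2$ in case (i) and $K=1$ in case (ii), formula \eqref{eq:recip} (applied with $k=-K$) gives $1/h(u) = \sum_{j \geq K} d_j u^j$ on some $B_0(R_1)$, with $d_j = r_j({}_{-2}\bar{h}_{j-4})$ or $r_j({}_{-1}\bar{h}_{j-2})$ respectively, and with $d_K \neq 0$. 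Since $d_K\neq 0$, the Lagrange Inversion Theorem in the form \eqref{eq:invlagm}--\eqref{def:ln}, applied with $k=K$, produces on some $B(R_2)$ a $K$-valued inverse $u = \sum_{i \geq 1} c_i\, w_m^{i/K}$ with $c_i = l_i({}_K\bar{d}_{i+K-1})$ and $c_1 = 1/\sqrt[K]{d_K} \neq 0$. Substituting $w = 1/q$ and reciprocating once more via \eqref{eq:recip} (this time with $k=1$) then yields
\[
\zeta_M \;=\; \frac{1}{u} \;=\; \sum_{n=-1}^{\infty} \frac{b_n}{q^{n/K}}, \qquad b_n = r_n({}_1\bar{c}_{n+2}),
\]
matching the statement in both cases. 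The convergence on some $\c_R$ follows by tracking radii through each step: the Lagrange series converges on $B(R_2)$ in $w=1/q$, corresponding to $|q|>1/R_2$, and the final reciprocation shrinks this effective radius further to some $R_2'$, giving $R = 1/R_2'$.

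It remains to identify the correct branches. In case (i), the two Lagrange branches differ by $w^{1/2} \mapsto -w^{1/2}$; for real $q>0$ the principal branch gives the leading term $b_{-1}\sqrt{q} = \sqrt{2q}/\sigma > 0$, which for $|q|$ large enough exceeds $\rho_N$ and must therefore be the unique extra positive root $\zeta_M$, while the other branch contributes $\sum(-1)^n b_n q^{-n/2}$ and corresponds to the extra negative root $-\hat{\zeta}_{\hat M}$, producing the claimed sign structure. In case (ii) the inverse is single-valued, and the sign of the leading term $b_{-1}q = q/a$ selects the unique extra real root: it is $\zeta_M$ when $a>0$ and $-\hat\zeta_{\hat M}$ when $a<0$, explaining the overall minus sign on $\hat\zeta_{\hat M}$ in the statement. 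Once the correspondence with the real root is established for real $q>0$, the interlacing property \eqref{eq:inter}, uniqueness, and the fact that the series defines an analytic function on $\c_R^+$ together identify this function with the analytic continuation of $\zeta_M$ (resp.\ $\hat\zeta_{\hat M}$). The main obstacle is purely bookkeeping: keeping the starting indices and shifted argument tuples in the nested applications of \eqref{def:rn} and \eqref{def:ln} consistent at each stage. No new analytic content is required beyond what Section \ref{sect:working} already provides.
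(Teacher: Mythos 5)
Your proposal is correct and follows essentially the same route as the paper: the substitution $z\mapsto 1/u$, $q\mapsto 1/w$, reciprocation of $h$, Lagrange inversion with the appropriate branch, a final reciprocation, and then identification of the branches via the leading-order behavior for real $q>0$ combined with the interlacing property \eqref{eq:inter}. You additionally spell out the sign bookkeeping ($\tilde c_i = (-1)^i c_i$ implying $\tilde b_n = (-1)^n b_n$) and the case $\sigma=0$, which the paper leaves to the reader, but the argument is the same.
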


\begin{proof}
We prove only the case $\sigma> 0$, the other cases can be proven in essentially the same way. We want to solve $\psi(z) = q$ for large $\vert q \vert$. Setting $z = \frac{1}{v}$ and $q = \frac{1}{w}$ we see this is equivalent to solving $w = \frac{1}{h(v)}$ for small $\vert w \vert$. We first determine the expansion of the reciprocal of $h(v)$. This yields, 
\begin{align*}
w = \sum_{j=2}^{\infty}r_j(_{-2}\bar{h}_{j-4})v^{j} = \sum_{j=2}^{\infty}d_jv^{j},
\end{align*}
which is valid near zero. Now we invert the series; choosing the principal branch of the square root function we get as one solution
\begin{align*}
v = \sum_{i=1}^{\infty}l_i(_2\bar{d}_{i+1})w^{i/2} = \sum_{i=1}^{\infty}c_iw^{i/2},
\end{align*}
which is again valid near zero. Reciprocating the resulting series once more gives
\begin{align}\label{eq:z_zeta}
z =  \sum_{n=-1}^{\infty}r_n(_1\bar{c}_{n+2})w^{n/2} = \sum_{n=-1}^{\infty}b_nw^{n/2} = \sum_{n=-1}^{\infty}b_n\left(\frac{1}{q}\right)^{n/2}.
\end{align}
Since the series in $w$ converges on $B_0(1/R^{1/2})$ for some $R >0 $, the right-hand side of \eqref{eq:z_zeta} converges on $\c_{R}$; further, we have
\begin{align*}
\sum_{n=-1}^{\infty}b_n\left(\frac{1}{q}\right)^{n/2} = \sum_{n=-1}^{\infty}\frac{b_n}{q^{n/2}}
\end{align*}
on $\c^+_{R}$ and the right-hand side of this expression clearly also converges on $\c_{R}$.
Thus we have $z = (2q/\sigma^2)^{1/2} + b_0 + O(q^{-1/2})$ so that $z\rightarrow +\infty$ as $q \rightarrow +\infty$. We know from the interlacing property \eqref{eq:inter} that all other solutions of $\psi(z) = q$ take values strictly less than $\rho_{N}$ for real $q$; therefore $z$ must correspond to $\zeta_{M}$. Similar reasoning and choosing the other branch of the square root yields the result for $\hat{\zeta}_{\hat{M}}$.
\end{proof}
\noindent The following corollaries follow more or less directly from Theorem \ref{theo:extra_roots} and the the series manipulations discussed in Section \ref{sect:working}; the proofs are left to the reader. 
\begin{corollary}\label{cor:with_sig}\ \\
Assume $\sigma > 0$ and that $\{b_n\}_{n\geq-1}$ are as in Theorem \ref{theo:extra_roots} (i).
\begin{enumerate}[(i)]
\item There exits $R > 0$ such that
\begin{align*}
\frac{1}{\zeta_M} = \sum_{i=1}^{\infty}\frac{c_i}{q^{i/2}},\quad\text{ and }\quad \frac{1}{\hat{\zeta}_{\hat{M}}} = -\sum_{i=1}^{\infty}(-1)^i\frac{c_i}{q^{i/2}},\quad q \in \c^+_{R},
\end{align*}
where the $\{c_i\}_{i\geq 1}$ are as in Theorem \ref{theo:extra_roots} (i), and the series converge on $\c_{R}$.
\item There exits $R > 0$ such that for $c_0 := b_0 - 1$, $c_n := b_n$, $n\neq 0$ and $d_n := (-1)^{n}c_n$, $n\geq-1$, we have
\begin{align*}
\frac{1}{\zeta_M - 1} = \sum_{n=1}^{\infty}\frac{r_n(_{-1}\bar{c}_{n-2})}{q^{n/2}},\quad\text{ and }\quad \frac{1}{\hat{\zeta_M} + 1} =-\sum_{n=1}^{\infty}\frac{r_n(_{-1}\bar{d}_{n-2})}{q^{n/2}},\quad q \in \c^+_{R},
\end{align*}
and the series converge on $\c_{R}$.
\item There exists $R > 0$ such that for $c_n := (-1)^{n}b_n$, $n\geq-1$, we have
\begin{align*}
k^{-\zeta_M} = k^{-(2q/\sigma^2)^{1/2}}\sum_{n=0}^{\infty}\frac{p_n(k,-\bar{b}_n)}{q^{n/2}},\quad \text{ and }\quad k^{\hat{\zeta}_{\hat{M}}} = k^{(2q/\sigma^2)^{1/2}}\sum_{n=0}^{\infty}\frac{p_n(k,-\bar{c}_n)}{q^{n/2}},\quad q \in \c^+_{R},
\end{align*}
and the series converge on $\c_{R}$.
\item The exists $R$ such that we have 
\begin{align*}
\zeta_M' = \sum_{n=1}^{\infty}\left(\frac{2-n}{2}\right)\frac{b_{n-2}}{q^{n/2}},\quad\text{ and }\quad -\hat{\zeta}_{\hat{M}}' = \sum_{n=1}^{\infty}(-1)^n\left(\frac{2-n}{2}\right)\frac{b_{n-2}}{q^{n/2}},\quad q \in \c^+_{R},
\end{align*}
and the series converge on $\c_{R}$
\end{enumerate}
\end{corollary}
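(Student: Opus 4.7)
The strategy is to treat each of (i)--(iv) as a mechanical application of one of the series manipulation rules from Section \ref{sect:working} to the convergent Laurent expansion
\[
\zeta_M = \sum_{n=-1}^{\infty} \frac{b_n}{q^{n/2}}
\]
supplied by Theorem \ref{theo:extra_roots} (i). Setting $u := q^{-1/2}$ (principal branch) turns this into an honest Laurent series $b_{-1} u^{-1} + b_0 + b_1 u + \cdots$ on a deleted disc $B_0(R^{-1/2})$, whose leading coefficient $b_{-1} = (2/\sigma^2)^{1/2}$ is non-zero (as recorded near the end of the proof of Theorem \ref{theo:extra_roots}). The analogous expansion of $\hat\zeta_{\hat M}$ differs only in the choice of branch for $w^{1/2}$, which amounts to replacing $b_n$ by $-(-1)^n b_n$; all subsequent computations proceed identically, so I will usually describe only the $\zeta_M$ case.

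For (i), the intermediate series $v = \sum_{i \geq 1} c_i w^{i/2}$ constructed in the proof of Theorem \ref{theo:extra_roots} is by construction the expansion of $1/z$ in powers of $w^{1/2} = q^{-1/2}$; hence $1/\zeta_M = \sum_{i=1}^{\infty} c_i q^{-i/2}$ is obtained at no further cost. The formula for $1/\hat\zeta_{\hat M}$ follows by tracking the other branch through the inversion and reciprocation steps, which replaces $c_i$ by $(-1)^i c_i$ and introduces an overall minus sign. Part (ii) is the same application of the reciprocal formula \eqref{eq:recip} with $k = -1$, applied this time to $\zeta_M - 1$ (respectively $\hat\zeta_{\hat M} + 1$): adding the constant perturbs only the $u^0$ coefficient, leaves $b_{-1} \neq 0$ untouched, and produces precisely the tuples $_{-1}\bar c_{n-2}$ and $_{-1}\bar d_{n-2}$ prescribed in the statement.

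For (iii), isolate the growing term by writing $\zeta_M = b_{-1}\, q^{1/2} + \tilde f(q^{-1/2})$ with $\tilde f(u) := \sum_{n=0}^{\infty} b_n u^n$ convergent in a neighbourhood of $0$. Then
\[
k^{-\zeta_M} \;=\; k^{-(2q/\sigma^2)^{1/2}} \cdot k^{-\tilde f(q^{-1/2})},
\]
and the power-composition formula \eqref{eq:power} applied to the second factor (with argument $-\tilde f$, whose coefficient sequence is $-\bar b_n$) yields $\sum_{n \geq 0} p_n(k, -\bar b_n)\, q^{-n/2}$. The identical computation with $c_n = (-1)^n b_n$ handles $k^{\hat\zeta_{\hat M}}$; the leading term is still $k^{(2q/\sigma^2)^{1/2}}$, because $-(-1)^{-1} b_{-1} = b_{-1}$. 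For (iv), absolute convergence of the series for $\zeta_M$ on $\c_R$ makes it analytic there, so term-by-term differentiation is legitimate; applying $\frac{d}{dq} q^{-n/2} = -\frac{n}{2}\, q^{-(n+2)/2}$ and reindexing $m := n+2$ gives the claimed expression, and the sign-flipped version of the same computation handles $-\hat\zeta'_{\hat M}$.

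The only point that requires any attention throughout is verifying that the radii of convergence never collapse to zero. This is guaranteed at each step by the non-vanishing of $b_{-1}$ (hence of the leading coefficients of $\zeta_M$, $\zeta_M - 1$, $\hat\zeta_{\hat M}$ and $\hat\zeta_{\hat M} + 1$), together with the general facts recalled in Section \ref{sect:working}: reciprocation, composition with $k^{(\cdot)}$ and term-by-term differentiation of a Laurent series with non-zero leading coefficient all return series convergent on some, possibly smaller, deleted neighbourhood of $0$, translating back to convergence on $\c_R$ for a suitably enlarged $R$. There is no deeper obstacle, which is what makes it reasonable to leave the formal verifications to the reader.
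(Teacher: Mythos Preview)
Your proposal is correct and follows exactly the approach the paper intends: the paper does not give a proof at all, stating only that the corollary ``follow[s] more or less directly from Theorem \ref{theo:extra_roots} and the series manipulations discussed in Section \ref{sect:working}; the proofs are left to the reader.'' You have carried out precisely those manipulations, including the nice observation in (i) that the intermediate series $v = \sum_{i\ge1} c_i w^{i/2}$ from the proof of Theorem \ref{theo:extra_roots} already \emph{is} the expansion of $1/\zeta_M$.
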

%\begin{proof}
%Identical to Theorem \ref{theo:extra_roots}.
%\end{proof}
\begin{corollary}\label{cor:without_sig}\ \\
Assume $\sigma = 0$, $a > 0$ (resp. $a < 0$), and that $\{b_n\}_{n\geq-1}$ are as in Theorem \ref{theo:extra_roots} (ii).
\begin{enumerate}[(i)]
\item There exists $R > 0$ such that
\begin{align*}
\frac{1}{\zeta_M} = \sum_{i=1}^{\infty}\frac{c_i}{q^i}\quad \left(\text{resp. }\frac{1}{\hat{\zeta}_{\hat{M}}} = -\sum_{i=1}^{\infty}\frac{c_i}{q^i}\right),\quad q \in \c_{R},
\end{align*}
where the $\{c_i\}_{i\geq 1}$ are as in Theorem \ref{theo:extra_roots} (ii) and the series converges on $\c_{R}$.
\item  There exists $R > 0$ such that
\begin{align*}
\frac{1}{\zeta_M - 1} = \sum_{n=1}^{\infty}\frac{r_n(_{-1}\bar{c}_{n-2})}{q^n}\quad\left(\text{resp. } \frac{1}{\hat{\zeta}_{\hat{M}} + 1} = -\sum_{n=1}^{\infty}\frac{r_n(_{-1}\bar{c}_{n-2})}{q^n}\right),\quad q \in \c_{R},
\end{align*}
where $c_0 := b_0-1$, $c_n := b_n$ for $n\neq 0$, and the series converges on $\c_{R}.$
\item There exists $R > 0$ such that
\begin{align*}
k^{-\zeta_M} = k^{-q/a}\sum_{n=0}^{\infty}\frac{p_n(k,-\bar{b}_n)}{q^n}\quad \left(\text{ resp. } k^{\hat{\zeta}_{\hat{M}}} = k^{-q/a}\sum_{n=0}^{\infty}\frac{p_n(k,-\bar{b}_n)}{q^n}\right),\quad q \in \c_{R},
\end{align*}
where the series converges on $\c_{R}$.
\item There exists $R > 0$ such that
\begin{align*}
\zeta_M' = \sum_{n=0}^{\infty}\left(1-n\right)\frac{b_{n-1}}{q^n}\quad\left(\text{resp. }-\hat{\zeta}_{\hat{M}}' = \sum_{n=0}^{\infty}\left(1-n\right)\frac{b_{n-1}}{q^n}\right),\quad q \in \c_{R},
\end{align*}
where the series converges on $\c_{R}$.
\end{enumerate}
\end{corollary}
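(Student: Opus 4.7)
My plan is to derive each of parts (i)--(iv) directly from the Laurent series for $\zeta_M$ (resp.\ $\hat{\zeta}_{\hat{M}}$) supplied by Theorem \ref{theo:extra_roots} (ii), by applying the manipulation formulas of Section \ref{sect:working} to this series viewed in the variable $w = 1/q$ (so that the leading index is $k = -1$). Convergence of each output series on some exterior region $\c_R$ is automatic: each manipulation preserves convergence on a punctured ball $B_0(1/R)$ in the $w$-variable, and $t^{-1}[B_0(1/R)] = \c_R$ under $t(q) = 1/q$, exactly as discussed at the end of Section \ref{sect:working}.

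For (i), no real work is required, because in the proof of Theorem \ref{theo:extra_roots} (ii) we already wrote $\zeta_M = 1/v$ with $v = \sum_{i \geq 1} c_i w^i$, so $1/\zeta_M = v$ gives the claim; equivalently one may apply \eqref{eq:recip} with $k = -1$ to $\sum_{n \geq -1} b_n w^n$ and match by uniqueness of the reciprocal series. For (ii), I would form the shifted sequence $\{c_n\}_{n \geq -1}$ defined in the statement, observe that $\zeta_M - 1 = \sum_{n \geq -1} c_n w^n$ with $c_{-1} = b_{-1} \neq 0$, and apply \eqref{eq:recip} once more with $k = -1$ to read off the claimed coefficients $r_n(_{-1}\bar{c}_{n-2})$. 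For (iv) I would differentiate $\zeta_M = \sum_{n \geq -1} b_n q^{-n}$ term by term on $\c_R$ (justified because a Laurent series converges uniformly on compacts of its domain of convergence) and then re-index via $m = n+1$; both displays follow at once, with the $m = 1$ term vanishing via the factor $1 - m$.

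The only step with any content is (iii), because formula \eqref{eq:power} requires a nonnegative leading index while $\zeta_M$ carries a genuine $b_{-1}q$ term. I would handle this by peeling off the singular part,
\[
k^{-\zeta_M} = k^{-b_{-1} q}\cdot k^{-\sum_{n \geq 0} b_n q^{-n}},
\]
then identify $b_{-1} = 1/a$ by tracing the Lagrange step in the proof of Theorem \ref{theo:extra_roots} (ii): with $k = 1$ one has $d_1 = r_1(_{-1}\bar{h}_{-1}) = 1/a$ and $l_1(_1\bar{d}_1) = 1/d_1 = a$, so $c_1 = a$ and hence $b_{-1} = 1/c_1 = 1/a$. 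Applying \eqref{eq:power} to the second factor, whose exponent is now a Taylor series in $w$ with leading index $k = 0$, produces the coefficients $p_n(k,-\bar{b}_n)$ together with the prefactor $k^{-q/a}$. The $\hat{\zeta}_{\hat{M}}$ case is handled identically, since $\hat{\zeta}_{\hat{M}} = -\sum_{n \geq -1} b_n q^{-n}$ yields the same factorization and hence the same $-\bar{b}_n$ in \eqref{eq:power}. The main ``obstacle'' throughout is really just bookkeeping: keeping indices aligned with the conventions of \eqref{eq:recip}, \eqref{def:pn}, and \eqref{def:ln}, and verifying the identification $b_{-1} = 1/a$ that drives the $k^{-q/a}$ prefactor.
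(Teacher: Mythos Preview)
Your proposal is correct and is exactly the argument the paper intends: the paper does not give a separate proof of this corollary but states that it ``follows more or less directly from Theorem \ref{theo:extra_roots} and the series manipulations discussed in Section \ref{sect:working}; the proofs are left to the reader.'' Your treatment of each part---recovering (i) from the intermediate step $v=\sum c_i w^i$ in the proof of Theorem \ref{theo:extra_roots}, applying \eqref{eq:recip} with $k=-1$ for (ii), peeling off the $b_{-1}q$ term and identifying $b_{-1}=1/a$ before applying \eqref{eq:power} for (iii), and termwise differentiation for (iv)---is precisely the intended route, with the bookkeeping carried out correctly.
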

\noindent Now, we turn our attention to the remaining solutions $\{\zeta_{\ell}\}_{1 \leq \ell \leq N}$, and $\{\hat{\zeta}_{\ell}\}_{1 \leq \ell \leq \hat{N}}$. It will be convenient to write the coefficients of the Laurent series representations of $g(z;\ell)$ and $\hat{g}(z;\ell)$ as simply $\{g_n\}_{n\geq-1}$ and $\{\hat{g}_n\}_{n\geq -1}$ without reference to the index $\ell$. However, the reader should note that when describing a series expansion of $\zeta_{\ell}$ in terms of $\{g_n\}_{n\geq-1}$ we always intend the coefficients of $g(z;\ell)$ and similarly for $\hat{\zeta}_{\ell}$.
\begin{theorem}\label{theo:normal_roots}
For each $1 \leq \ell \leq N$ (resp. $1 \leq \ell \leq \hat{N}$), there exists $R > 0$ such that
\begin{align*}
\zeta_{\ell} = \sum_{n=0}^{\infty}\frac{b_n}{q^{n}} \quad \left(\text{ resp. }\hat{\zeta}_{\ell} = -\sum_{n=0}^{\infty}\frac{\hat{b}_n}{q^{n}}\right),\quad q \in \c_{R},
\end{align*}
where $b_0 = \rho_{\ell}$ (resp. $\hat{b}_0 = -\hat{\rho}_{\ell}$),\quad 
\begin{align*}
b_n = l_{n}(_1\bar{c}_n),\quad \text{ and } \quad c_i = r_i(_{-1}\bar{g}_{i-2})\quad \left(\text{ resp. }\hat{b}_n = l_{n}\left(_1\bar{\hat{c}}_n\right),\quad\text{ and }\quad \hat{c}_i = r_i\left(_{-1}\bar{\hat{g}}_{i-2}\right)\right),
\end{align*}
and the series converges on $\c_{R}$.
\end{theorem}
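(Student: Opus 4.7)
The plan is to reduce the problem to the Lagrange inversion formula by locally expanding $\psi$ around the pole $\rho_\ell$ (and, analogously, around $-\hat{\rho}_\ell$). I will describe the $\zeta_\ell$ case in detail; the argument for $\hat{\zeta}_\ell$ is identical after replacing $\rho_\ell$ by $-\hat{\rho}_\ell$ and $g(v;\ell)$ by $\hat{g}(v;\ell)$.

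First I make the substitution $z = \rho_\ell + v$, so that $\psi(z) = q$ becomes $g(v;\ell) = q$, equivalently $1/g(v;\ell) = 1/q =: w$. By \eqref{eq:g_series} the function $g(v;\ell)$ has a simple pole at $v = 0$ with residue $-a_\ell \rho_\ell \neq 0$, so its Laurent expansion $\{g_n\}_{n \geq -1}$ begins at index $-1$. Applying the reciprocal formula \eqref{eq:recip}--\eqref{def:rn} with $k = -1$ gives
\begin{align*}
\frac{1}{g(v;\ell)} = \sum_{i=1}^{\infty} c_i v^i, \qquad c_i = r_i({}_{-1}\bar{g}_{i-2}),
\end{align*}
convergent on a punctured disk $B_0(R')$, with $c_1 = 1/g_{-1} = -1/(a_\ell \rho_\ell) \neq 0$. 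Since this series begins at index $k=1$, the Lagrange Inversion Theorem via \eqref{eq:invlagm}--\eqref{def:ln} with $k=1$ yields a single-valued inverse
\begin{align*}
v = \sum_{n=1}^{\infty} b_n w^n, \qquad b_n = l_n({}_1\bar{c}_n),
\end{align*}
convergent on some $B(1/R)$, $R > 0$. Substituting $w = 1/q$ and $\zeta_\ell = \rho_\ell + v$, and defining $b_0 := \rho_\ell$, produces the stated series, convergent on $\c_R$.

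It then remains to argue that this analytic branch actually equals the real root $\zeta_\ell$ singled out by the interlacing property \eqref{eq:inter}. For real $q > 0$ the term $z \cdot a_\ell/(\rho_\ell - z)$ in $\psi(z)$ drives $\psi(z) \to +\infty$ as $z \to \rho_\ell^-$, so the real root in $(\rho_{\ell-1},\rho_\ell)$ must approach $\rho_\ell$ from the left as $q \to +\infty$. The series constructed above satisfies $z = \rho_\ell + b_1/q + O(q^{-2})$ with $b_1 = -a_\ell \rho_\ell < 0$, so it too approaches $\rho_\ell$ from the left as $q \to +\infty$ along the real axis. By \eqref{eq:inter} no other root lies in a neighbourhood of $\rho_\ell$ for large real $q$, hence the two coincide on a real ray, and by analytic continuation they agree on all of $\c_R$. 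The corresponding statement for $\hat{\zeta}_\ell$ follows from the substitution $z = -\hat{\rho}_\ell + v$, which leads to $\hat{\zeta}_\ell = \hat{\rho}_\ell - v$ and explains the global minus sign in the statement.

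The main obstacle is essentially bookkeeping: one must verify at each stage that the leading coefficient is nonzero so that the formulas of Section \ref{sect:working} apply, and one must track how the disk of convergence in $v$ passes through reciprocation and Lagrange inversion to yield a uniform $R$ in the variable $q$. Both points are built into the explicit formula \eqref{eq:g_series} for $g(v;\ell)$ and the cited versions of the reciprocal and inversion theorems, so no substantive new difficulty arises beyond what was already handled in Theorem \ref{theo:extra_roots}. In particular, unlike the proof of Theorem \ref{theo:extra_roots} there is no need to select a branch of a fractional power, since here $k=1$.
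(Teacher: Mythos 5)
Your proof is correct and follows essentially the same route as the paper's: shift to the pole via $g(\cdot;\ell)$, reciprocate, apply Lagrange inversion with $k=1$, then use the interlacing property to identify the branch with $\zeta_\ell$. The only stylistic difference is in the last step: you compute $b_1 = -a_\ell\rho_\ell < 0$ explicitly to see the series approaches $\rho_\ell$ from the left, whereas the paper argues by contradiction (if the small variable were nonnegative, $g$ would be either negative or a pole, neither equal to $q$); both arguments establish the same thing.
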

\begin{proof}
We prove only the result for the positive solutions, the proof of the result for the negative solutions is analogous. Our goal is to solve $g(z;\ell) = q$ for large $\vert q \vert$ or, by making the change of variables $q = \frac{1}{w}$, to solve $\frac{1}{g(z;\ell)} = w$ for small $\vert w \vert$. We first determine the expansion of the reciprocal of $g(z;\ell)$. This yields, for $z$ near zero,
\begin{align*}
w = \sum_{i=1}^{\infty}r_n(_{-1}\bar{g}_{i-2})z^{i} = \sum_{i=1}^{\infty}c_iz^{i}.
\end{align*}
Now we invert the series giving
\begin{align}\label{eq:basic_z}
z = \sum_{n=1}^{\infty}l_n(_1\bar{c}_n)w^{n} = \sum_{n=1}^{\infty}\frac{b_n}{q^{n}},
\end{align}
where the left-hand side converges on $B_0(1/R)$ for some $R > 0$, so that the right-hand side converges on $\c_{R}$.\\ \\
\noindent Now we let $v = z + \rho_{\ell}$. It remains to show that $\rho_{\ell - 1} < v < \rho_{\ell}$, or equivalently that $z < 0$, for $q>0$ large enough. However, if this were not the case, i.e. if for every $M > 0$ we could find a $q > M$ such that $z \geq 0$, then, since $z \rightarrow 0$, for at least one such $z$ we would have either $g(z;\ell) = \psi(v) < 0 \neq q$ or $g(z;\ell) = \psi(\rho_{\ell}) \neq q$; this can be verified from the definition of $\psi(z)$ in \eqref{eq:main} and also visually from Figure \ref{fig:psiz}. By the interlacing property \eqref{eq:inter} we conclude that $v$ must correspond to $\zeta_{\ell}$ .
\end{proof}
\noindent The following corollary is straightforward to prove using Theorem \ref{theo:normal_roots} and the rules for series manipulations; we leave the proof to the reader. Corollary \ref{cor:norm} is the analog of Corollaries \ref{cor:with_sig} and \ref{cor:without_sig}.
\begin{corollary}\label{cor:norm} For each $1 \leq \ell \leq N$ (resp. $1 \leq \ell \leq \hat{N}$), and for each of (i), (ii), (iii), and (iv), there exists $R > 0$ such that the series representations hold and converge for all $q \in \c_{R}$. Throughout $\{b_n\}_{n\geq0}$ (resp. $\{\hat{b}_n\}_{n\geq 0}$) are as defined in Theorem \ref{theo:normal_roots}.
\begin{enumerate}[(i)]
\item 
\begin{align*}
\frac{1}{\zeta_{\ell}} = \sum_{n=0}^{\infty}\frac{r_n(\bar{b}_n)}{q^{n}}\quad \left(\text{resp. }\frac{1}{\hat{\zeta}_{\ell}} = -\sum_{n=0}^{\infty}\frac{r_n\left(\bar{\hat{b}}_n\right)}{q^{n}}\right)
\end{align*}
\item Under the assumption $\rho_1 > 1$ we have
\begin{align*}
\frac{1}{\zeta_{\ell} - 1} = \sum_{n=0}^{\infty}\frac{r_n(\bar{c}_n)}{q^{n}}\quad \left(\text{resp. }\frac{1}{\hat{\zeta}_{\ell} + 1} = -\sum_{n=0}^{\infty}\frac{r_n\left(\bar{\hat{c}}_n\right)}{q^{n}}\right),
\end{align*}
where $c_0 = b_0 - 1$ and $c_n = b_n$ (resp. $\hat{c}_0 = \hat{c}_0 - 1$ and $\hat{c}_n = \hat{b}_n$) otherwise.
\item For any $k > 0$
\begin{align*}
k^{-\zeta_{\ell}} = \sum_{n=0}^{\infty}\frac{p_n(k,-\bar{b}_n)}{q^{n}}\quad \left(\text{resp. }k^{\hat{\zeta}_{\ell}} = \sum_{n=0}^{\infty}\frac{p_n(k,-\bar{\hat{b}}_n)}{q^{n}}\right).
\end{align*}
\item
\begin{align*}
\zeta'_{\ell} = \sum_{n=2}^{\infty}(1-n)\frac{b_{n-1}}{q^n}\quad\left(\text{resp. } -\hat{\zeta}'_{\ell} =\sum_{n=2}^{\infty}(1-n)\frac{\hat{b}_{n-1}}{q^n}\right)
\end{align*}
\end{enumerate}
\end{corollary}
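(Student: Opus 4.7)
The plan is to derive each of (i)--(iv) by applying the operations of reciprocation, exponentiation, and term-by-term differentiation from Section \ref{sect:working} to the convergent series
\begin{align*}
\zeta_\ell = \sum_{n=0}^{\infty}\frac{b_n}{q^n},\qquad b_0 = \rho_\ell,
\end{align*}
supplied by Theorem \ref{theo:normal_roots} on some $\c_{R_0}$, and treating $1/q$ as the series variable throughout. Since there are only finitely many operations and each produces a series convergent on some $\c_{R'}$ with $R' \geq R_0$, we may take $R$ to be the maximum of these radii. The negative-side statements follow by the identical argument applied to $-\hat{\zeta}_\ell = \sum_{n\geq 0} (-\hat b_n)/q^n$, together with tracking of the overall sign.

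For (i) we note that $b_0 = \rho_\ell > 0$, so that viewing $\zeta_\ell$ as a series in $w=1/q$ with leading index $k=0$ and nonzero leading coefficient, formula \eqref{eq:recip} applies directly and produces $1/\zeta_\ell = \sum_{n\geq 0} r_n(\bar b_n)\,q^{-n}$, valid on any $\c_R$ throughout which $\zeta_\ell$ stays bounded away from zero. For (ii) we write $\zeta_\ell - 1 = \sum_{n\geq 0} c_n/q^n$ with $c_0 := b_0 - 1$ and $c_n := b_n$ for $n \geq 1$; the assumption $\rho_1 > 1$ guarantees $c_0 = \rho_\ell - 1 > 0$, so \eqref{eq:recip} applies again and yields the stated formula.

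For (iii) we apply \eqref{eq:power} with base $c=k$, exponent given by the series $-\zeta_\ell = \sum_{n\geq 0} (-b_n)/q^n$, and variable $w=1/q$. The resulting coefficients are exactly $p_n(k,-\bar b_n)$, with the leading factor $k^{-\rho_\ell}$ absorbed into $p_0$ via $p_0(k,-b_0) = k^{-b_0}$; convergence on a suitable $\c_R$ is inherited from that of the exponent series. For (iv) the series for $\zeta_\ell$, being a power series in $1/q$ convergent on a neighbourhood of infinity, may be differentiated term by term in the interior of that region. Differentiating with respect to $q$ gives
\begin{align*}
\zeta'_\ell = \sum_{n=1}^{\infty}\frac{-n\,b_n}{q^{n+1}} = \sum_{n=2}^{\infty}\frac{(1-n)\,b_{n-1}}{q^n},
\end{align*}
which is precisely the claim.

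There is no genuine obstacle here; the main piece of care is ensuring that the radii obtained at each step are compatible (which is automatic because only finitely many operations are performed and each is an analytic operation on the region of convergence) and that the hypothesis $\rho_1 > 1$ is invoked only in (ii), where it is needed to prevent the leading coefficient of the denominator from vanishing before reciprocation.
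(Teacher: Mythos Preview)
Your proposal is correct and follows exactly the approach the paper intends: the paper itself does not give a detailed proof but simply states that the corollary is ``straightforward to prove using Theorem~\ref{theo:normal_roots} and the rules for series manipulations'' and leaves it to the reader. Your argument---applying \eqref{eq:recip} for (i) and (ii), \eqref{eq:power} for (iii), and termwise differentiation for (iv), with $1/q$ as the series variable---is precisely what is meant, and your remarks on the role of $\rho_1>1$ in (ii) and on taking $R$ as the maximum of finitely many radii are appropriate.
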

\begin{example}[\textbf{Exotic option pricing via the Laplace transform}]\label{ex:barr}
We let $X$ be the process defined by Parameter Set 1 and setting $\sigma = 0.042$ and $a = 0.141875$. Further, we define $\overline{X}$ to be the running supremum process, i.e. $\overline{X}_t := \sup_{s\leq t}X_s$ and consider the problem of computing
\begin{align*}
D(t):= D(t;k) := e^{-rt}\p\left(\overline{X}_t < \log(k)\right)
\end{align*}
for some fixed $t,\,k,\,r > 0$. Readers may recognize that $D(t)$ represents the value of a European up and out digital option. Using the Wiener-Hopf factorization it is straightforward to show that
\begin{align*}
\mathcal{L}\{e^{rt}D(t)\}(q) = \frac{1 - \sum_{\ell=1}^{8}\beta_{\ell}k^{-\zeta_{\ell}}}{q},\quad \text{ where } \quad \beta_{\ell} := \beta_{\ell}(q) := \frac{ \prod_{\ell=k}^{7}\left(1 - \frac{\zeta_{\ell}(q)}{\rho_{k}}\right)}{\prod_{\substack{1 \leq k \leq 8 \\ k \neq \ell }}\left(1 - \frac{\zeta_{\ell}(q)}{\zeta_k(q)} \right)},\,1 \leq \ell \leq 8;
\end{align*}
see \cite{Jeannin} for further details of the derivation. To recover $e^{rt}D(t)$ we need to compute
\begin{align*}
\frac{1}{2\pi i}\int_{c + i\r}e^{tq}\mathcal{L}\{e^{rt}D(t)\}(q)\d q = \Re\left(\frac{e^{ct}}{\pi}\int_{0}^{\infty}e^{itu}\frac{1 - \sum_{\ell=1}^{8}\beta_{\ell}(c + iu)k^{-\zeta_{\ell}(c+iu)}}{c + iu}\d u\right)
\end{align*}
where $c > 0$ and as usual $i := \sqrt{-1}$. The integral on the right necessitates evaluation via a numerical quadrature rule; in particular, for each step in the quadrature we will need to compute $\{\zeta_{\ell}(c + iu)\}_{1 \leq \ell \leq 8}$ numerically. While there are several good numerical approaches to doing this, this ``root finding" is typically the most time intensive part of the algorithm. To speed up the computation, we can replace the numerical procedure for finding $\{\zeta_{\ell}(c + iu)\}_{1 \leq \ell \leq 8}$ by the (truncated) series expansions of Theorems \ref{theo:extra_roots} and \ref{theo:normal_roots} once $u$ is big enough (typically choosing large $c$ does not yield a reliable algorithm).\\ \\
\noindent To test this idea, we use the root finding algorithm proposed in Section 5 of \cite{Kuz2010a}, which can be described as follows: a) differentiate the identity $\psi(\zeta_{\ell}) = q$ with respect to $u$ and b) solve the resulting ODE for $\zeta_{\ell}$ at each point in the discretization of $u$ using a numerical method like the midpoint method. At each step, sharpen the estimate by applying two or three iterations of Newton's root finding algorithm. To evaluate the integral we use Filon's quadrature \cite{Filon,Fosdick}. For an overview of the combined approach i.e. root finding together with Filon's quadrature see also Chapter 5.2 in \cite{Hackmann}.\\ \\
Fixing $t = 0.25$, $k = 1.1$, $r=0.03$ and $c = 0.5$ we compute $D(0.25)$ once by the method described in the previous paragraph (Method 1) and once by replacing the numerical root finding technique by the truncated series for $ u  > 80$ (Method 2). For this we use the series up to order $q^{-10}$ for all $1 \leq \ell \leq 8$. Depending our need for accuracy, we can vary the number of discretization steps and upper limit of integration when we apply Filon's quadrature. The two methods are compared in Table \ref{tab:opt_calc}. We see that Method 1 takes between three to four-and-a-half times as long as Method 2 to find the roots and roughly two to three times as long in total. We find the prices are identical up to 12 decimal places. Not included in the time for Method 2 is the time needed to compute the coefficients of the series expansions of $\{\zeta_{\ell}\}_{1\leq \ell \leq 8}$. In total, however, this is negligible: approximately 0.1 seconds for all of the coefficients. \\ \\
\noindent To get a further sense for the accuracy of the series expansions, we also present some graphical experiments in Figures \ref{fig:ex_root_graph} and \ref{fig:r6_graph}. In all cases we truncate the series at order $q^{-10}$. In Figure \ref{fig:ex_root_graph_plot} we see plots of $\zeta_8(q)$ (blue $\circ$) and $\hat{\zeta}_8(q)$ (red $\square$), where $q = 0.5 + iu$, $u \in [80,150]$. The corresponding errors $\vert \psi(\zeta_8) - q \vert$ and $\vert \psi(-\hat{\zeta}_8) - q \vert$ are shown in Figure \ref{fig:ex_root_graph_error}. Further we compute $\zeta'_8$ and $\hat{\zeta}'_8$ in Figure \ref{fig:ex_root_graph_der} and show the errors $\vert\zeta'_8 - 1/\psi'(\zeta_8)\vert$ and $\vert-\hat{\zeta}'_8 - 1/\psi'(-\hat{\zeta}_8)\vert$ in Figure \ref{fig:ex_root_graph_der_error}. Note that once $q$ is large enough, so that $\zeta_{\ell}$ is an analytic function, we may differentiate both sides of $\psi(\zeta_{\ell}) = q$ with respect to $q$ showing that $\zeta'_{\ell} = 1/\psi(\zeta_{\ell})$ and similarly that $-\hat{\zeta}'_{\ell} = 1/\psi(-\hat{\zeta}_{\ell})$; this is the basis for the error calculations for the derivatives. The same information for $\zeta_6$ and $\hat{\zeta}_6$ is shown in Figure \ref{fig:r6_graph}.\demo
\renewcommand{\arraystretch}{1.5}
\begin{table}
\begin{center}
\begin{tabular}{ | p{1.8cm} | p{1.8cm} | p{1.8cm} | p{1.8cm} | p{1.8cm} | p{1.8cm} | p{1.8cm} | p{1.8cm} |}
  \hline 
 Disc. Steps & Integ. Limit & $D(0.25)$ Method 1 & $D(0.25)$ Method 2 & Time roots Method 1  & Time roots Method 2  & Total time Method 1 & Total time Method 2 \\ \hline  \hline
$10^5$ & $10^3$ & 0.896525 & 0.896525 & 2.852 & 0.956  & 3.364 & 1.460\\ \hline
$10^6$  & $10^4$ & 0.896764 & 0.896764 & 27.428 & 6.332 & 32.488 & 11.392\\ \hline
$10^7$ & $10^5$ & 0.896865 & 0.896865 & 273.224 & 56.300 & 323.876 & 106.976 \\ \hline
\end{tabular}
\caption{$D(0.25)$ by two methods with varying number of discretization steps and upper limits of integration. Time is in seconds. ``Time roots" refers to the time spent by the algorithm with computing the roots; ``Total time" is the total time needed to compute $D(0.25)$. }\label{tab:opt_calc}
\end{center}
\end{table}
\begin{figure}[!h]
\centering
\subfloat[]
{\label{fig:ex_root_graph_plot}\includegraphics[scale=0.60]{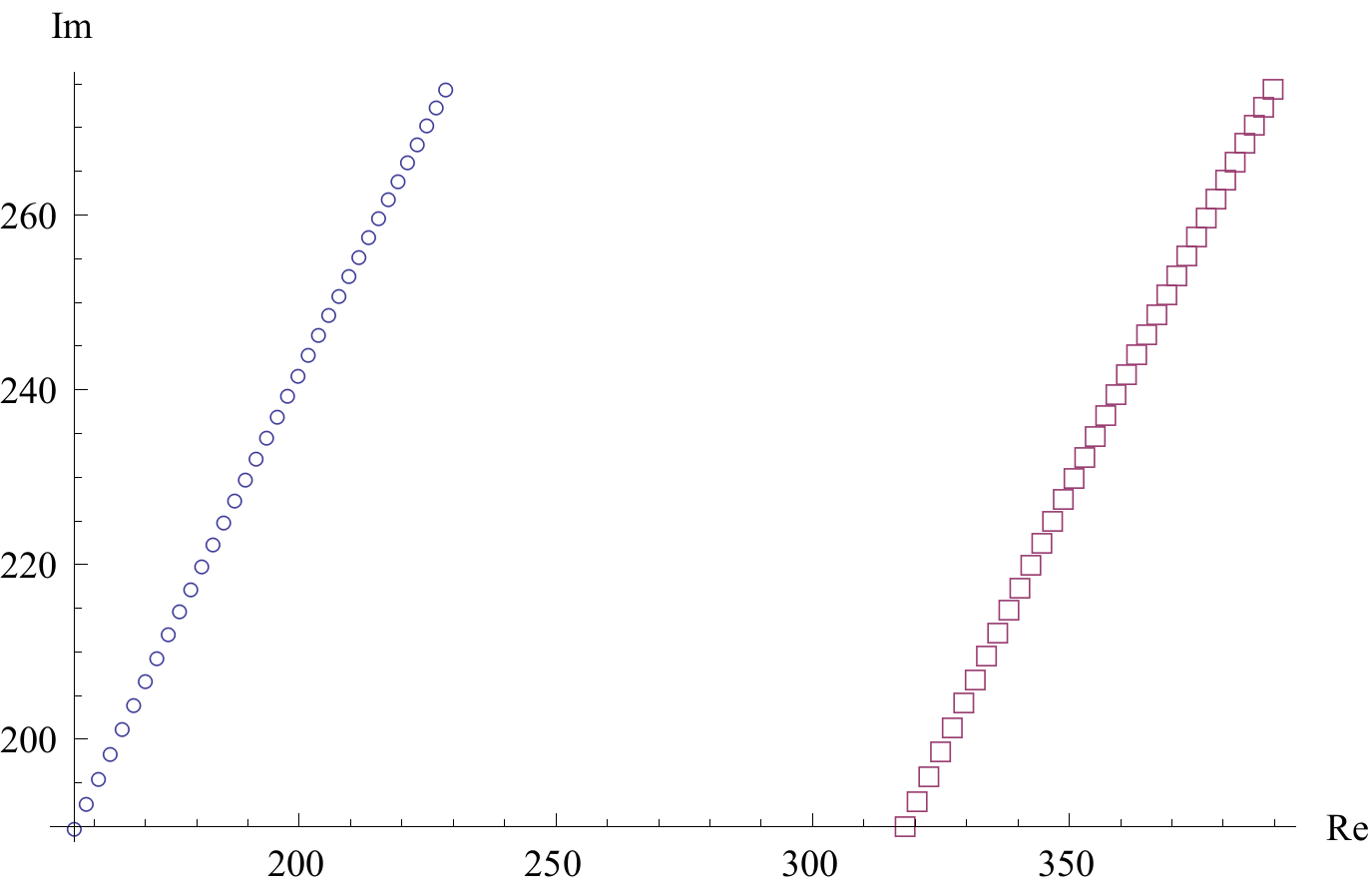}}
\subfloat[]
{\label{fig:ex_root_graph_error}\includegraphics[scale=0.60]{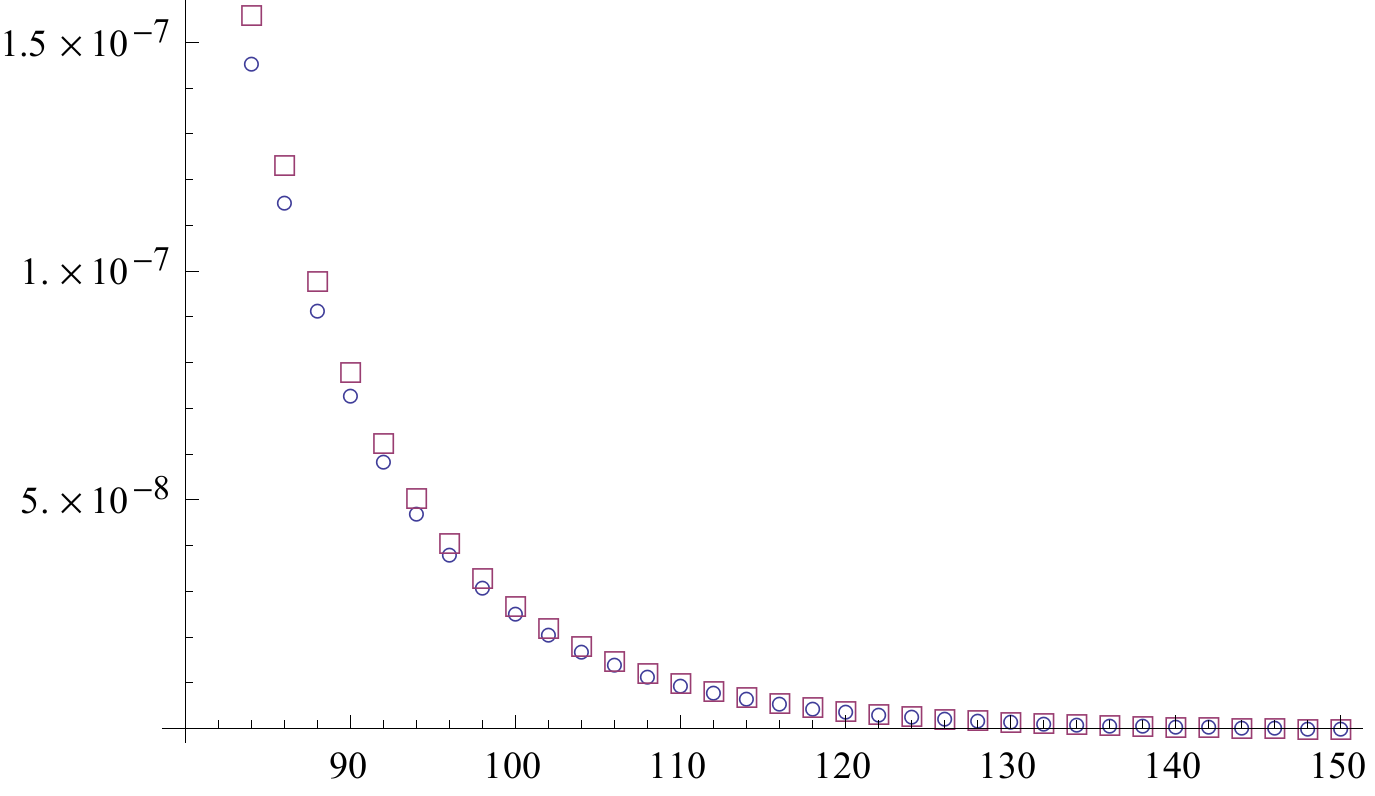}}\\
\subfloat[]
{\label{fig:ex_root_graph_der}\includegraphics[scale=0.60]{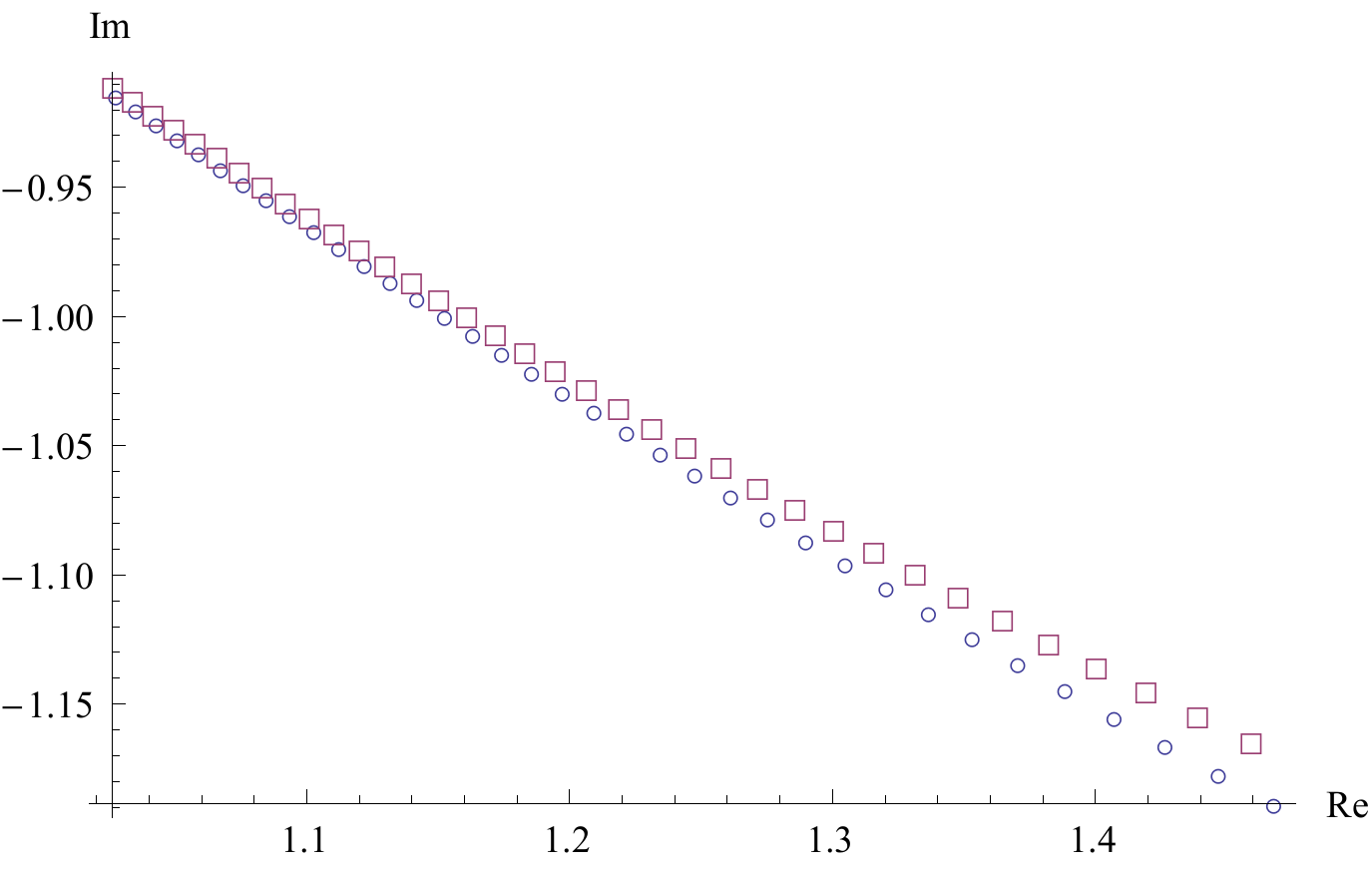}}
\subfloat[]
{\label{fig:ex_root_graph_der_error}\includegraphics[scale=0.60]{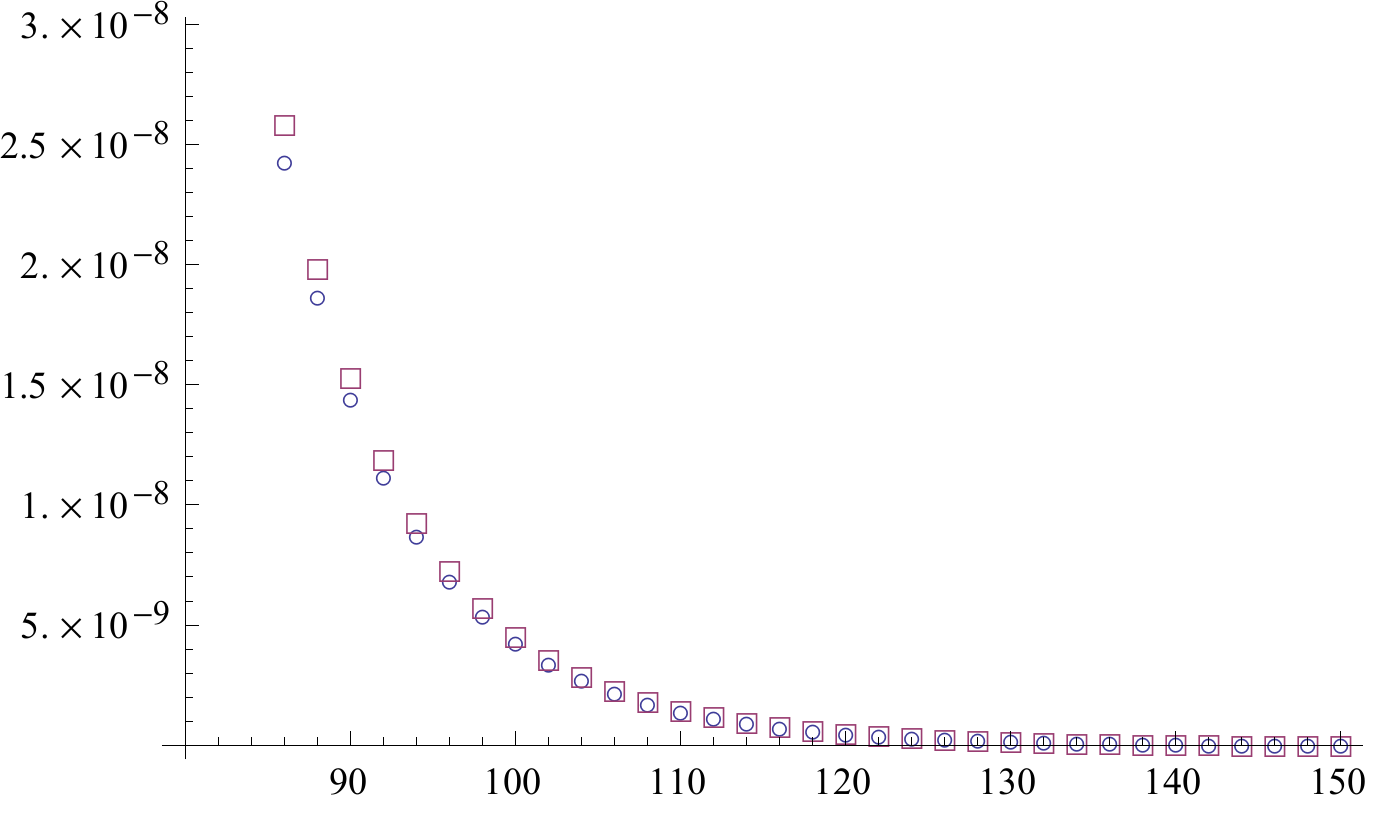}}
\caption{Plots of $\zeta_8$ (blue $\circ$) and $\hat{\zeta_8}$ (red $\square$) are shown in \ref{fig:ex_root_graph_plot}, errors in \ref{fig:ex_root_graph_error}. Plots of $\zeta'_8$ and $\hat{\zeta'_8}$ are shown in \ref{fig:ex_root_graph_der}, errors in \ref{fig:ex_root_graph_der_error}.}
\label{fig:ex_root_graph}
\end{figure}
\begin{figure}[!h]
\centering
\subfloat[]
{\label{fig:r6_graph_plot}\includegraphics[scale=0.60]{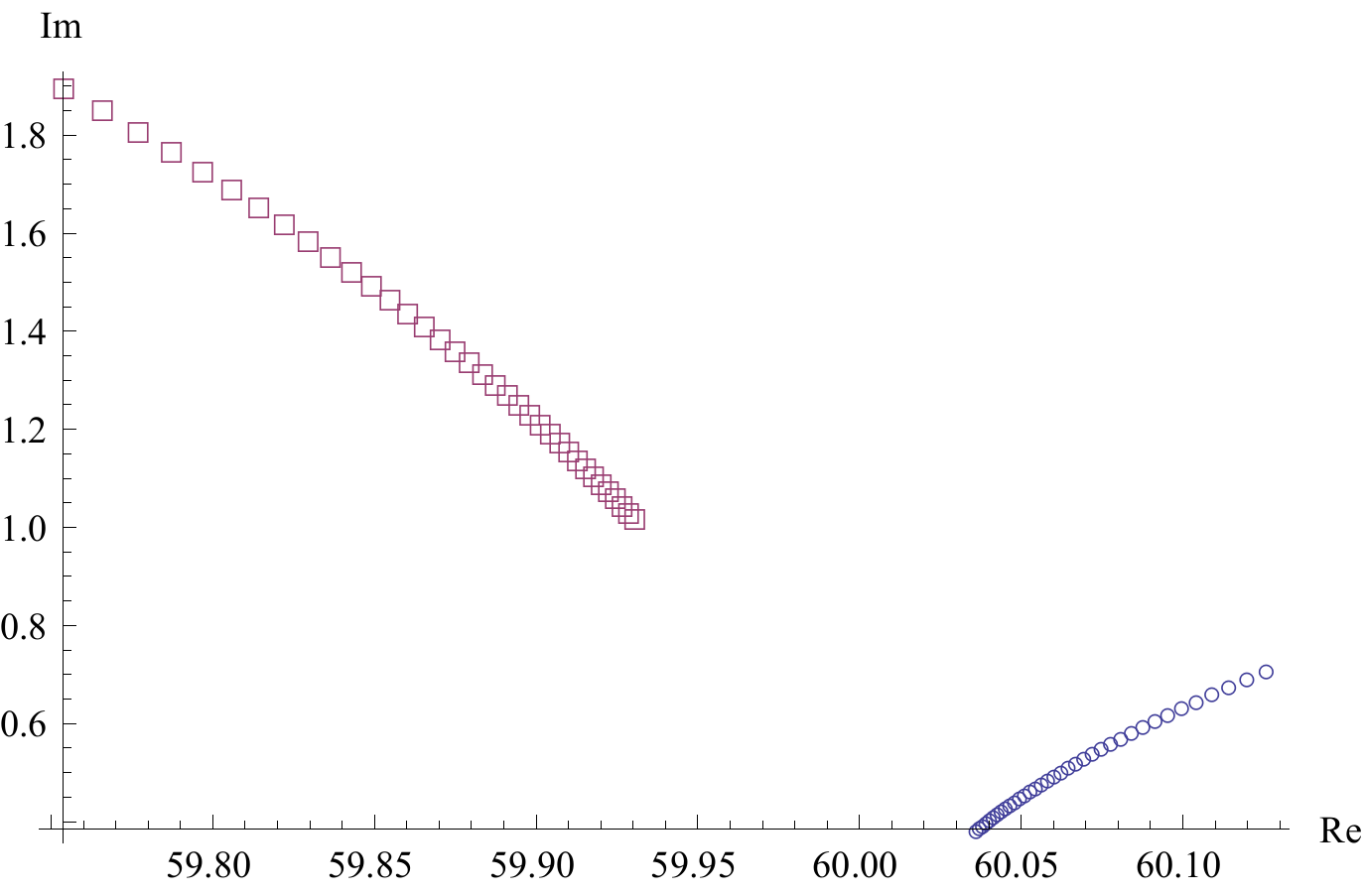}}
\subfloat[]
{\label{fig:r6_graph_error}\includegraphics[scale=0.60]{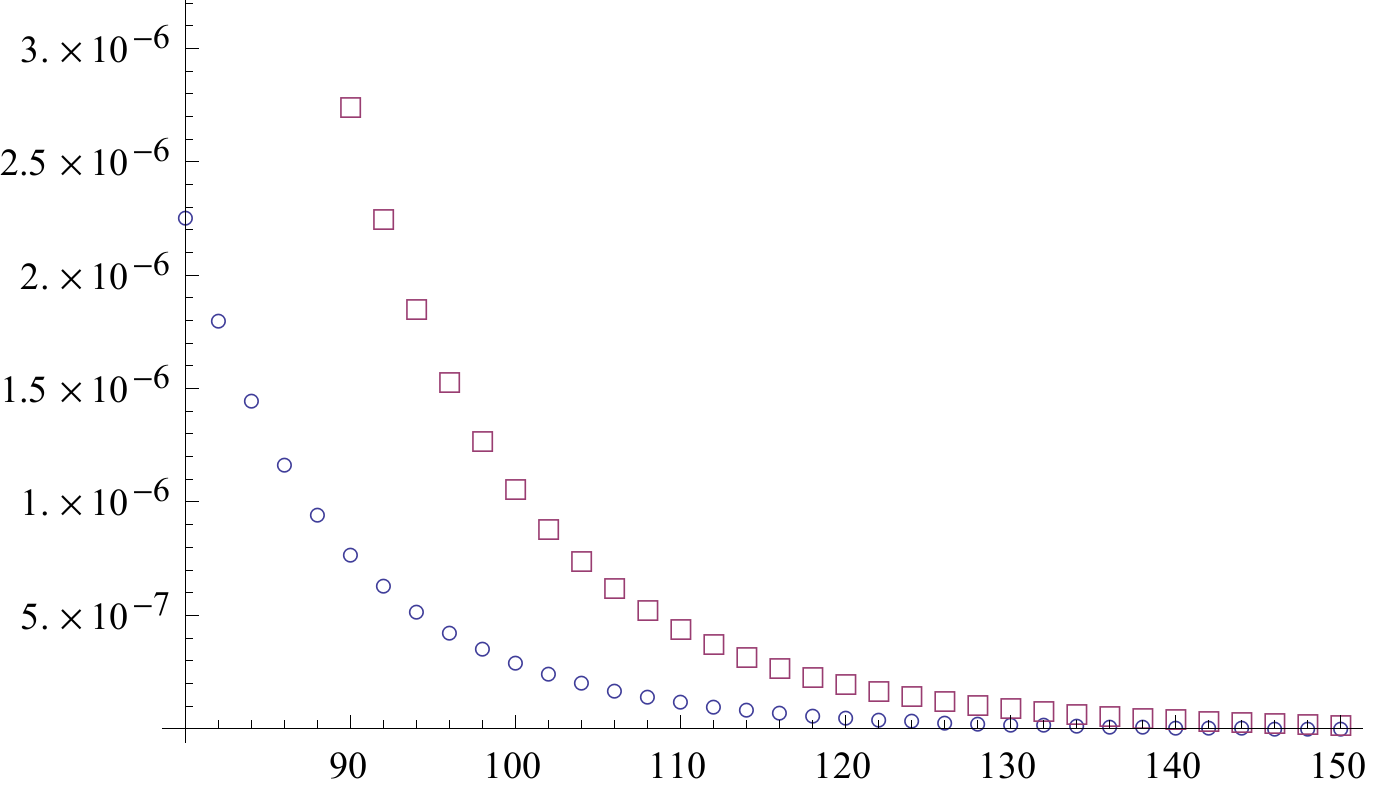}}\\
\subfloat[]
{\label{fig:r6_graph_der}\includegraphics[scale=0.60]{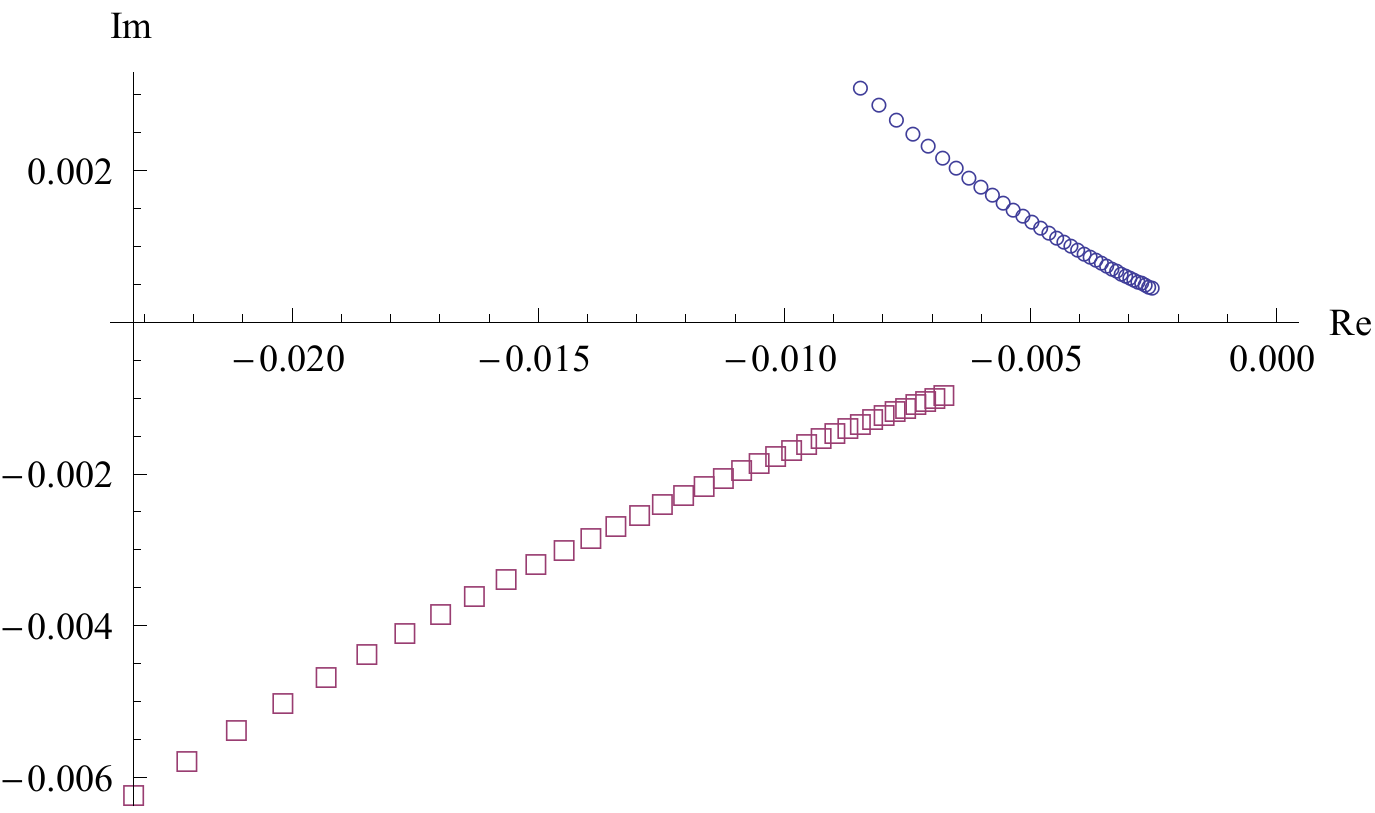}}
\subfloat[]
{\label{fig:r6_graph_der_error}\includegraphics[scale=0.60]{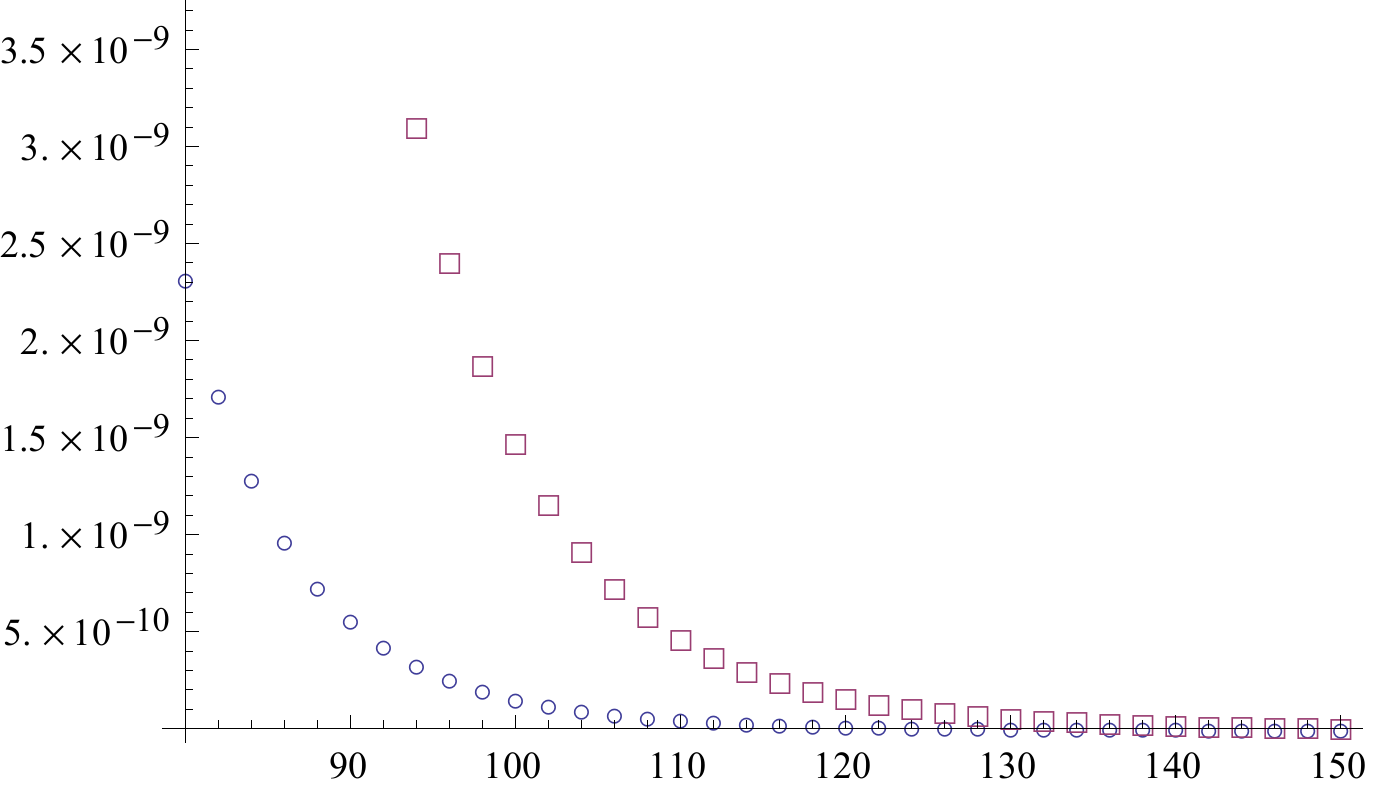}}
\caption{Plots of $\zeta_6$ (blue $\circ$) and $\hat{\zeta_6}$ (red $\square$) are shown in \ref{fig:r6_graph_plot}, errors in \ref{fig:r6_graph_error}. Plots of $\zeta'_6$ and $\hat{\zeta'_6}$ are shown in \ref{fig:r6_graph_der}, errors in \ref{fig:r6_graph_der_error}.}
\label{fig:r6_graph}
\end{figure}
\end{example}
\section{Analytic formulas for European option prices and Greeks}\label{sect:option}
In this section we model a stock price $S$ by an exponentiated hyperexponential process. That is, we set
\begin{align*}
S_t := S_0e^{X_t},\quad t > 0,
\end{align*}
where $S_0 > 0$ is the stock price at time zero, $X$ is a hyperexponential process satisfying $\psi(1) = r > 0$ and $\rho_1 > 1$, and $r$ represents some fixed interest rate. This latter condition ensures that the discounted process $e^{-r t}S_t$ is a martingale, in other words, that the so-called risk neutral condition is satisfied. We will simply say that $X$ \emph{fulfils the risk-neutral condition} whenever $\psi(1) = r > 0$ and $\rho_1 > 1$. \\ \\
\noindent It is well-known that the theoretical price of a European call (resp. put) option in this scenario is then given by the expression
\begin{align*}
C := C(T) := e^{-r T}\e[(S_T - K)^+],\quad\left(\text{resp. }P := P(T) := e^{-r T}\e[(K -S_T)^+\right])
\end{align*}
where $K > 0$ is the strike price, $T > 0$ is the date of expiry, and $(x)^+ = \max\{0,x\}$, $x \in \r$. Writing
\begin{align}\label{eq:def_of_f}
C(T) = e^{-r T}\times S_0 \times f(T), \quad f(t) := \e[(e^{X_t} - k)^+],\quad k := \frac{K}{S_0}
\end{align}
allows us to focus on the difficult part of evaluating $C(T)$, namely evaluating $f(t)$, without carrying around additional terms. Let $w(t)$ and $W(q)$ be defined analogously for the put $P(T)$. If we take the Laplace transform of $f(t)$ we see that
\begin{align}\label{eq:laplace}
F(q) := \int_{0}^{\infty}e^{-qt}f(t)\d t = \frac{1}{q}\e[(e^{X_{\eq}} - k)^+] 
\end{align}
where $\eq$ is again an exponential random variable with mean $q^{-1}$. Since we know the distribution of $X_{\eq}$ from Theorem \ref{theo:atq} we can work out an explicit expression for $F(q)$ in terms of $\{\zeta_n\}_{1\leq n \leq M}$ and $\{\hat{\zeta}_n\}_{1\leq n \leq \hat{M}}$ as we do below. Assuming first that $k > 1$, i.e. that the call option is out of the money (OTM), we have
\begin{align*}
F(q) &= \frac{1}{q}\int_{\r}(e^{x} - k)^{+}\p(X_{\eq} \in \d x)
= \int_{\log(k)}^{\infty}(e^x - k)\sum_{{\ell}=1}^{M}\frac{e^{-\zeta_{\ell}x}}{\psi'(\zeta_{\ell})}\d x = k\sum_{{\ell}=1}^{M}\frac{k^{-\zeta_{\ell}}}{\psi'(\zeta_{\ell})\zeta_{\ell}(\zeta_{\ell} - 1)}.
\end{align*}
Now if $\vert q \vert$ is large enough, we may replace $1/\psi'(\zeta_{\ell})$ by $\zeta'_{\ell}$ and similarly $1/\psi'(-\hat{\zeta}_{\ell})$ by $-\hat{\zeta}_{\ell}$. Therefore, for $q > 0$ large enough, when $k > 1$ we have
\begin{align}\label{eq:medium}
F(q) = k\sum_{{\ell}=1}^{M}\frac{\zeta'_{\ell}k^{-\zeta_{\ell}}}{\zeta_{\ell}(\zeta_{\ell} - 1)},
\end{align}
when $k = 1$ we have
\begin{align}\label{eq:easy}
F(q) = \sum_{{\ell}=1}^{M}\frac{\zeta'_{\ell}}{\zeta_{\ell}(\zeta_{\ell} - 1)},\end{align}
and finally, when $k < 1$, we have
\begin{align}\label{eq:hard}
W(q) = k\sum_{{\ell}=1}^{\hat{M}}\frac{\hat{\zeta}'_{\ell}k^{\hat{\zeta}_{\ell}}}{\hat{\zeta}_{\ell}(\hat{\zeta}_{\ell} + 1)}.
\end{align}
\noindent We remark that although Formulas \ref{eq:medium}--\ref{eq:hard} are derived under the assumption that $q$ is real -- see \eqref{eq:laplace} -- the properties of Laplace transforms ensure that since \eqref{eq:laplace} holds for all $q > 0$ the Laplace transform of the functions $f(t)$ and $w(t)$, i.e. $\mathcal{L}\{f\}(q)$ and $\mathcal{L}\{w\}(q)$, are well defined and analytic on $\mathbb{H}$ (see Theorems 3.4 and 6.1 in \cite{doetsch}). Further, Corollaries \ref{cor:with_sig}, \ref{cor:without_sig}, and \ref{cor:norm} (iv) as well as the fact that none of $\{\zeta_{\ell} - 1,\, \zeta_{\ell}\}_{1 \leq \ell \leq M}$, or $\{\hat{\zeta}_{\ell} + 1,\,\hat{\zeta}_{\ell}\}_{1 \leq \ell \leq \hat{M}}$ are zero for $q$ with $\re(q)$ large enough, show that Formulas \ref{eq:medium}--\ref{eq:hard} actually define analytic functions on a half-plane $\mathbb{H}_{q_0}$ for some $q_0 > 0$. But then, $\mathcal{L}\{f\}(q)$ and $F(q)$ (resp. $\mathcal{L}\{w\}(q)$ and $W(q)$) are both analytic on $\mathbb{H}_{q_0}$  and they agree on $\mathbb{H}_{q_0} \cap \r $. Invoking an argument of analytic continuation (see Corollary 3.2.4.1 in \cite{krantz} ) shows that in fact $\mathcal{L}\{f\}(q) = F(q)$ (resp. $\mathcal{L}\{w\}(q) = W(q)$) on $\mathbb{H}_{q_0}$, i.e. Formulas \ref{eq:medium}--\ref{eq:easy} (resp. \ref{eq:hard}) are valid in the half-plane $\mathbb{H}_{q_0}$ and are equal to the Laplace transform of the function $f(t)$ (resp. $w(t)$) there. \\ \\
\noindent We now use this fact, and the series expansions derived in the previous section, to develop convergent series expansions of the option price $C$, starting with the simplest case $k = 1$, i.e. the at the money (ATM) case. To facilitate notation we define
\begin{align}\label{eq:xi_varsig}
\xi_{\ell} := 1/\zeta_{\ell},&\quad\text{ and }\quad \varsigma_{\ell} := 1/(\zeta_{\ell} -1),\quad\text{ for }\quad 1\leq\ell\leq M,\quad\text{ and }\nonumber\\
\hat{\xi}_{\ell} := 1/\hat{\zeta}_{\ell},&\quad\text{ and }\quad\hat{\varsigma}_{\ell} := 1/(\hat{\zeta}_{\ell} + 1),\quad\text{ for }\quad 1\leq \ell \leq \hat{M},
\end{align}
and remind the reader that the $n$-th coefficient of the sum (resp. product) of a finite number of series is denoted $s_n$ (resp. $m_n$).
\begin{lemma}\label{lem:at_the_money}
Suppose that $X$ fulfils the risk neutral condition and $S_0 = K$. 
\begin{enumerate}[(i)]
\item If $\sigma > 0$ then there exists $R > 0$ such that the following equality holds for $q \in \c^+_{R}$  and  the series converges for $q \in \c_{R}$:
\begin{align*}
F(q) = \sum_{n=3}^{\infty}\frac{b_n}{q^{n/2}},
\end{align*}
where 
\begin{align*}
b_n = \begin{dcases}
s_{n/2} + m_{M,n} & n \text{ even}\\
m_{M,n} & n \text{ odd}
\end{dcases},
\end{align*}
$s_n = s_n(m_{1,n},\,m_{2,n},\,\ldots,\,m_{N,n})$, and $m_{\ell,n} := m_{n}(\bar{\xi}_{\ell,n},\,\bar{\varsigma}_{\ell,n},\,\bar{\zeta'}_{\ell,n})$ for $1 \leq \ell \leq M$.
\item If $\sigma = 0$ then there exists $R > 0$ such that the following equality holds and  the series converges for $q \in \c_{R}$:
\begin{align*}
F(q) = \sum_{n=2}^{\infty}\frac{b_n}{q^{n}}
\end{align*}
where $b_n = s_n(m_{1,n},\,m_{2,n},\,\ldots,\,m_{M,n})$, and $\{m_{\ell,n}\}_{1\leq \ell \leq M}$ are as in (i).
\end{enumerate}
\end{lemma}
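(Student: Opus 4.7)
The plan is to start from the Laplace-transform identity \eqref{eq:easy} specialized to $k=1$, namely $F(q) = \sum_{\ell=1}^{M} \xi_\ell\,\varsigma_\ell\,\zeta'_\ell$, which by the analytic-continuation discussion preceding the lemma is valid on some half-plane $\mathbb{H}_{q_0}$. I will expand each factor as a series in powers of $q^{-1}$ (or $q^{-1/2}$ when $\sigma>0$ and $\ell=M$), multiply the three factors using \eqref{def:mn}, sum across $\ell$ using \eqref{def:sn}, and read off the coefficients $b_n$.

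For the ordinary positive roots $\zeta_\ell$, $1\le\ell\le N$, Theorem \ref{theo:normal_roots} together with Corollary \ref{cor:norm}(i), (ii), (iv) supplies integer-power-of-$q^{-1}$ series for $\xi_\ell$, $\varsigma_\ell$, $\zeta'_\ell$ on some $\c_{R_\ell}$; the hypothesis $\rho_1>1$ built into the risk-neutral condition is exactly what Corollary \ref{cor:norm}(ii) requires. When $\sigma>0$, the extra root $\zeta_M$ is handled by Theorem \ref{theo:extra_roots}(i) together with Corollary \ref{cor:with_sig}(i), (ii), (iv), yielding series in half-integer powers of $q^{-1}$ on $\c_{R_M}^+$. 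When $\sigma=0$ and $M=N+1$ the extra root instead contributes integer-power series via Corollary \ref{cor:without_sig}; when $M=N$ the extra root is absent and nothing further is needed.

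A leading-order count gives the correct starting indices. For $\ell\le N$, $\xi_\ell$ and $\varsigma_\ell$ are $O(1)$ while $\zeta'_\ell=O(q^{-2})$, so the product $m_{\ell,n}$ is an integer-power series whose lowest nonzero coefficient sits at $n=2$. When $\sigma>0$ and $\ell=M$, each of the three factors is $O(q^{-1/2})$, so $m_{M,n}$ is a half-integer series starting at $n=3$. Summing via \eqref{def:sn} now splits into two cases. When $\sigma=0$ every summand is in integer powers, giving directly $b_n = s_n(m_{1,n},\ldots,m_{M,n})$, with the series starting at $n=2$. When $\sigma>0$ I regard everything as a series in $q^{-1/2}$: the integer-power series $m_{\ell,n}$ for $\ell\le N$ contribute only at even indices and after reindexing match $s_{n/2}$, while $m_{M,n}$ contributes at every $n\ge3$, producing the even/odd case split in the statement.

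Setting $R := \max_\ell R_\ell$, every individual series converges on $\c_R^+$ so the assembled identity holds there; the aggregate series, being a power series in the single variable $q^{-1/2}$ (resp.\ $q^{-1}$), automatically converges absolutely on the larger set $\c_R$. The only delicate point is the bookkeeping in the $\sigma>0$ case — merging half-integer and integer series into a single expansion in $q^{-1/2}$ with the correct even/odd structure — but this is a purely formal exercise in the framework of Section \ref{sect:working}, and no genuinely new analytic difficulty arises beyond what is already packaged in the root-expansion corollaries.
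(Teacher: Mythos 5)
Your proposal is correct and follows exactly the route the paper takes: start from Formula \eqref{eq:easy}, expand $\xi_\ell$, $\varsigma_\ell$, $\zeta'_\ell$ via Corollaries \ref{cor:norm}, \ref{cor:with_sig}, \ref{cor:without_sig}, then multiply and sum using \eqref{def:mn} and \eqref{def:sn}. The paper's proof is a one-line pointer to those ingredients; you have filled in the leading-order bookkeeping (starting indices $n=2$ for the ordinary roots, $n=3$ for the extra root when $\sigma>0$, and the even/odd merge into a single $q^{-1/2}$ expansion), all of which is accurate.
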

\begin{proof}
Use Formula \ref{eq:easy} and apply Corollaries \ref{cor:with_sig}, \ref{cor:without_sig}, and \ref{cor:norm}.
\end{proof}
\begin{theorem}[\textbf{ATM call option price}]\label{theo:at_the_money}
Suppose that $X$ fulfils the risk neutral condition and $S_0 = K$.
\begin{enumerate}[(i)]
\item Let $\sigma > 0$ and $\{b_n\}_{n\geq 3}$ be as in Lemma \ref{lem:at_the_money} (i). Then
\begin{align}\label{eq:at_the_money_gauss}
C(T) = e^{-r T}S_0\sum_{n=1}^{\infty}b_{n+2}\frac{T^{n/2}}{\Gamma\left(\frac{n}{2} + 1\right)},\quad T \geq 0,
\end{align}
and the series on the right-hand side converges for $T \in \c$.
\item Let $\sigma = 0$ and $\{b_n\}_{n\geq 2}$ be as in Lemma \ref{lem:at_the_money} (ii). Then
\begin{align}\label{eq:at_the_money_no_gauss}
C(T) = e^{-r T}S_0\sum_{n=1}^{\infty}b_{n+1}\frac{T^{n}}{n!},\quad T \geq 0,
\end{align}
and the series on the right-hand side converges for $T \in \c$.
\end{enumerate}
\end{theorem}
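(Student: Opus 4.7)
The plan is to apply Corollary \ref{cor:main} to the series representations of $F(q) = \mathcal{L}\{f\}(q)$ already established in Lemma \ref{lem:at_the_money}, and then multiply through by $e^{-rT}S_0$ to obtain $C(T)$ from $f(T)$ via \eqref{eq:def_of_f}. Before invoking the corollary I need to confirm two preliminary facts: first, that $f(t) = \e[(e^{X_t} - 1)^+]$ is a continuous function on $[0,\infty)$, and second, that the series in Lemma \ref{lem:at_the_money} fit the template of Corollary \ref{cor:main} with the correct choice of $k$ and $j$.

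For the continuity of $f$, the risk-neutral condition $\psi(1) = r$ implies $\e[e^{X_t}] = e^{rt}$ for all $t \geq 0$. Since $X$ is a L\'evy process, its paths are stochastically continuous, so $X_t \to X_{t_0}$ in probability as $t \to t_0$, hence $(e^{X_t}-1)^+ \to (e^{X_{t_0}}-1)^+$ in probability as well. Because $\e[e^{X_t}]$ is continuous in $t$ and dominates the integrand, uniform integrability of $\{e^{X_t}\}$ on compact $t$-intervals yields $f(t) \to f(t_0)$; right-continuity at $0$ is handled by $f(0)=0$ and the same argument.

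For the series-matching step I apply Corollary \ref{cor:main} in each case. In case (i), Lemma \ref{lem:at_the_money} (i) gives a convergent representation of $F(q)$ on $\c_R$ of the form $\sum_{n=3}^{\infty} b_n q^{-n/2}$, so I take $k=2$ and $j=1$, noting that $k+j = 3$ matches the starting index. Corollary \ref{cor:main} then yields
\begin{align*}
f(T) = \sum_{n=1}^{\infty}\frac{b_{n+2}}{\Gamma\!\left(\frac{n}{2}+1\right)}T^{n/2},
\end{align*}
with convergence for all $T \in \c$. Multiplying by $e^{-rT}S_0$ produces \eqref{eq:at_the_money_gauss}. In case (ii), Lemma \ref{lem:at_the_money} (ii) gives $F(q) = \sum_{n=2}^{\infty} b_n q^{-n}$, so I take $k=1$ and $j=1$, and Corollary \ref{cor:main} produces $f(T) = \sum_{n=1}^{\infty} b_{n+1} T^n / n!$, whence \eqref{eq:at_the_money_no_gauss} after the same multiplication.

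I do not expect any significant obstacle: the series expansions have already been done in Lemma \ref{lem:at_the_money}, and Corollary \ref{cor:main} is tailor-made for converting them into the stated $T$-series. The only genuine verification is the continuity of $f$, and even that is routine given the exponential moment afforded by the risk-neutral condition. The only mild bookkeeping point is keeping track of the index shifts ($b_{n+2}$ versus $b_{n+1}$) arising from the different values of $k$ in the two cases, which is handled cleanly by writing out $k+j$ for each.
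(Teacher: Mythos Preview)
Your proposal is correct and follows essentially the same approach as the paper: invoke Lemma \ref{lem:at_the_money} for the series representation of $F(q)$, verify continuity of $f$ via stochastic continuity of the L\'evy process, and then apply Corollary \ref{cor:main} with the appropriate $k$ and $j$. Your argument is simply a more detailed version of the paper's one-line proof, with the uniform-integrability justification for continuity spelled out where the paper merely cites stochastic continuity.
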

\begin{proof}
Apply Lemma \ref{lem:at_the_money} and Corollary \ref{cor:main} together with the fact that $f(t)$ is a continuous function; this latter claim follows from the stochastic continuity of L\'{e}vy processes.
\end{proof}
\begin{example}[\textbf{ATM implied volatility}]\label{ex:bs}
We consider a simple case: $B$ is a hyperexponential process without jumps, i.e. $N = \hat{N} = 0$, such that $\sigma > 0$, i.e. $B$ yields the classic Black-Scholes model. Further let us assume $K = S_0 = 1$ and $r = 0$. Using Theorem \ref{theo:at_the_money}, we may compute the option price, which we denote $C_B(\sigma,T)$, symbolically up to a reasonably high order. For example, up to order $T^{9/2}$ we have:
\begin{align}\label{eq:imp_vol_series}
C_B(\sigma,T) =& \frac{\sigma }{\sqrt{2 \pi }}T^{1/2}-\frac{\sigma ^3 }{24 \sqrt{2 \pi }}T^{3/2}+\frac{\sigma ^5 }{640\sqrt{2 \pi } }T^{5/2}-\frac{\sigma ^7 }{ 21504\sqrt{2 \pi }}T^{7/2} + O(T^{9/2}).
\end{align}
We find that we can compute the first forty terms of this series symbolically in about 0.2 seconds. Given the form of the first few terms, we might ask if \eqref{eq:imp_vol_series} may also be interpreted as a series in powers of $\sigma T^{1/2}$; the fact that it can, is obvious from the Black-Scholes formula.\\ \\
\noindent Now let us suppose that we have a hyperexponential process $X$ with jumps, i.e. at least one of $N$ or $\hat{N}$ is not zero, with Gaussian component $\sigma > 0$, and that $X$ satisfies the risk neutral condition. Let $C_X(T)$ denote the option price under process $X$, again under the assumptions that $r = 0$ and $K=S_0=1$. Further, let us implicitly define a function $\hat{\sigma}(T)$ as that value, which yields
\begin{align}\label{eq:imp_comp}
C_B(\hat{\sigma}(T),T)= C_X(T), \quad T > 0.
\end{align} 
We are interested in finding an asymptotic expansion of $\hat{\sigma}(T)$, which known as the at-the-money implied volatility -- for a proper discussion of this concept consult Section 4 of \cite{tankcar} or Section 11 of \cite{Cont}.
\noindent To do this, we expand the left-hand side of \eqref{eq:imp_comp} in $s:=\hat{\sigma}(T)T^{1/2}$ (c.f. \eqref{eq:imp_vol_series}) and invert the series to solve for $s$ in powers of $w:=C_X(T)$, i.e.
\begin{align}\label{eq:inv_series_impv}
s = \hat{\sigma}(T)T^{1/2} =& \sqrt{2 \pi } w + \frac{\pi ^{3/2}}{6 \sqrt{2}} w^3 + \frac{7 \pi ^{5/2}}{240\sqrt{2} } w^5 \nonumber \\ &+ \frac{127 \pi ^{7/2} }{20160 \sqrt{2}}w^7+\frac{4369 \pi ^{9/2}}{2903040 \sqrt{2}} w^9+\frac{34807 \pi ^{11/2} }{91238400 \sqrt{2}}w^{11} + O(w^{13}).
\end{align}
Then we expand $C_X(T)$ using Theorem \ref{theo:at_the_money}, 
\begin{align}\label{eq:exp_series}
w = C_X(T)&= \frac{\sigma }{\sqrt{2 \pi }} \sqrt{T}  +\left(\frac{ 2 a+\sigma ^2}{4} + \sum_{\ell=1}^{N}\frac{a_{\ell} }{\rho _{\ell}-1} \right) T + \frac{3 a^2+6 a \sigma ^2+6 \eta _0 \sigma ^2+2 \sigma ^4}{6 \sqrt{2 \pi } \sigma }T^{3/2} \nonumber \\
 &+  \left(\vphantom{\sum_{1}^{N}} \frac{\left(2 a+\sigma ^2\right) \left(2 a+4 \eta _0+\sigma ^2\right)+4 \eta _1 \sigma ^2}{16}\right.\nonumber\\
 &\left.\phantom{\frac{1}{16}}+\sum_{\ell=1}^{N}\frac{a_{\ell} \left(\left(\rho_{\ell}-1\right) \left(2 \omega _{\ell,0}+2 a \rho_{\ell}+\rho_{\ell}^2 \sigma ^2\right)+a_{\ell} \left(2 \rho_{\ell}-1\right)\right)}{2 \left(\rho_{\ell}-1\right){}^2}\right)T^2 + O(T^{5/2})
%&+\frac{1}{120 \sqrt{2 \pi } \sigma ^3} \left(-5 a^4+20 a^3 \sigma ^2+20 \left(\eta _0 \left(3 a^2 \sigma ^2+6 a \sigma ^4+2 \sigma ^6\right) \right. \right. \nonumber \\  &\hphantom{+\frac{1}{120 \sqrt{2 \pi } \sigma ^3} (} \left. \left.+2 \eta _1 \left(3 a \sigma ^4  +\sigma ^6\right)+2 \eta _2 \sigma ^6+3 \eta _0^2 \sigma ^4\right)+60 a^2 \sigma ^4+40 a \sigma ^6+8 \sigma ^8\right)T^{5/2} \nonumber\\
%&+ \left(\sum_{\ell=1}^{N}\left[\frac{1}{24 \left(\rho _{\ell}-1\right){}^3}a_{\ell} \left(6 a_{\ell} \left(\rho _{\ell}-1\right) \left(\rho _{\ell} \left(-2 \left(\rho _{\ell}-1\right) \omega _{\ell,1}+4 \omega _{\ell,0}+\rho _{\ell} \sigma ^2\right)-2 \omega _{\ell,0}\right) \right.\right.\right. \nonumber\\ & \hphantom{+ \sum_{\ell=1}^{N}(}\left.\left.\vphantom{\frac{1}{2}}\left.+3 \left(\rho _{\ell}-1\right){}^2 \left(2 \omega _{\ell,0}+\rho _{\ell}^2 \sigma ^2\right){}^2+4 a_{\ell}^2 \left(3 \left(\rho _{\ell}-1\right) \rho _{\ell}+1\right)\right)\right]\right.\nonumber\\
%&\left.\phantom{+ \sum_{\ell=1}^{N}(}+\frac{1}{96} \left(6 \left(\eta _2 \sigma ^2 \left(4 a+\sigma ^2\right)+2 \eta _0^2 \left(2 a+\sigma ^2\right)+\eta _1 \left(2 a+\sigma ^2\right)^2 \right.\right.\right. \\ \nonumber & \left.\phantom{+ \sum_{\ell=1}^{N}(}\left.\left.+\eta _0 \left(\left(2 a+\sigma ^2\right)^2+4 \eta _1 \sigma ^2\right)+\eta _3 \sigma ^4\right)+\left(2 a+\sigma ^2\right)^3\right) \right)T^3  + O(T^{5/2}),
\end{align} 
where $\{\eta_{i}\}_{i\geq 0}$ and $\{\omega_{\ell,i}\}_{1 \leq N,\,i \geq 0}$ are defined in terms of the parameters of the process in \eqref{eq:eta} and \eqref{eq:omega} respectively. Finally we compose \eqref{eq:inv_series_impv} with \eqref{eq:exp_series} (for validity, computational formulas and convergence of series composition see Theorem 1.9b and 2.4d in \cite{cca}) to get, after dividing through by $T^{1/2}$,
\begin{align}\label{eq:sig_exp}
\hat{\sigma}(T) = \sigma + \sqrt{2 \pi } b_2 T^{1/2} +  \left(\frac{\pi ^{3/2} b_1^3}{6 \sqrt{2}}+\sqrt{2 \pi } b_3\right) T + \left(\frac{\pi ^{3/2} b_2 b_1^2}{2 \sqrt{2}}+\sqrt{2 \pi } b_4\right) T^{3/2} + O(T^{2}),
\end{align}
where $b_i$ is the coefficient of $T^{i/2}$ in \eqref{eq:exp_series} and \eqref{eq:sig_exp} is valid for $T$ small enough. If we assume no Gaussian component $\sigma$ in the underlying L\'{e}vy process $X$ then we get
\begin{align}\label{eq:imp_n_g}
\hat{\sigma}(T) = \sqrt{2 \pi } b_1 T^{1/2} + \sqrt{2 \pi } b_2 T^{3/2} +  \left(\frac{\pi ^{3/2} b_1^3}{6 \sqrt{2}}+\sqrt{2 \pi } b_3\right) T^{5/2}+ O(T^{7/2}),
\end{align}
where $b_i$ is the coefficient of $T^i$ in the expansion
\begin{align}\label{eq:imp_no_gauss}
C_X(T)&= \left(\delta_1 + \sum_{\ell = 1}^{N}\frac{a_{\ell}}{\rho_{\ell} - 1}\right)T + \left(\delta_2 + \sum_{\ell=1}^{N}\frac{a_{\ell} \left(2 \left(\rho_{\ell}-1\right) \left(\omega_{\ell,0}+a \rho_{\ell}\right)+a_{\ell} \left(2 \rho_{\ell}-1\right)\right)}{2 \left(\rho_{\ell}-1\right){}^2}\right)T^{2}\nonumber\\
& + \left(\delta_3 +\sum _{\ell =1}^N \frac{a_{\ell }}{6 \left(\rho _{\ell }-1\right){}^3}
\left [\vphantom{\sum _{\ell =1}^N}3 a_{\ell } \left(\rho _{\ell }-1\right) \left(\rho _{\ell } \left(\rho _{\ell } \left(a-\omega _{\ell ,1}\right)+2 \omega _{\ell ,0}+\omega _{\ell ,1}\right)-\omega _{\ell ,0}\right) \right.\right. \nonumber \\ & \left. \left.  \hphantom{( \delta_3 + + } + 3 \left(\rho _{\ell }-1\right){}^2 \left(\omega _{\ell ,0}+a \rho _{\ell }\right){}^2+a_{\ell }^2 \left(3 \left(\rho _{\ell }-1\right) \rho _{\ell }+1\right)\vphantom{\sum _{\ell =1}^N}\right]
\right)T^{3} + O(T^{4}),
\end{align}
where 
\begin{align*}
\delta_1 =a ,\quad \delta_2 = \frac{a \left(a+2 \eta _0\right)}{2},\quad\text{ and }\quad \delta_3 = \frac{1}{6} a \left(a^2+3 a \eta _1+3 \eta _0 \left(a+\eta _0\right)\right),
\end{align*}
when $a > 0$ and zero otherwise. In general $\delta_i$ is equal to the $(i+1)$-th term of the expansion of $\xi_M\varsigma_M\zeta'_{M}$ divided by $i!$ when $a > 0$ and zero otherwise (see Lemma \ref{lem:at_the_money} and Theorem \ref{theo:at_the_money}).
In both cases we can show agreement with the general one-term results found in Proposition 5 of \cite{tankcar}. These show that for processes with finite second moment and Gaussian components we have $\lim_{T\rightarrow 0}\hat{\sigma}(T) = \sigma$ (compare Formula \ref{eq:sig_exp}) and for finite variation processes the one-term asymptotic expansion is
\begin{align*}
\hat{\sigma}(t) \sim \sqrt{2\pi}\max\left\{\int{(e^{x} - 1)^+}\nu(\d x),\,\int{(1-e^{x})^+}\nu(\d x)\right\}T^{1/2},
\end{align*}
where $\nu(\d x)$ is the L\'{e}vy measure. It is easy to confirm that in our case 
\begin{align*}
\int(e^x - 1)\nu(\d x) = \sum_{\ell = 1}^{N}\frac{a_{\ell}}{\rho_{\ell} - 1}\quad\text{ and }\quad \int(1-e^x)\nu(\d x) = \sum_{\ell = 1}^{\hat{N}}\frac{\hat{a}_{\ell}}{\hat{\rho}_{\ell} + 1},
\end{align*}
and that the risk neutral condition implies $a = \sum_{\ell = 1}^{\hat{N}}\frac{\hat{a}_{\ell}}{\hat{\rho}_{\ell} + 1} - \sum_{\ell = 1}^{N}\frac{a_{\ell}}{\rho_{\ell} - 1}$. In other words, $b_1 = \delta_1 + \sum_{\ell = 1}^{N}\frac{a_{\ell}}{\rho_{\ell} - 1} = \max\left\{\int{(e^{x} - 1)^+}\nu(\d x),\,\int{(1-e^{x})^+}\nu(\d x)\right\}$.\\ \\
\noindent In principle, this technique gives us a method for computing the full short-time asymptotic expansion of the at-the-money implied volatility in terms of the original parameters of the process. This should be compared to the two term expansion of \cite{figruot}, which, to the best of the author's knowledge, is the best result to date for exponential L\'{e}vy models -- albeit in a more general setting. Practically, we see that the formulas quickly become large, and symbolic computation is time consuming. We have computed $\hat{\sigma}(T)$ symbolically up to 6 terms; the result up to six terms can be obtained by employing the software package on the author's website.\\ \\ 
\noindent Note that if computation of $\hat{\sigma}(T)$ is the goal, i.e. if we fix numeric values for the parameters in advance, then we can easily compute one hundred or more terms. In Figure \ref{fig:imp_vol_1} we compute 1, 2, 5, 10, and 100-term expansions of $\hat{\sigma}(T)$ for the process defined by Parameter Set 1, $\sigma = 0.042$ and $a=0.111875$ and also the errors $\vert C_B(\hat{\sigma}(T),T) - C_X(T)\vert$. In Figure \ref{fig:imp_vol_2} we do the same for the process defined by Parameter Set 2, $\sigma = 0$ and $a=0.103896$. The markers blue $\circ$, red $\square$, purple $\bigtriangleup$, green $\bigtriangledown$, and magenta $\diamond$ represent the 1, 2, 5, 10, and 100-term expansions respectively. We find it takes approximately 0.2 seconds to compute the 50 values of the 10-term approximation depicted in the figures and approximately 20 seconds for the 100-term approximation.\demo
\begin{figure}[!t]
\centering
\subfloat[]
{\label{fig:imp_vol_1_exp}\includegraphics[scale=0.48]{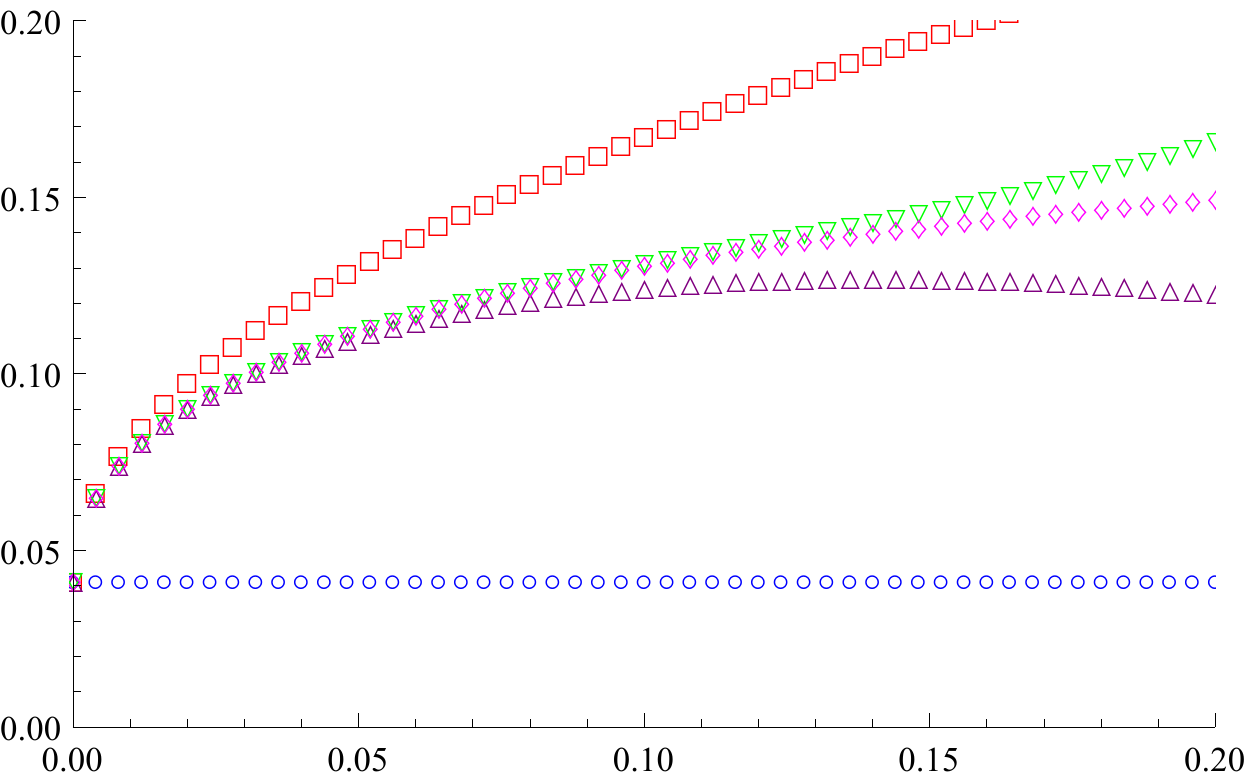}}
\subfloat[]
{\label{fig:imp_vol_1_err}\includegraphics[scale=0.48]{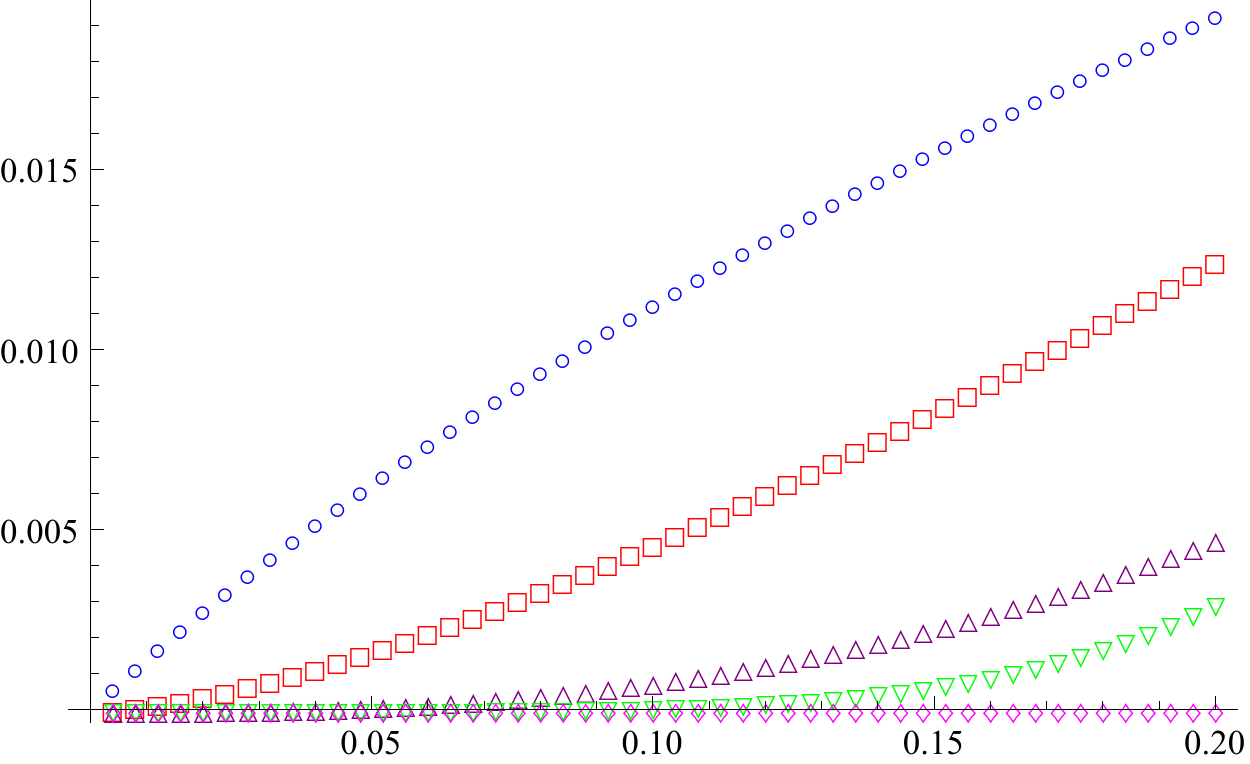}}
\subfloat[]
{\label{fig:imp_vol_1_zerr}\includegraphics[scale=0.48]{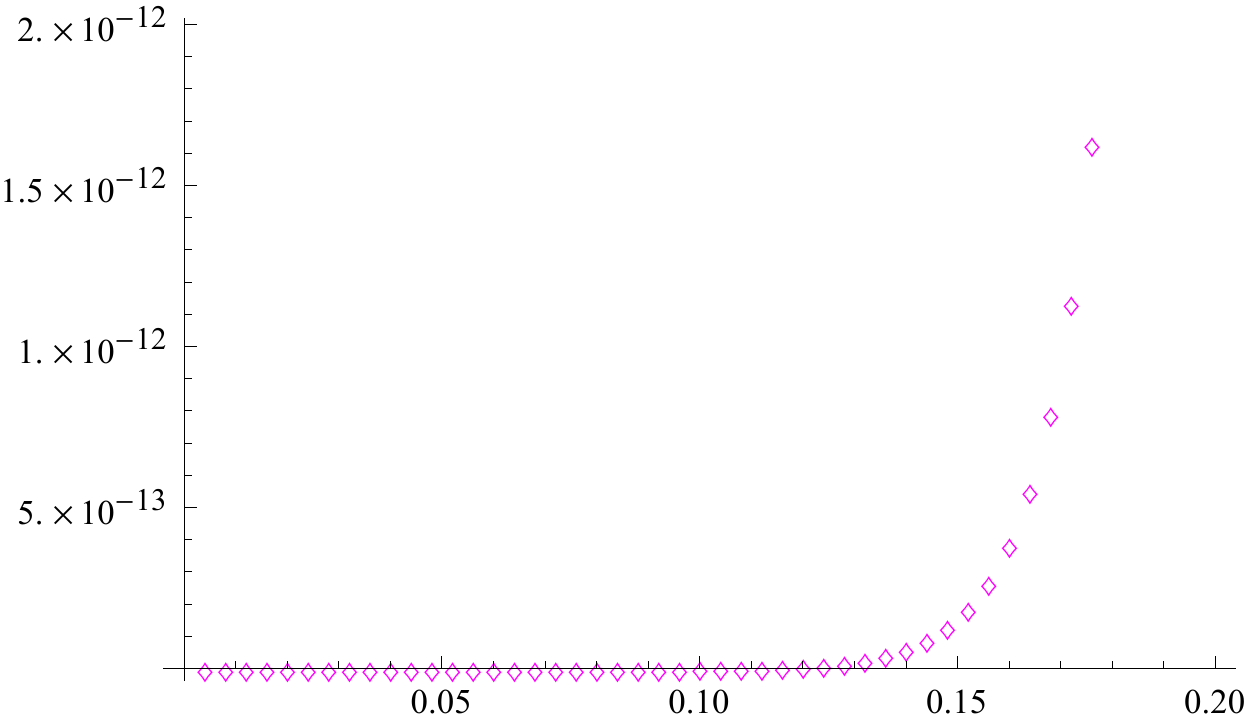}}
\caption{$\hat{\sigma}(T)$ for a process with Parameter Set 1, $\sigma = 0.042$, and $a=0.111875$ in \ref{fig:imp_vol_1_exp}. Errors for all expansions in Figure \ref{fig:imp_vol_1_err} and error of the 100-term expansion only in \ref{fig:imp_vol_1_zerr}.}
\label{fig:imp_vol_1}
\end{figure}
\begin{figure}[!t]
\centering
\subfloat[]
{\label{fig:imp_vol_2_exp}\includegraphics[scale=0.48]{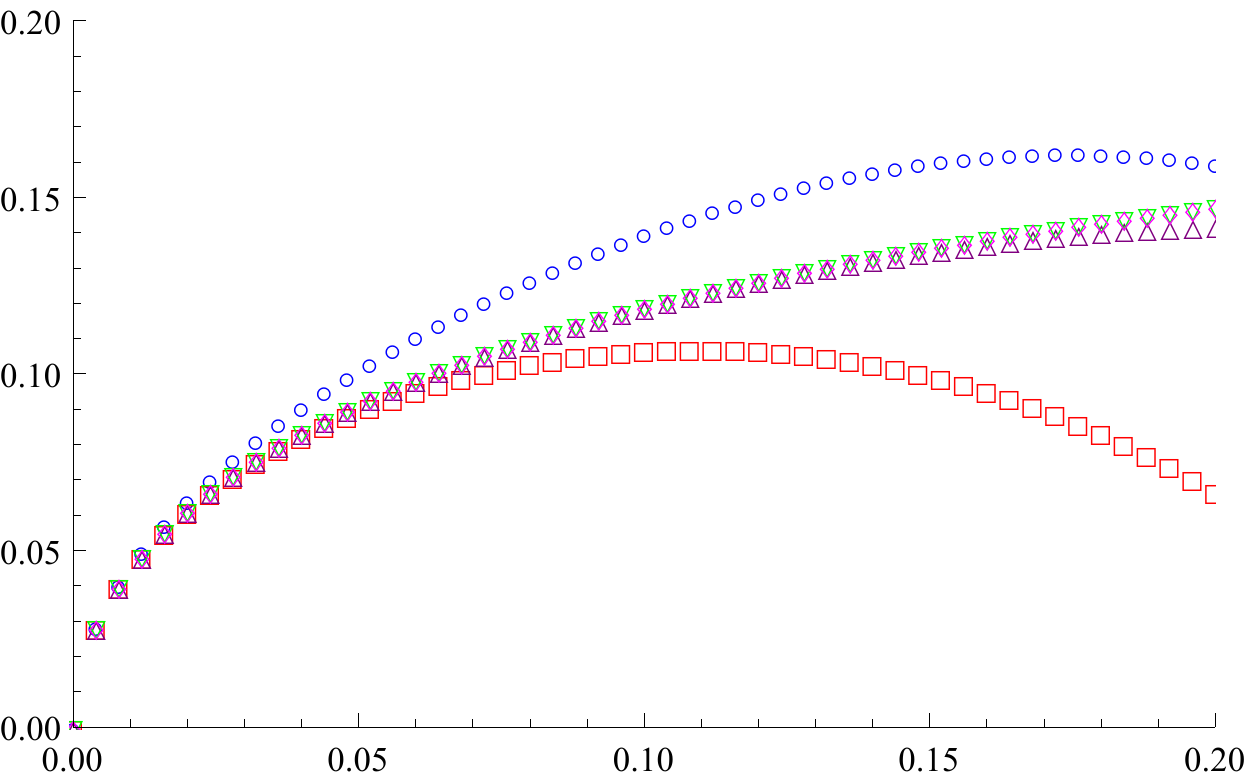}}
\subfloat[]
{\label{fig:imp_vol_2_err}\includegraphics[scale=0.48]{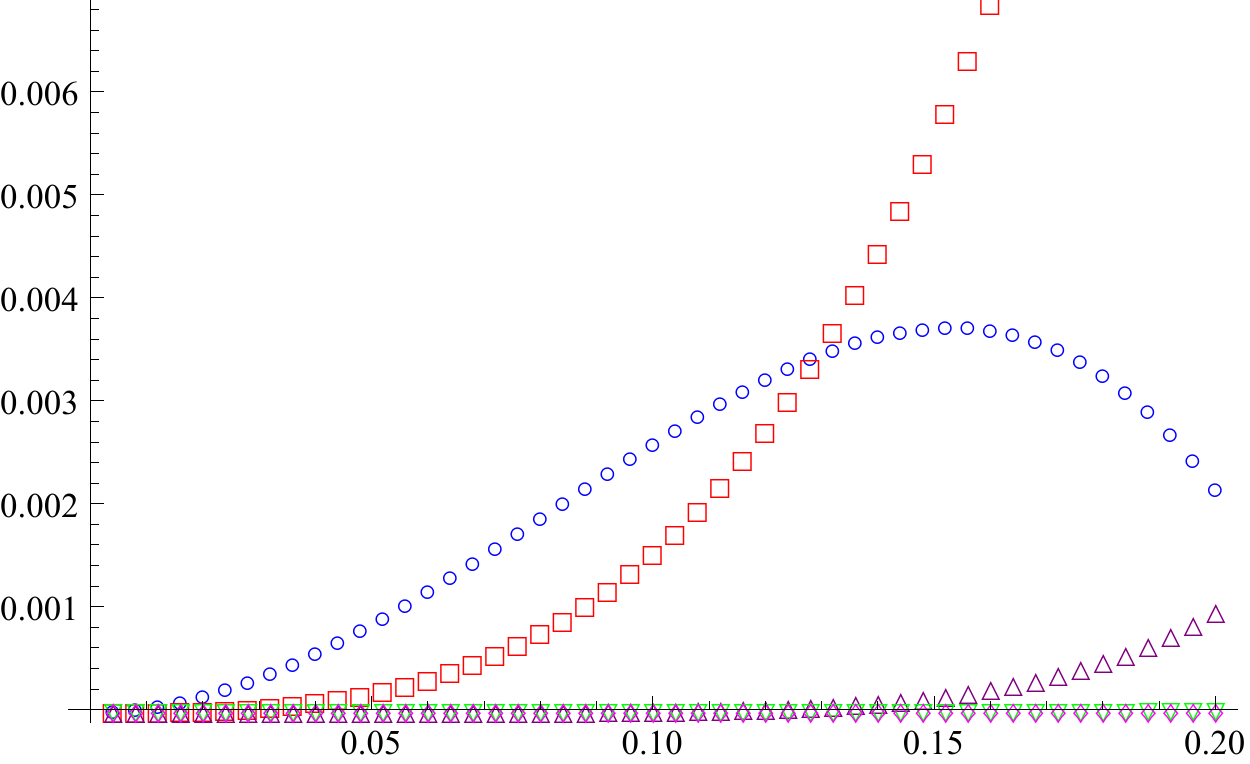}}
\subfloat[]
{\label{fig:imp_vol_2_zerr}\includegraphics[scale=0.48]{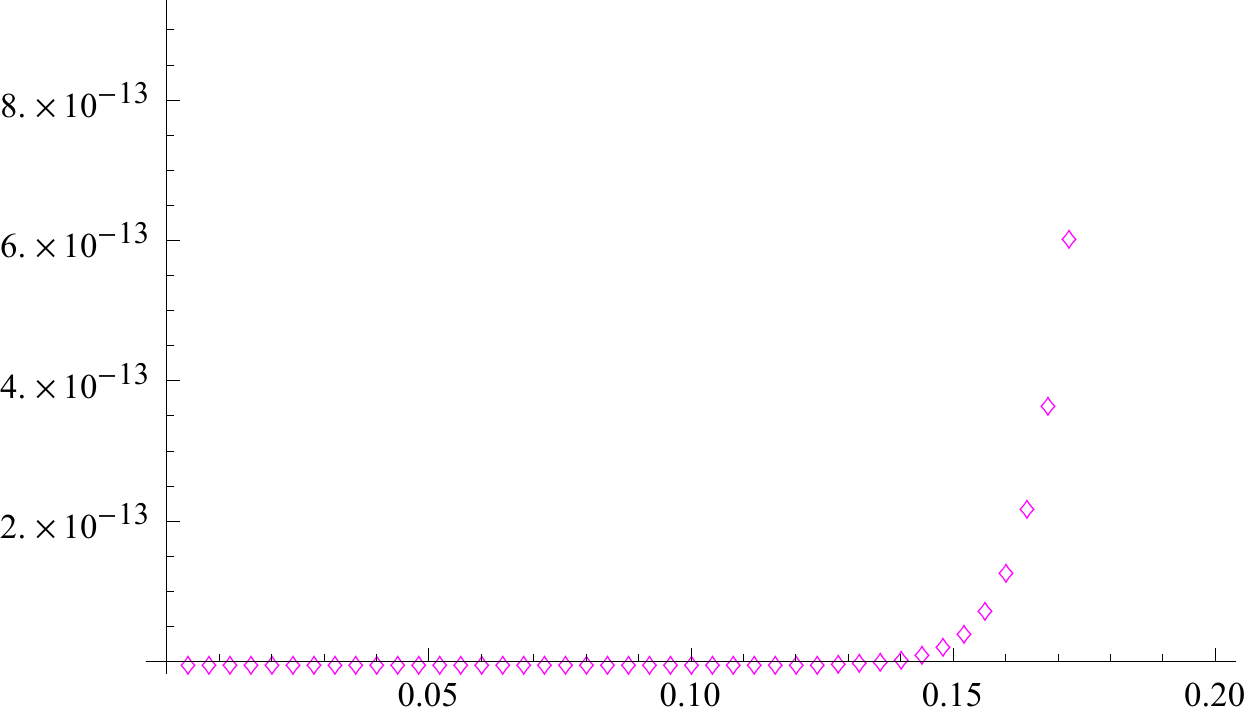}}
\caption{$\hat{\sigma}(T)$ for a process with Parameter Set 2, $\sigma = 0$ and $a=0.103896$ in \ref{fig:imp_vol_2_exp}. Errors for all expansions in Figure \ref{fig:imp_vol_2_err} and error of the 100-term expansion only in \ref{fig:imp_vol_2_zerr}.}
\label{fig:imp_vol_2}
\end{figure}
\end{example}
\noindent We may work out results for the in the money (ITM) and OTM cases, with the additional complication that we will have series with summands of the form $\frac{k^{-Dq^{1/2}}}{q^{n/2}}$ and $\frac{k^{-Dq}}{q^n}$ for some constant $D$. Before proceeding, let us consider these functions as Laplace transforms and establish some of their properties.
\begin{lemma}\label{lem:extra_laplace}
If $k > 1$ and $D > 0$ or $k < 1$ and $D < 0$ then
\begin{enumerate}[(i)]
\item 
\begin{align*}
\mathcal{L}\left\{ \frac{(t-c)^{n-1}}{(n-1)!}\ind(t \geq c)\right\}(q) = \frac{k^{-Dq}}{q^n}
\end{align*}
\item and
\begin{align*}
\mathcal{L}\left\{\frac{t^{-3/2}}{2\sqrt{\pi}(n-1)!} \int_c^{\infty} e^{-\tau^2/(4t)}\tau(\tau-c)^{n-1}\d \tau\right\}(q) = \frac{k^{-Dq^{1/2}}}{q^{n/2}}
\end{align*}
\end{enumerate}
where $n \in \mathbb{N}$ and $c = D\log(k)$.
\end{lemma}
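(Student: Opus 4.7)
The plan is to verify both identities by direct Laplace-transform computation. Observe first that the hypotheses on $k$ and $D$ imply $c = D\log(k) > 0$ in either sign regime, so the shifted function in (i) is supported on $[c,\infty)$ and the contour of integration in (ii) starts at a strictly positive number; this is exactly what is needed to produce an exponentially decaying factor and a convergent integral.

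For part (i), I would simply invoke the second shifting theorem together with the elementary pair $\mathcal{L}\{t^{n-1}/(n-1)!\}(q) = 1/q^n$. Since the function in (i) is the $c$-translate of $t^{n-1}/(n-1)!\cdot\ind(t\geq 0)$, its Laplace transform equals $e^{-cq}/q^n$, and $e^{-cq} = k^{-Dq}$ by the definition of $c$. For part (ii), I would start from the defining Laplace integral, swap the order of integration by Fubini (justified for $\re(q)$ large by non-negativity and an $L^1$ majorisation when $q > 0$ is real), and then evaluate the inner $t$-integral using the classical identity
\begin{align*}
\int_0^{\infty} e^{-qt - \tau^2/(4t)}\, t^{-3/2}\,\d t \;=\; \frac{2\sqrt{\pi}}{\tau}\, e^{-\tau q^{1/2}},\qquad \tau > 0,\ \re(q) > 0,
\end{align*}
which is precisely the relation underlying the standard inverse Laplace transform of $e^{-\tau q^{1/2}}$. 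After this substitution, the factors of $\tau$ and $2\sqrt{\pi}$ cancel and one is left with $\frac{1}{(n-1)!}\int_c^{\infty}(\tau-c)^{n-1} e^{-\tau q^{1/2}}\,\d\tau$; the change of variables $u = \tau - c$ and recognition of the gamma integral $\int_0^{\infty} u^{n-1} e^{-u q^{1/2}}\,\d u = (n-1)!/q^{n/2}$ then produces $e^{-c q^{1/2}}/q^{n/2} = k^{-D q^{1/2}}/q^{n/2}$, as claimed.

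The only genuinely technical step is the Fubini exchange: for real positive $q$ the integrand is non-negative so Tonelli applies immediately, and the result then extends to a right half-plane $\mathbb{H}_{q_0}$ by analytic continuation, exactly as was used earlier in Section \ref{sect:option} to validate \eqref{eq:medium}--\eqref{eq:hard}. Aside from this, everything is a bookkeeping exercise around one classical integral identity and one gamma-function computation, so I do not expect any further obstacles.
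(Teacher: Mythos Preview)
Your proof is correct and follows essentially the same approach as the paper. For (i) both you and the paper invoke the translation (shifting) rule combined with the elementary pair $\mathcal{L}\{t^{n-1}/(n-1)!\}(q)=q^{-n}$; for (ii) the paper applies the operational rule for passing from $G(s)$ to $G(s^{1/2})$ (citing Doetsch) to the result of (i), which is exactly your Fubini computation with the kernel identity $\int_0^\infty e^{-qt-\tau^2/(4t)}t^{-3/2}\,\d t = \tfrac{2\sqrt{\pi}}{\tau}e^{-\tau q^{1/2}}$ written out in full rather than quoted. Your version is thus a self-contained unpacking of the same argument.
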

\begin{proof}
Both (i) and (ii) can be proven using tables of Laplace transforms. For (i) see for example Entry 26 on pg. 339 in \cite{doetsch} together with the general rule for inverting $e^{-cs}G(s)$ where $G(s)$ has a known inverse (see for example pg. 337 in \cite{doetsch}). For (ii) we use result from (i) together with the rule for inverting $G(s^{1/2})$ where $G(s)$ has a known inverse; see again pg. 337 in \cite{doetsch}.
\end{proof}
\noindent Going forward let 
\begin{align}\label{eq:varphi}
\varphi_n(t;c) := \frac{t^{-3/2}}{2\sqrt{\pi}(n+1)!} \int_c^{\infty} e^{-\tau^2/(4t)}\tau(\tau-c)^{n+1}\d \tau,\quad t > 0,\, c \geq 0,\,n \in \mathbb{Z}^+.
\end{align}
\begin{lemma}\label{lem:extra_bnd}
The following inequality holds
\begin{align}\label{eq:eps_bnd}
\varphi_n(t;c) \leq \frac{t^{n/2}}{\Gamma\left(1 + \frac{n}{2}\right)},\quad t > 0,\, c \geq  0,\,n \in \mathbb{Z}^+,
\end{align}
with equality when $c = 0$.
It follows that $\lim_{t\rightarrow 0+}\varphi_n(t;c) = 0$ for $n \in \mathbb{N}$, $c \geq 0$.
\end{lemma}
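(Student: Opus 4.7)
The plan has two steps: evaluate $\varphi_n(t;0)$ exactly, and then use monotonicity in $c$ to deduce the bound for general $c \geq 0$.

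For the $c=0$ case, I would apply the substitution $u=\tau^2/(4t)$, so that $\tau=2\sqrt{tu}$ and $d\tau=\sqrt{t/u}\,du$. This transforms the integral into a Gamma integral:
\begin{align*}
\int_0^{\infty} e^{-\tau^2/(4t)}\tau^{n+2}\,d\tau
= 2^{n+2}t^{(n+3)/2}\int_0^{\infty} e^{-u}u^{(n+1)/2}\,du
= 2^{n+2}t^{(n+3)/2}\,\Gamma\!\left(\tfrac{n+3}{2}\right).
\end{align*}
Plugging into the definition of $\varphi_n$ gives $\varphi_n(t;0) = \frac{2^{n+1}\Gamma((n+3)/2)}{\sqrt{\pi}\,(n+1)!}\,t^{n/2}$. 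To see that the prefactor equals $1/\Gamma(1+n/2)$, I would apply Legendre's duplication formula $\Gamma(z)\Gamma(z+\tfrac12)=2^{1-2z}\sqrt{\pi}\,\Gamma(2z)$ at $z=(n+2)/2$, which yields $\Gamma((n+3)/2)=\frac{2^{-n-1}\sqrt{\pi}\,(n+1)!}{\Gamma(1+n/2)}$. Substituting this identity collapses the prefactor to $1/\Gamma(1+n/2)$, proving equality in \eqref{eq:eps_bnd} when $c=0$.

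For the general inequality, the key observation is pointwise monotonicity of the integrand in $c$. Since $0 \leq \tau-c \leq \tau$ for $\tau \geq c \geq 0$ and $n+1\geq 1$, we have $(\tau-c)^{n+1}\leq \tau^{n+1}$. Combined with the fact that extending the domain from $[c,\infty)$ to $[0,\infty)$ only adds non-negative mass, this gives
\begin{align*}
\int_c^{\infty} e^{-\tau^2/(4t)}\tau(\tau-c)^{n+1}\,d\tau
\;\leq\; \int_0^{\infty} e^{-\tau^2/(4t)}\tau^{n+2}\,d\tau,
\end{align*}
so $\varphi_n(t;c)\leq \varphi_n(t;0)=t^{n/2}/\Gamma(1+n/2)$, as required. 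The limit claim is then immediate: for $n \in \mathbb{N}$ (so $n \geq 1$), the upper bound $t^{n/2}/\Gamma(1+n/2)$ vanishes as $t \to 0+$, and $\varphi_n(t;c)\geq 0$ by non-negativity of the integrand, so the squeeze gives $\lim_{t\to 0+}\varphi_n(t;c)=0$.

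There is no real obstacle here; the only slightly delicate point is the correct bookkeeping of factorials and half-integer Gamma values, which is taken care of by the duplication formula. I would carry out the $c=0$ computation first because it simultaneously verifies the equality assertion and produces the bound used in step two.
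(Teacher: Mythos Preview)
Your proposal is correct and follows essentially the same approach as the paper: the paper also bounds $\varphi_n(t;c)$ by $\varphi_n(t;0)$ via monotonicity of the integrand, evaluates $\varphi_n(t;0)$ as $\frac{2^{n+1}\Gamma((n+3)/2)}{\sqrt{\pi}(n+1)!}t^{n/2}$, and reduces this to $t^{n/2}/\Gamma(1+n/2)$ using the Legendre duplication formula. The only cosmetic difference is that the paper states the monotonicity step first and then does the $c=0$ computation, whereas you reverse the order.
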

\begin{proof}
Clearly $\varphi_{n}(t;c) \leq \varphi_{n}(t;0)$, and the latter can be evaluated explicitly in terms of the gamma function as
\begin{align}\label{eq:cool_gam}
\varphi_{n}(t;0) = \frac{2^{n+1}\Gamma\left(\frac{n+2}{2} + \frac{1}{2}\right)}{\sqrt{\pi}\Gamma(n+2)}t^{n/2}.
\end{align}
Using the well-known duplication formula for the gamma function, i.e. $\Gamma(z)\Gamma(z + 1/2) = 2^{1-2z}\sqrt{\pi}\Gamma(2z)$, shows that the right- hand side of \eqref{eq:cool_gam} is in fact equal to $t^{n/2}/\Gamma(n/2 + 1)$.
\end{proof}
\noindent In the remainder of the paper we interpret the notation $\varphi_n(0;c)$ as $\lim_{t\rightarrow0+}\varphi_n(t;c) = 0$.
\begin{corollary}\label{cor:complx_bnd}
Let $z = x + iy \in K \subset \mathbb{H}$, where $K$ is a compact set, and define $u := \vert z \vert^2/x$. Then, there exists $M > 0$ such that 
\begin{align*}
\vert \varphi_n(z;c) \vert < M\frac{u^{n/2}}{\Gamma\left(1 + \frac{n}{2}\right)},\quad z\in K,\,c \geq 0,\,n \in \mathbb{Z}^+.
\end{align*}
\end{corollary}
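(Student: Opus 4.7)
The plan is to exploit the identity $\re(1/z) = x/|z|^2 = 1/u$ valid for $z = x+iy \in \mathbb{H}$, which lets me push the modulus inside the integral defining $\varphi_n(z;c)$ and reduce the complex estimate to the real estimate already proved in Lemma \ref{lem:extra_bnd}.

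First I would pull the absolute value inside the integral. For the exponential factor,
\begin{equation*}
\bigl\vert e^{-\tau^2/(4z)} \bigr\vert = e^{-\tau^2 \re(1/(4z))} = e^{-\tau^2/(4u)},
\end{equation*}
while the prefactor satisfies $|z^{-3/2}| = |z|^{-3/2}$. Since the remaining integrand $\tau(\tau-c)^{n+1}$ is non-negative on $[c,\infty)$, the triangle inequality yields
\begin{equation*}
|\varphi_n(z;c)| \leq \frac{|z|^{-3/2}}{2\sqrt{\pi}(n+1)!}\int_c^{\infty} e^{-\tau^2/(4u)} \tau(\tau-c)^{n+1}\d\tau = \frac{u^{3/2}}{|z|^{3/2}}\,\varphi_n(u;c).
\end{equation*}

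Next, since $u>0$ is real, I apply Lemma \ref{lem:extra_bnd} to bound $\varphi_n(u;c) \leq u^{n/2}/\Gamma(1 + n/2)$. After simplifying $u^{3/2}/|z|^{3/2} = |z|^{3/2}/x^{3/2}$, this gives
\begin{equation*}
|\varphi_n(z;c)| \leq \frac{|z|^{3/2}}{x^{3/2}} \cdot \frac{u^{n/2}}{\Gamma(1 + n/2)}.
\end{equation*}

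Finally, since $K \subset \mathbb{H}$ is compact, $x = \re(z)$ is bounded away from $0$ and $|z|$ is bounded above on $K$; hence the continuous function $z \mapsto |z|^{3/2}/x^{3/2}$ is bounded on $K$. Choosing $M$ strictly larger than its supremum yields the claimed strict inequality, uniformly in $n \in \zz^+$ and $c \geq 0$ since $M$ depends only on $K$. There is no genuine obstacle here: the only non-obvious step is recognizing that $\re(1/z) = 1/u$, after which the estimate reduces to the real case via Lemma \ref{lem:extra_bnd}.
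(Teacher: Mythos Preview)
Your proof is correct and follows essentially the same route as the paper: push the modulus inside the integral using $\re(1/z)=1/u$, recognize the resulting expression as $(u/|z|)^{3/2}\varphi_n(u;c)$, apply Lemma~\ref{lem:extra_bnd}, and then bound the continuous prefactor $(|z|/x)^{3/2}$ on the compact set $K$. Your handling of the strict inequality (taking $M$ strictly larger than the supremum) is in fact slightly more careful than the paper's version, which writes a strict inequality directly from Lemma~\ref{lem:extra_bnd} even though that lemma gives equality at $c=0$.
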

\begin{proof}
\begin{align*}
\vert \varphi_n(z,c) \vert \leq \frac{\vert z \vert ^{-3/2}}{2\sqrt{\pi}(n+1)!} \int_c^{\infty}\tau \vert e^{-\tau^2/(4z)}\vert (\tau-c)^{n+1}\d \tau = \left(\frac{u}{\vert z \vert}\right)^{3/2}\varphi_n(u;c) <\left(\frac{\vert z \vert}{x}\right)^{3/2}\frac{u^{n/2}}{ \Gamma\left(1 + \frac{n}{2}\right)},
\end{align*}
where the last inequality is due to Lemma \ref{lem:extra_bnd}. Since $K$ is compact and $z \mapsto \left(\vert z \vert/x\right)^{3/2}$ is a continuous function on $\mathbb{H}$ we know that its supremum is attained and finite on $K$. This gives the result.
\end{proof}
\noindent For computation of the functions $\{\varphi_n(t;c)\}_{n\geq 0}$ we introduce the $\{\text{Hh}_n(z)\}_{n\geq -1}$ functions, which are defined as follows:
\begin{align}\label{eq:Hh_def}
 \text{Hh}_n(z) := \frac{1}{n!}\int_{z}^{\infty}e^{-w^2/2}(w - z)^{n}\d w,\quad n \in \mathbb{Z}^+,\, z \in \c,
\end{align}
and $\text{Hh}_{-1}(z) := e^{z^2/2}$. We observe that $\text{Hh}_0(z) = (\sqrt{\pi}/\sqrt{2})\text{erfc}(-z/\sqrt{2})$ where \\ $\text{erfc}(z) =(2/\sqrt{\pi}) \int_{z}^{\infty}e^{-w^2/2}\d w$ is the complementary error function. Conveniently we have the three-term recurrence
\begin{align}\label{eq:h_recurs}
\text{Hh}_n(z) = \frac{1}{n}\text{Hh}_{n-2}(z) - \frac{x}{n}\text{Hh}_{n-1}(z),\quad n \in \mathbb{N}
\end{align}
and also the simple recurrence
\begin{align}\label{eq:hh_deriv}
\frac{\d }{\d z} \text{Hh}_n(z) = - \text{Hh}_{n-1}(z);
\end{align}
see Formulas 7.2.5, 7.2.9, 7.2.10, and 19.14 in \cite{AS}
\begin{lemma}\label{lem:extra_analytic}
For fixed $c \geq 0$, functions $\{\varphi_n(t;c)\}_{n\geq 0}$ can be analytically continued to $z \in \c^+$ and we have
\begin{align}\label{eq:phi_and_h}
\varphi_n(z;c) = \frac{2^{(n+1)/2}z^{n/2}}{\sqrt{\pi}}\textnormal{Hh}_n(c/\sqrt{2z}),\quad z \in \mathbb{H},\, c \geq 0,\,n \in \mathbb{Z}^+.
\end{align} 
\end{lemma}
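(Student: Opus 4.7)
The plan is to reduce $\varphi_n(t;c)$ to a constant multiple of $\text{Hh}_n(c/\sqrt{2t})$ on the positive real axis by a one-step integration by parts followed by a change of variables; the right-hand side of \eqref{eq:phi_and_h} will then provide the analytic continuation to $\mathbb{H}$ by inspection.

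First I would attack the integral in \eqref{eq:varphi}. The factor $\tau$ in the integrand is exactly what is needed to antidifferentiate the Gaussian kernel, so integration by parts with $u=(\tau-c)^{n+1}$ and $dv = \tau e^{-\tau^2/(4t)}\d\tau$ is natural. The antiderivative is $v=-2t\,e^{-\tau^2/(4t)}$, and both boundary terms vanish: at $\tau=c$ because $u$ has a zero of order $n+1\ge 1$, and at $\tau=\infty$ because of the Gaussian decay. What remains is
\begin{align*}
\int_c^\infty e^{-\tau^2/(4t)}\tau(\tau-c)^{n+1}\d\tau = 2t(n+1)\int_c^\infty e^{-\tau^2/(4t)}(\tau-c)^n\d\tau.
\end{align*}
Next I would apply the substitution $\tau=\sqrt{2t}\,w$, which sends the lower limit $c$ to $c/\sqrt{2t}$ and gives $(\tau-c)^n = (2t)^{n/2}(w-c/\sqrt{2t})^n$ and $\d\tau = \sqrt{2t}\,\d w$. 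The resulting integral is precisely $n!\,\text{Hh}_n(c/\sqrt{2t})$ times the power $(2t)^{(n+1)/2}$, by the definition \eqref{eq:Hh_def}. Multiplying through by the prefactor $t^{-3/2}/[2\sqrt{\pi}(n+1)!]$ and simplifying the powers of $2$ and $t$ collapses the constants to $2^{(n+1)/2}/\sqrt{\pi}$, matching \eqref{eq:phi_and_h} on $(0,\infty)$.

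For the analytic continuation claim, I would argue that the right-hand side of \eqref{eq:phi_and_h} is itself analytic on $\mathbb{H}$: the factor $z^{n/2}$ uses the principal branch, which is analytic on $\mathbb{C}^+ \supset \mathbb{H}$; the map $z \mapsto c/\sqrt{2z}$ is likewise analytic on $\mathbb{H}$; and $\text{Hh}_n$ is entire, since differentiating the defining integral under the integral sign in \eqref{eq:Hh_def} is justified by the Gaussian decay of the integrand uniformly on compact sets. Since the right-hand side of \eqref{eq:phi_and_h} is analytic on $\mathbb{H}$ and agrees with $\varphi_n(t;c)$ on the positive real axis, it serves as the required analytic continuation, and the stated identity holds throughout $\mathbb{H}$.

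No single step looks especially hard: the only thing to watch is that the boundary term at $\tau=c$ in the integration by parts indeed vanishes, which requires $n+1\ge 1$ (satisfied since $n\in\mathbb{Z}^+$), and that the constants from the change of variables are correctly tracked through $(2t)^{(n+1)/2}\cdot 2t\cdot t^{-3/2}=2^{(n+3)/2}t^{n/2}$, which combines with the $1/[2\sqrt{\pi}]$ prefactor to give the asserted $2^{(n+1)/2}/\sqrt{\pi}$.
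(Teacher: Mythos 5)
Your proof is correct and follows essentially the same route as the paper: one integration by parts and the substitution $\tau = \sqrt{2t}\,w$ give the identity on $(0,\infty)$ (the paper carries out the same calculation starting from the $\textnormal{Hh}_n$ integral and arriving at $\varphi_n$, rather than the other way around), and the continuation to $\c^+$ is read off from the right-hand side since $\textnormal{Hh}_n$ is entire and the principal branches of $z^{n/2}$ and $c/\sqrt{2z}$ are analytic off $(-\infty,0]$. The paper additionally verifies, via dominated convergence, that the integral in \eqref{eq:varphi} is already analytic in $z$ on $\mathbb{H}$ -- a fact it reuses in the proof of Corollary \ref{cor:complx_bnd} -- but that extra step is not required to establish the lemma as stated.
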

\begin{proof}
First we use the dominated convergence theorem to show that differentiation on the integral is permissible in \eqref{eq:varphi} provided $z \in \mathbb{H}$, i.e. that $\varphi_n(z;c)$ is analytic on $\mathbb{H}$. Restricting $\textnormal{Hh}_n(z)$ to real values $z >0$, performing one iteration of integration by parts in \eqref{eq:Hh_def}, and then employing the substitution $w = \tau/(\sqrt{2t})$ establishes the identity \eqref{eq:phi_and_h} for real $z > 0$. Since $\textnormal{Hh}_n(z)$ is an entire function the claim follows by an argument of analytic continuation.
\end{proof}
\begin{remark}
In \cite{kou_sad} the formulas derived for European call option prices under the double exponential model are given in terms of series of integrals of the $\{\text{Hh}_n(x)\}_{n\geq -1}$ functions. It is therefore not surprising that they appear here; however, we will see that we do not need to integrate further, we can express our formulas for call and put options in terms of the functions $\{\varphi_n(t;c)\}_{n\geq 0}$ directly. \rdemo
\end{remark}
\noindent Now we are ready to proceed with the OTM case for call and put options; note that once these are established the ITM price is then easily computed via the put-call parity. As alluded to, things are a little more complicated due to the fact that $F(q)$ and $W(q)$ involve terms of the form $\{k^{-\zeta_{\ell}}\}_{1\leq \ell \leq M}$ and $\{k^{\hat{\zeta}_\ell}\}_{1\leq\ell\leq \hat{M}}$. To keep the notation as simple as possible, let us define
\begin{align}\label{eq:beta_1}
\beta_{\ell} := k^{-\zeta_\ell},\quad 1 \leq \ell \leq N,\quad\quad \hat{\beta}_{\ell} := k^{\hat{\zeta}\ell}\quad 1\leq \ell \leq \hat{N}.
\end{align}
Further, for the case $\sigma > 0$ we define
\begin{align}\label{eq:beta_2}
\beta_{M} := k^{(2q/\sigma^2)^{1/2}}k^{-\zeta_M}\quad\text{ and }\quad \hat{\beta}_{\hat{M}}:= k^{-(2q/\sigma^2)^{1/2}}k^{\hat{\zeta}_{\hat{M}}},
\end{align}
and when $\sigma= 0$ and $a > 0$ (resp. $a < 0$) we set
\begin{align*}
\beta_M := k^{q/a}k^{-\zeta_{M}}\quad \left(\text{ resp. }\hat{\beta}_{\hat{M}} := k^{q/a}k^{\hat{\zeta}_{\hat{M}}}\right).
\end{align*}
Finally, for the following Lemma \ref{lem:in_the_money}, Theorem \ref{theo:in_the_money} and Theorem \ref{theo:out_the_money} we define for each $n \in \mathbb{Z}^+$,
\begin{align}\label{eq:coeffs}
b_n := s_n(m_{1,n},\,m_{2,n},\,\ldots,\,m_{N,n}),\quad c_n := m_{M,n},\quad m_{\ell,n} := m_{n}(\bar{\xi}_{\ell,n},\,\bar{\varsigma}_{\ell,n},\,\bar{\beta}_{\ell,n},\,\bar{\zeta'}_{\ell,n}), \quad 1 \leq \ell \leq M,
\end{align}
and
\begin{align*}
\hat{b}_n := s_n(\hat{m}_{1,n},\,\hat{m}_{2,n},\,\ldots,\,\hat{m}_{\hat{N},n}),\quad \hat{c}_n := \hat{m}_{\hat{M},n},\quad \hat{m}_{\ell,n} := m_{n}(\bar{\hat{\xi}}_{\ell,n},\,\bar{\hat{\varsigma}}_{\ell,n},\,\bar{\hat{\beta}}_{\ell,n},\,\overline{\hat{\zeta'}}_{\ell,n}), \quad 1 \leq \ell \leq \hat{M},
\end{align*}
\begin{lemma}\label{lem:in_the_money}
Suppose that $X$ fulfils the risk neutral condition and $S_0 < K$.
\begin{enumerate}[(i)]
\item If $\sigma > 0$ then there exists $R > 0$ such that the following equality holds for $q \in \c^+_{R}$ and the series converge for $q \in \c_R$:
\begin{align*}
F(q) = k\sum_{n=2}^{\infty}\frac{b_n}{q^{n}} + k\sum_{n=3}^{\infty}\frac{c_nk^{-(2q/\sigma^2)^{1/2}}}{q^{n/2}}.
\end{align*}
\item If $\sigma = 0$ and $a > 0$ then there exists $R > 0$ such that the following equality holds and the series converge for $q \in \mathbb{C}_R$:
\begin{align*}
F(q) = k\sum_{n=2}^{\infty}\frac{b_n}{q^{n}} + k\sum_{n=2}^{\infty}\frac{c_nk^{-q/a}}{q^n}.
\end{align*}
\item If $\sigma = 0$ and $a \leq 0$ then there exists $R > 0$ such that the following equality holds and the series converges for $q \in \mathbb{C}_R$:
\begin{align*}
F(q) = k\sum_{n=2}^{\infty}\frac{b_n}{q^{n}}.
\end{align*}
\end{enumerate}
\end{lemma}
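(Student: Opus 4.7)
The plan is to start from identity \eqref{eq:medium}, namely $F(q) = k\sum_{\ell=1}^{M} \xi_\ell\,\varsigma_\ell\,\zeta'_\ell\,k^{-\zeta_\ell}$, which is valid on some right half-plane, and to expand each of the $M$ summands as a series in powers of $q^{-1}$ or $q^{-1/2}$ using the factor-by-factor expansions supplied by Corollaries \ref{cor:with_sig}, \ref{cor:without_sig}, and \ref{cor:norm}. The $\ell$-th summand is then assembled via the multiplication rule \eqref{def:mn}, and the sum over $\ell$ via the addition rule \eqref{def:sn}. Taking $R$ to be the maximum of the finitely many radii appearing in those corollaries produces a common domain of convergence $\c_R$ on which all manipulations are legitimate.

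For the regular indices $1 \leq \ell \leq N$ (which, in case (iii), already exhaust the sum because the classification in Section \ref{sect:key_hyp} forces $M = N$ whenever $\sigma = 0$ and $a \leq 0$), Corollary \ref{cor:norm} supplies series in $q^{-n}$ for all four factors $\xi_\ell,\,\varsigma_\ell,\,\beta_\ell := k^{-\zeta_\ell},\,\zeta'_\ell$, with leading orders $0,0,0,2$ respectively. The product therefore vanishes through order $q^{-1}$, and summing over $\ell$ yields coefficients $b_n = s_n(m_{1,n},\ldots,m_{N,n})$ matching \eqref{eq:coeffs}. This immediately establishes case (iii).

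For the additional index $\ell = M$ appearing in cases (i) and (ii), I would switch to Corollary \ref{cor:with_sig} (when $\sigma > 0$) or to Corollary \ref{cor:without_sig} (when $\sigma = 0$ and $a > 0$). The subtle point is part (iii) of each: $k^{-\zeta_M}$ is not itself a power series, but factors as $k^{-(2q/\sigma^2)^{1/2}}$ (respectively $k^{-q/a}$) times a genuine series. I would pull this exponential prefactor out front and absorb the residual series into $\beta_M$ as in \eqref{eq:beta_2}, so that the four factors $\xi_M,\varsigma_M,\beta_M,\zeta'_M$ become honest power series. Reading off leading orders from parts (i)--(iv) of the relevant corollary, these are $1,1,0,1$ in half-integer units in case (i) and $1,1,0,0$ in integer units in case (ii), so the products begin at $n = 3$ and $n = 2$ respectively, with coefficient $c_n = m_{M,n}$; this produces the second sums in the claimed formulas.

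The main bookkeeping obstacle is tracking the leading orders accurately and remembering that the exponential prefactor must stay outside all series manipulations (otherwise it would be swept into infinitely many coefficients and destroy convergence). Everything else is essentially mechanical: convergence of the coefficient-side series on $\c_R$ is automatic from the referenced corollaries, and the validity of the identity on $\c_R^+$ — the natural domain, since that is where the principal branch of $q^{1/2}$ is single-valued — is inherited from the analytic-continuation argument already carried out in the paragraph preceding \eqref{eq:xi_varsig}, so no fresh continuation step is required.
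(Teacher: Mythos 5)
Your proposal is correct and follows the same route the paper sketches (the paper's own proof is the one-liner ``Use Formula \eqref{eq:medium} and apply Corollaries \ref{cor:with_sig}, \ref{cor:without_sig}, and \ref{cor:norm}''); you have simply filled in the bookkeeping -- the leading-order counts $0+0+0+2=2$ for $1\le\ell\le N$, $1+1+0+1=3$ for $\ell=M$ when $\sigma>0$, $1+1+0+0=2$ for $\ell=M$ when $\sigma=0,\,a>0$, and the observation that $M=N$ when $\sigma=0,\,a\le 0$ -- all of which check out against the corollaries and \eqref{eq:coeffs}.
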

\begin{proof}
Use Formula \ref{eq:medium} and apply Corollaries \ref{cor:with_sig}, \ref{cor:without_sig}, and \ref{cor:norm}.
\end{proof}
\begin{theorem}[\textbf{OTM call option price}]\label{theo:in_the_money}
Suppose that $X$ fulfils the risk neutral condition and $S_0 < K$.
\begin{enumerate}[(i)]
\item Let $\sigma > 0$, then
\begin{align*}
C(T) = e^{-r T}K  \left(\sum_{n=1}^{\infty}b_{n+1}\frac{T^n}{n!} + \sum_{n=1}^{\infty}c_{n+2}\varphi_n\left(T;c\right)\right),\quad T \geq 0,
\end{align*}
where $c = \frac{\sqrt{2}\log(k)}{\sigma}$. The first series on the right-hand side converges for $T \in \c$ and the second converges for $T \in \mathbb{H}$. The second series converges uniformly on compact subsets of $\mathbb{H}$ and may be differentiated termwise.
\item Let $\sigma = 0$ and $a >0$, then
\begin{align*}
C(T) = e^{-r T}K\left(\sum_{n=1}^{\infty}b_{n+1}\frac{T^{n}}{n!} + \ind(T \geq c)\sum_{n=1}^{\infty}c_{n+1}\frac{(T-c)^{n}}{n!}\right),\quad T \geq 0,
\end{align*}
where $c=\frac{\log(k)}{a}$. Each series on the right-hand side converges for $T\in\c$.
\item Let $\sigma = 0$ and $a \leq 0$, then
\begin{align*}
C(T) = e^{-r T}K\sum_{n=1}^{\infty}b_{n+1}\frac{T^{n}}{n!},\quad T \geq 0,
\end{align*}
where the series on the right-hand side converges for $T \in \c$.
\end{enumerate}
\end{theorem}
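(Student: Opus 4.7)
The plan is to combine Lemma \ref{lem:in_the_money} (which gives $F(q)$ as a sum of two series in $q^{-n}$ and/or $q^{-n/2}$, some of the latter multiplied by $k^{-Dq^{1/2}}$ or $k^{-Dq}$) with the termwise Laplace inversion rules from Corollary \ref{cor:main} and Lemma \ref{lem:extra_laplace}, justifying the exchange of summation and inversion via Doetsch's Theorem \ref{theo:tbt}. Throughout I will use that $f(t) = \mathbb{E}[(e^{X_t}-k)^+]$ is continuous on $[0,\infty)$ by the stochastic continuity of L\'evy processes, so uniqueness of Laplace transforms applies.

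I will start with case (i). The first series $k\sum b_n q^{-n}$ satisfies the hypotheses of Corollary \ref{cor:main} directly (with $k = 1$, $j = 1$), inverting to $k\sum_{n\geq 1} b_{n+1} T^n/n!$, convergent on all of $\mathbb{C}$. For the second series I apply Lemma \ref{lem:extra_laplace} (ii) with $D = \sqrt{2}/\sigma$, so that the term $c_n k^{-(2q/\sigma^2)^{1/2}}/q^{n/2}$ (for $n \geq 3$) is the Laplace transform of $c_n\,\varphi_{n-2}(t;c)$, where $c = \sqrt{2}\log(k)/\sigma$. To apply Theorem \ref{theo:tbt}, I must exhibit a common half-plane on which the series of absolute-value Laplace transforms $G_n(q) := \int_0^\infty |c_n \varphi_{n-2}(t;c)| e^{-qt}\d t$ converges. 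Here Lemma \ref{lem:extra_bnd} is the key tool: the inequality $\varphi_{n-2}(t;c) \leq t^{(n-2)/2}/\Gamma(n/2)$ together with the identity $\mathcal{L}\{t^{(n-2)/2}/\Gamma(n/2)\}(q) = q^{-n/2}$ yields $G_n(q) \leq |c_n|\,q^{-n/2}$ for real $q > 0$. Since $\sum c_n q^{-n/2}$ converges on $\mathbb{C}_R$ by Lemma \ref{lem:in_the_money}, choosing real $q_0 > R$ gives $\sum G_n(q_0) < \infty$, so Theorem \ref{theo:tbt} applies on $\mathbb{H}_{q_0}$. Combining the two inversions, then multiplying by $e^{-rT}S_0$ and using $k = K/S_0$, yields the formula in part (i).

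For the convergence and termwise differentiation claims on $\mathbb{H}$, I use Lemma \ref{lem:extra_analytic} to extend each $\varphi_n(\,\cdot\,;c)$ to an analytic function on $\mathbb{H}$, and Corollary \ref{cor:complx_bnd} to control $|\varphi_n(z;c)|$ on a compact $K \subset \mathbb{H}$ by $M u^{n/2}/\Gamma(n/2+1)$ with $u = |z|^2/\re(z)$ bounded on $K$. Since $\sum |c_{n+2}| u^{n/2}/\Gamma(n/2+1)$ converges (by the same $G_n$ bound applied at a real point exceeding $\sup_K u$), the $\varphi_n$ series converges uniformly on $K$; standard results on uniformly convergent series of analytic functions then give analyticity and termwise differentiability. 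Cases (ii) and (iii) proceed in exactly the same way. In (ii) I use Lemma \ref{lem:extra_laplace} (i) with $D = 1/a$, so that $c_n k^{-q/a}/q^n$ inverts to $c_n \mathbb{I}(t \geq c)(t-c)^{n-1}/(n-1)!$ with $c = \log(k)/a$; Doetsch's theorem is verified by the trivial bound $(t-c)^{n-1}\mathbb{I}(t\geq c)/(n-1)! \leq t^{n-1}/(n-1)!$, whose Laplace transform is $q^{-n}$. Case (iii) is the simplest, since only the first (polynomial-type) series appears and Corollary \ref{cor:main} finishes it immediately.

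The main obstacle is the bookkeeping around Doetsch's theorem for the $\varphi_n$ terms: one must exhibit a bound on $|\varphi_n(t;c)|$ that (a) is independent of $c \geq 0$ and (b) produces a series of Laplace transforms $\sum G_n(q)$ converging on the same half-plane where the original $F(q)$-series converges. Both requirements are delivered cleanly by the sharp inequality $\varphi_n(t;c) \leq \varphi_n(t;0) = t^{n/2}/\Gamma(n/2+1)$ from Lemma \ref{lem:extra_bnd}, so once that lemma is in hand the argument is essentially routine. The extension of convergence to all of $\mathbb{H}$ (as opposed to merely $T \geq 0$) is handled by the analytic continuation of $\varphi_n$ in Lemma \ref{lem:extra_analytic} together with the half-plane bound in Corollary \ref{cor:complx_bnd}.
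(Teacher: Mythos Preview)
Your proposal is correct and matches the paper's approach: termwise Laplace inversion of the two pieces of $F(q)$ via Theorem \ref{theo:tbt}, with the hypothesis for the $\varphi_n$-series verified through the bound $\varphi_n(t;c)\le t^{n/2}/\Gamma(n/2+1)$ of Lemma \ref{lem:extra_bnd}, followed by the Weierstrass $M$-test on compact $K\subset\mathbb{H}$ using Corollary \ref{cor:complx_bnd} and Lemma \ref{lem:extra_analytic}. One cosmetic point: for the first series you should invoke Theorem \ref{theo:tbt} rather than Corollary \ref{cor:main}, since the latter presupposes a continuous function whose transform is exactly that series, whereas you only know this for the full $F(q)$; the paper handles this by inverting $H_1$ and $H_2$ separately via Theorem \ref{theo:tbt} and then identifying $k(h_1+h_2)$ with $f$ by continuity and uniqueness of Laplace transforms.
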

\begin{proof}
We prove only (i) as (ii) and (iii) are derived in similar fashion. First, let us consider the series
\begin{align*}
H_2(q) := \sum_{n=3}^{\infty}\frac{c_nk^{-(2q/\sigma^2)^{1/2}}}{q^{n/2}}.
\end{align*}
We will proceed in three steps. \textbf{Step 1:}
By employing Theorem \ref{theo:tbt} we may conclude that
\begin{align}\label{eq:comp_lap}
\mathcal{L}\left\{\sum_{n=1}^{\infty}c_{n+2}\frac{t^{n/2}}{\Gamma\left(\frac{n}{2} + 1\right)}\right\}(q) = k^{(2q/\sigma^2)^{1/2}}H_2(q)
\end{align}
and that the series $s(t) :=\sum_{n=1}^{\infty}c_{n+2}\frac{t^{n/2}}{\Gamma\left(\frac{n}{2} + 1\right)}$  converges for almost all $t \geq 0$. This, however, implies that it converges for all $t \in \c$. \textbf{Step 2:} Employing Theorem \ref{theo:tbt} again together with Lemma \ref{lem:extra_laplace} we find that
\begin{align}\label{eq:base_series} \mathcal{L}\left\{\sum_{n=1}^{\infty}c_{n+2}\varphi_n\left(t;c\right)\right\}(q) = H_2(q)
\end{align}
and that the convergence of the series $h_2(t) := \sum_{n=1}^{\infty}c_{n+2}\varphi_n\left(t;c\right)$ is absolute for almost all $t \geq 0$. However, via Lemma \ref{lem:extra_bnd}, the fact that $s(t)$ converges for all $t \in \c$, and the comparison test for series, it is clear that absolute convergence of $h_2(t)$ holds for all $t \geq 0$. \textbf{Step 3:} To show analyticity we use Weierstrass' $M$-test and Theorem on Uniformly Convergent Series of Analytic Functions (see Theorems 15.2 and 15.6 in \cite{mark}). Together these tell us that if for every compact set $K \subset \mathbb{H}$ we can find a sequence $\{M_n\}_{n\geq 1}$ such that $\vert c_{n+2}\varphi_n(z;c) \vert < M_n$, for $z =x + iy \in K,\,n \in \mathbb{N}$, and $\sum_{n=1}^{\infty}M_n < \infty$, then $h_2(t)$: a) is an analytic function on $\mathbb{H}$; b) converges uniformly on compact subsets of $\mathbb{H}$; and c) can be differentiated term-by-term and the resulting series is again uniformly convergent on compact subsets of $\mathbb{H}$. From Corollary \ref{cor:complx_bnd} we know that for every compact $K \subset \mathbb{H}$
\begin{align*}
\vert c_{n+2}\varphi_n(z;c) \vert \leq M \vert c_{n+2}\vert\frac{u^{n/2}}{\Gamma(1+\frac{n}{2})},\quad z \in K,\,n \in \mathbb{Z}^+,
\end{align*}
where $u := u(z) := \vert z \vert^2/x$ and $M$ is some positive constant. By continuity of $u$ and compactness of $K$ the function $u(z)$ must attain at maximum $0 \leq S < \infty$ on $K$. From our discussion, it is clear that the series $s(t)$ converges absolutely at $S$, and so, if we take
\begin{align*}
M_n = M\vert c_{n+2} \vert \frac{S^{n/2}}{\Gamma(1+\frac{n}{2})},
\end{align*}
then the above criteria are met and analyticity, termwise differentiation, and uniform convergence follow. Now we may repeat Step 1 for the series $H_1(q) := \sum_{n=2}^{\infty}\frac{b_n}{q^2}$ to show that
\begin{align*}
\mathcal{L}\left\{\sum_{n=1}^{\infty}b_{n+1}\frac{t^n}{n!}\right\}(q) = H_1(q)
\end{align*}
and $h_1(t) := \sum_{n=1}^{\infty}b_{n+1}\frac{t^n}{n!}$ converges for all $t \in \c$. Then $k(h_1(t) + h_2(t))$ defines a continuous function whose Laplace transform $k(H_1(q) + H_2(q))$ is equal to the Laplace transform of the continuous function $f(t)$. It follows that $f(t) = k(h_1(t) + h_2(t))$, $t \geq 0$ and the result follows.
\end{proof}
\noindent From Formula \ref{eq:hard} it is clear that we may develop an analogous expansion of the OTM put option price (i.e. when $k<1$) using exactly the same methods in Lemma \ref{lem:in_the_money} and Theorem \ref{theo:in_the_money}. Rather than developing this in detail, we simply gather the results in the following theorem.
\begin{theorem}[\textbf{OTM put option price}]\label{theo:out_the_money}
Suppose that $X$ fulfils the risk neutral condition and $S_0 > K$.
\begin{enumerate}[(i)]
\item Let $\sigma > 0$, then
\begin{align*}
P(T) = e^{-r T}K  \left(\sum_{n=1}^{\infty}\hat{b}_{n+1}\frac{T^n}{n!} + \sum_{n=1}^{\infty}\hat{c}_{n+2}\varphi_n\left(T;c\right)\right),\quad T \geq 0,
\end{align*}
where $c = -\frac{\sqrt{2}\log(k)}{\sigma}$. The first series on the right-hand side converges for $T \in \c$ and the second converges for $T \in \mathbb{H}$. The second series converges uniformly on compact subsets of $\mathbb{H}$ and may be differentiated termwise.
\item Let $\sigma = 0$, $a < 0$, then
\begin{align*}
P(T) = e^{-r T}K\left(\sum_{n=1}^{\infty}\hat{b}_{n+1}\frac{T^{n}}{n!} +\ind(T \geq c)\sum_{n=1}^{\infty}\hat{c}_{n+1}\frac{(T-c)^{n}}{n!}\right),\quad T \geq 0,
\end{align*}
where $c=\frac{\log(k)}{a}$. Each series on the right-hand side converges for $T\in\c$.
% Note that the indicator $\ind(T \geq c)$ is not necessary here, since $c < 0$ and $T \geq 0$.
\item Let $\sigma = 0$, $a \geq 0$, then
\begin{align*}
P(T) = e^{-r T}K\sum_{n=1}^{\infty}\hat{b}_{n+1}\frac{T^{n}}{n!},\quad T \geq 0,
\end{align*}
where the series on the right-hand side converges for $T \in \c$.
\end{enumerate}
\end{theorem}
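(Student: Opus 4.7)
The plan is to follow exactly the two-step template of Theorem \ref{theo:in_the_money}, with Formula \ref{eq:hard} replacing Formula \ref{eq:medium} and the ``hatted'' quantities $\hat{\zeta}_\ell,\hat{\xi}_\ell,\hat{\varsigma}_\ell,\hat{\beta}_\ell,\hat{\zeta}'_\ell$ replacing their unhatted counterparts throughout. The symmetry of the setup means no new analytic ideas are required; the obstacle is purely bookkeeping, namely tracking the signs and branches that appear in the expansions of the negative solutions $\hat{\zeta}_\ell$ via Theorem \ref{theo:extra_roots}(i)/(ii) and Corollaries \ref{cor:with_sig} and \ref{cor:without_sig}, and then verifying that the conditions of Lemma \ref{lem:extra_laplace} hold with the correct sign of $D$ in the put regime $k<1$.

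First I would prove the put analogue of Lemma \ref{lem:in_the_money}. Starting from $W(q)=k\sum_{\ell=1}^{\hat M}\hat\zeta'_\ell\hat\xi_\ell\hat\varsigma_\ell k^{\hat\zeta_\ell}$, I apply the multiplication rule \eqref{def:mn} to obtain the coefficients $\hat m_{\ell,n}$ and then the summation rule \eqref{def:sn} to obtain $\hat b_n$ and $\hat c_n$ as defined in \eqref{eq:coeffs}. In case (i) ($\sigma>0$), the series expansion of $k^{\hat\zeta_{\hat M}}$ in Corollary \ref{cor:with_sig}(iii) carries the anomalous factor $k^{(2q/\sigma^2)^{1/2}}$, which I absorb into $\hat\beta_{\hat M}$ so that the residual series in $q^{-n/2}$ begins at $n=3$; the regular contributions from $\ell<\hat M$ plus the residual of $\ell=\hat M$ assemble into a series in $q^{-n}$ starting at $n=2$. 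In case (ii) ($\sigma=0$, $a<0$) the anomalous factor from Corollary \ref{cor:without_sig}(iii) is $k^{-q/a}$ and the analysis is entirely parallel but in integer powers. In case (iii) ($\sigma=0$, $a\ge 0$) there is no additional negative solution $\hat\zeta_{\hat M}$, so only the regular integer-power series appears.

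Second I would invert termwise. For the regular series $\sum \hat b_n/q^n$, Corollary \ref{cor:main} delivers $\sum \hat b_{n+1}T^n/n!$ convergent for all $T\in\c$. For case (i), Lemma \ref{lem:extra_laplace}(ii) applies with $D=-\sqrt 2/\sigma<0$ and $k<1$, so that $c=D\log(k)=-\sqrt 2\log(k)/\sigma>0$ matches the statement, giving the $\varphi_n(T;c)$ kernels. For case (ii), Lemma \ref{lem:extra_laplace}(i) applies with $D=-1/a>0$ and $k<1$, giving the delayed-polynomial kernel $\mathbf 1(T\ge c)(T-c)^n/n!$ with $c=\log(k)/a>0$. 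Theorem \ref{theo:tbt} (Doetsch) justifies the termwise interchange of sum and Laplace inversion in each case, yielding absolute convergence for almost every $t\ge 0$.

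Third I would upgrade the convergence to everywhere and, in case (i), to locally uniform convergence on $\mathbb{H}$. The regular piece converges on all of $\c$ by Corollary \ref{cor:main}. For the $\varphi_n$ series in case (i), Corollary \ref{cor:complx_bnd} bounds $|\hat c_{n+2}\varphi_n(z;c)|$ by $M\,|\hat c_{n+2}|\,u^{n/2}/\Gamma(1+n/2)$ on a compact $K\subset\mathbb{H}$, where $u=|z|^2/\mathrm{Re}(z)$ is continuous and hence bounded on $K$; since the companion series $\sum \hat c_{n+2}\,T^{n/2}/\Gamma(n/2+1)$ from Corollary \ref{cor:main} converges for all $T\in\c$, the Weierstrass $M$-test gives uniform convergence, analyticity, and the right to differentiate termwise, exactly as in the call proof. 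Finally, stochastic continuity of $X$ gives continuity of $w(t)$ on $[0,\infty)$, and uniqueness of the continuous function with the established Laplace transform identifies $w(T)$ with the series on the right-hand side, completing the proof. The one subtle point to check is the branch convention in the expansions of $\hat\zeta_{\hat M}$: the $(-1)^n$ factors in Theorem \ref{theo:extra_roots}(i) and in the formulas of Corollary \ref{cor:with_sig}(iii) are already absorbed into the definitions of the hatted quantities in \eqref{eq:beta_1}--\eqref{eq:beta_2}, so the multiplication rule \eqref{def:mn} produces the correct $\hat m_{\hat M,n}$ without further adjustment.
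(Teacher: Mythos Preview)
Your proposal is correct and follows exactly the paper's approach: the paper does not give a separate proof of Theorem \ref{theo:out_the_money} at all, but simply states that ``we may develop an analogous expansion of the OTM put option price \ldots\ using exactly the same methods in Lemma \ref{lem:in_the_money} and Theorem \ref{theo:in_the_money}'', which is precisely the two-step template you outline (put analogue of the lemma via Formula \eqref{eq:hard} and the hatted corollaries, then termwise inversion with Doetsch, Lemma \ref{lem:extra_laplace}, Lemma \ref{lem:extra_bnd}/Corollary \ref{cor:complx_bnd}, and the $M$-test). One small bookkeeping slip: in case (ii) the anomalous factor coming from Corollary \ref{cor:without_sig}(iii) is $k^{-q/a}$, so matching $k^{-Dq}$ gives $D=1/a<0$ (not $D=-1/a>0$); this is exactly the branch ``$k<1$ and $D<0$'' of Lemma \ref{lem:extra_laplace}, and then $c=D\log(k)=\log(k)/a>0$ as you state.
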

\begin{example}[\textbf{Computation of option prices}]
We can use the results of Theorems \ref{theo:at_the_money}, \ref{theo:in_the_money}, and \ref{theo:out_the_money} to compute European call and put option prices. For all the cases, e.g. ITM call options, not covered by these theorems explicitly, we can use the put-call parity relationship. To discuss our computational approach, let us assume that we wish to compute an OTM call option price for a process $X$ with $\sigma > 0$ so that the formula of Theorem \ref{theo:out_the_money} (i) applies. Retracing our steps to Lemma \ref{lem:in_the_money}, and recalling the definition \eqref{eq:coeffs} of the coefficients $\{b_n\}_{n\geq 2}$ and $\{m_{\ell,n}\}_{n\geq 0,\,1 \leq \ell \leq M}$ we see that the first term in our computation of the price is of the form
\begin{align*}
\sum_{n=1}^{\infty}b_{n+1}\frac{T^n}{n!} = \sum_{\ell=1}^{N}\sum_{n=1}^{\infty}m_{\ell,n+1}\frac{T^n}{n!},
\end{align*}
i.e. a sum of $N$ series. Since we assumed $\sigma > 0$ we also need to compute the extra series
\begin{align*}
\sum_{n=1}^{\infty}c_{n+2}\varphi_n\left(T;c\right).
\end{align*}
In total then, we need to compute the sum of $M = N+1$ series. Since these series may converge at different rates it makes sense to compute/truncate them individually so as not to expend more effort than is necessary. Going forward let $\mathbf{M}$ (resp.$\hat{\mathbf{M}}$) denote the vector in $\mathbb{N}^{M}$ (resp. $\mathbb{N}^{\hat{M}}$) containing the points of truncation of our series. \\ \\
\noindent We compute a ITM call option price for a process with Parameter Set 1, $\sigma = 0.042$, $a = 0.141875$ and option parameters $K = 90$, $S_0 = 95$, and $r = 0.03$. The results are shown in Table \ref{tab:opt_calc}. In Tables \ref{tab:opt_calc_atm} and \ref{tab:opt_calc_otm} ATM and OTM call option prices are shown for a process with Parameter Set 2, $\sigma = 0$, $a = 0.133896$ and option parameters $S_0=K=300$ and $S_0 = 10$ and $K=11$ respectively. \\ \\
\noindent As basis of comparison, we also include a price computed via the numerical Fourier inversion technique of \cite{Carr}. Specifically, we consider the price $C(T)$ as a function not of $T$ but of $s := \log(K)$ the log-strike. For the purpose of this example only we will write $C_T(s)$ to denote the call price.  Then, for fixed $T$ it is easy to show, that $\mathcal{L}\{e^{rT}C_T(s)\}(z) = \frac{S_0^{(1-z)}e^{T\psi(1-z)}}{z(z-1)}$, $1 - \rho_1 < \Re(z) < 0$. Thus we can compute the price by evaluating
\begin{align}\label{eq:carr_laplace}
C_T(s) = \frac{e^{-rT}}{2\pi i}\int_{c + i\r}\frac{S_0^{(1-z)}e^{T\psi(1-z)}}{z(z-1)}\d z = e^{-rT}\times \Re \left(\frac{e^{cs}}{\pi}\int_{0}^{\infty}e^{isu}\frac{S_0^{(1-c-iu)}e^{T\psi(1-c-iu)}}{(c+iu)(c+iu-1)}\d u \right),
\end{align}
where $1 -\rho_1 < c < 0$. We compute this integral using the aforementioned (Example \ref{ex:barr}) Filon quadrature, and by truncating at an upper limit of $10^5$. Depending on $T$ we use between $2\times 10^5$ and $4 \times 10^5$ discretization steps. 
 \\\ \\
\noindent We find that prices can generally be computed quickly and accurately using the analytical formulas for maturities less than 1. For larger maturities, the series may converge too slowly (see for example the $T=0.9$ case of Table \ref{tab:opt_calc}). The major benefit of this approach, is that option prices for a range of short to medium length maturities can be computed very quickly. For example, we see from Table \ref{tab:opt_calc} that setting $\hat{\mathbf{M}} = (15, 15, 15, 15, 15, 30, 30, 60)$ gives accurate prices for expiries smaller than $0.5$. The time (including the time to calculate the coefficients of the series) to compute ITM call option prices for 100 different expiries in the interval $[0,0.5]$  is approximately three seconds. When $\sigma = 0$, i.e. when we do not need to compute the functions $\{\varphi_n(T;c)\}_{n\geq 1}$, the time is faster still. Setting $\mathbf{M} = (10, 10, 10, 10, 10, 12, 12, 16)$ in the example from Table \ref{tab:opt_calc_otm} we find we can accurately compute 100 prices with expiries in the interval $[0,0.5]$ in less than half of a second.\\ \\
\noindent By comparison, even if we store all common elements needed for numerical inversion of the Laplace transform for different expiries, our computation takes at least 16 seconds for 100 prices for either of the cases mentioned above. Notice that a fast Fourier approach is not applicable here, since the transform is in the variable $s$ and not $T$. If we wanted to use a fast Fourier approach, we could consider inverting the expressions \eqref{eq:medium} -- \eqref{eq:hard} which are transforms in $T$. This would come at the expense of computing the solutions $\{\zeta_{\ell}\}_{1 \leq \ell \leq M}$, $\{\hat{\zeta}_{\ell}\}_{1 \leq \ell \leq \hat{M}}$ for each step of the algorithm (c.f. Example \ref{ex:barr}). \demo
\renewcommand{\arraystretch}{1.5}
\begin{table}
\begin{center}
\begin{tabular}{ | c || c | c | c | c | c || c |}
  \hline 
\diagbox{$\hat{\mathbf{M}}$}{$T$} & $0.01$ & $0.1$ & $0.2$ & $0.5$ & $0.9$ & Time \\
\hline \hline 
$(2,2,2,2,2,4,4,8)$ & 5.09974 & 5.94683 & 6.78061 & 7.23382 &-24.48527 & 0.160\\
\hline
$(4, 4, 4, 4, 4, 6, 6, 10)$ & 5.09975 &  5.94753 & 6.79477 & 6.67534 & -158.31075 & 0.196  \\
\hline
$(6, 6, 6, 6, 6, 8, 8, 12)$ & 5.09975 & 5.94755 & 6.79746 & 7.95322 & -265.66525 & 0.244 \\
\hline
$(8, 8, 8, 8, 8, 10, 10, 14)$ & 5.09975 & 5.94755 & 6.79760 & 9.19484 & 155.77080 & 0.300  \\
\hline
$(10, 10, 10, 10, 10, 12, 12, 16)$ & 5.09975&  5.94755 & 6.79759 & 9.49692 & 1457.00312 & 0.340 \\
\hline
$(15, 15, 15, 15, 15, 30, 30, 60)$ &5.09975& 5.94755& 6.79759& 8.95421& -12.68581 & 1.892 \\
\hline
$(15, 15, 20, 20, 20, 40, 40, 100)$ & 5.09975 & 5.94755 & 6.79759 & 8.95421 & 11.32685 & 5.892 \\
\hline
$(15, 15, 25, 25, 35, 50, 50, 110)$ & 5.09975 & 5.94755 & 6.79759 & 8.95421 & 11.29891 & 7.692 \\
\hline
\hline
Fourier comparison & 5.09975 & 5.94755 & 6.79759 & 8.95421 & 11.29892 & 1.94 \\
\hline
\end{tabular}
\caption{Call option prices for the process with Parameter set 1, $\sigma =0.042$, $a=0.141875$. Option parameters $K=90$, $S_0=95$, $r=0.03$. Time is the time to compute the 5 option prices in seconds.}\label{tab:opt_calc}
\end{center}
\end{table}
\renewcommand{\arraystretch}{1.5}
\begin{table}
\begin{center}
\begin{tabular}{ | c || c | c | c | c | c || c |}
  \hline 
\diagbox{$\mathbf{M}$}{$T$} & $0.01$ & $0.1$ & $0.2$ & $0.5$ & $0.9$ & Time \\
\hline \hline 
$(2,2,2,2,2,4,4,8)$ & 0.61954 & 5.25306 & 9.07478 & \
-515.10360 & -72839.82457& 0.12\\
\hline
$(4, 4, 4, 4, 4, 6, 6, 10)$ & 0.61954 & 5.25119& 9.20011& \
-637.32154& -296414.11217& 0.148  \\
\hline
$(6, 6, 6, 6, 6, 8, 8, 12)$ & 0.61954 & 5.25121 & 9.23466 &\
-544.43333& -844397.61170 & 0.188 \\
\hline
$(8, 8, 8, 8, 8, 10, 10, 14)$ & 0.61954 & 5.25121 & 9.23939 & \
-339.58140 & -1,776,370.99729 & 0.232  \\
\hline
$(10, 10, 10, 10, 10, 12, 12, 16)$ & 0.61954 & 5.25121& 9.23987& \
-157.10396& -2,871,257.37102& 0.252 \\
\hline
$(15, 15, 15, 15, 15, 30, 30, 60)$ &0.61954 & 5.251214 & 9.23991 & \
18.14807 & 26.98182 & 1.304 \\
\hline
$(15, 15, 20, 20, 20, 40, 40, 100)$ & 0.61954 & 5.25121 & 9.23991 & \
18.14807 & 26.98185 & 4.076 \\
\hline
\hline
Fourier comparison & 0.61954 & 5.25121 &  9.23991 & 18.14807 & 26.98185 & 1.04\\
\hline
\end{tabular}
\caption{Call option prices for the process with Parameter set 2, $\sigma =0$, $a=0.133896$. Option parameters $S_0=K=300$, $r=0.03$. Time is the time to compute the 5 option prices in seconds.}\label{tab:opt_calc_atm}
\end{center}
\end{table}
\renewcommand{\arraystretch}{1.5}
\begin{table}
\begin{center}
\begin{tabular}{ | c || c | c | c | c | c || c |}
  \hline 
\diagbox{$\mathbf{M}$}{$T$} & $0.01$ & $0.1$ & $0.2$ & $0.5$ & $0.9$ & Time \\
\hline \hline 
$(2,2,2,2,2,4,4,8)$ & 0.00128 & 0.01532 & 0.03684 & \
0.14880 & 0.44918 & 0.18\\
\hline
$(4, 4, 4, 4, 4, 6, 6, 10)$ & 0.00128 & 0.01532 & 0.03678 & \
0.14508 & 0.38900 & 0.184  \\
\hline
$(6, 6, 6, 6, 6, 8, 8, 12)$ & 0.00128 & 0.01532 & 0.03678 & \
0.14486 & 0.38104 & 0.228 \\
\hline
$(8, 8, 8, 8, 8, 10, 10, 14)$ & 0.001282 & 0.01532 & 0.03678 & \
0.14489 & 0.38465 & 0.280  \\
\hline
$(10, 10, 10, 10, 10, 12, 12, 16)$ & 0.00128 & 0.01532 & 0.03678 & \
0.14488 & 0.38463 & 0.308 \\
\hline
$(15, 15, 15, 15, 15, 20, 20, 30)$ &0.00128 & 0.01532 & 0.03678 & \
0.14488 & 0.38460  & 0.748 \\
\hline
$(15, 15, 15, 15, 15, 30, 30, 60)$ & 0.00128 & 0.01532 & 0.03678 & \
0.14488 & 0.38460 & 1.712 \\
\hline
\hline
Fourier comparison & 0.00128 & 0.01532 & 0.03678 & 0.14488 & 0.38460 & 1.02 \\
\hline
\end{tabular}
\caption{Call option prices for the process with Parameter set 2, $\sigma =0$, $a=0.133896$. Option parameters $S_0=10$, $K=11$, $r=0.03$. Time is the time to compute the 5 option prices in seconds.}\label{tab:opt_calc_otm}
\end{center}
\end{table}
\end{example}
\begin{example}[\textbf{Option Theta}]
Theorems \ref{theo:at_the_money}, \ref{theo:in_the_money}, and \ref{theo:out_the_money} also lead directly to analytic expressions for the option thetas, i.e. $\frac{\partial C}{\partial T}$ and $\frac{\partial P}{\partial T}$. These can be derived simply by differentiating the formulas found in the theorems; note that all series can be differentiated termwise to yield again convergent series and that the derivatives of the functions $\{\varphi_n(T;c)\}_{n\geq 0}$ are easily derived from the derivatives of the $\{\text{Hh}_n(T;c)\}_{n\geq -1}$ functions and the recursive formula \eqref{eq:hh_deriv}. As an example, consider an ITM put option such that the underlying process has $\sigma = 0$ and $a > 0$.  According to the put-call parity and Theorem \ref{theo:in_the_money} (ii) the price is given by
\begin{align*}
P(T) = e^{-r T}K\left(\sum_{n=1}^{\infty}b_{n+1}\frac{T^{n}}{n!} + \ind(T \geq c)\sum_{n=1}^{\infty}c_{n+1}\frac{(T-c)^{n}}{n!} + 1\right) - S_0,
\end{align*}
so that for $T < c$ we have
\begin{align*}
P'(T) = e^{-r T}K\left(\sum_{n=1}^{\infty}b_{n+1}\frac{T^{n-1}}{(n-1)!} - r \sum_{n=1}^{\infty}b_{n+1}\frac{T^{n}}{n!} - r\right),
\end{align*}
and for $T > c$ we have
\begin{align*}
P'(T) = e^{-r T}K\left(\sum_{n=1}^{\infty}b_{n+1}\frac{T^{n-1}}{(n-1)!} + \sum_{n=1}^{\infty}c_{n+1}\frac{(T-c)^{n-1}}{(n-1)!} -r\sum_{n=1}^{\infty}b_{n+1}\frac{T^{n}}{n!}  -r\sum_{n=1}^{\infty}c_{n+1}\frac{(T-c)^{n}}{n!} - r \right).
\end{align*}
At $c$ the derivative will jump by the amount $e^{-rc}Kc_2 = e^{-rc}K(ak^{\eta_0/a}) = aK (K/S_0)^{(\eta_0 - r)/a}$. The put option price for an option with
parameters $S_0 = 10$, $K=11$, and $r = 0.03$, where the underlying process is defined by Parameter set 2, $\sigma = 0$ and $a=0.133896$, is plotted together with its time derivative in Figure \ref{fig:price_deriv}. At the point $c = 0.71182$ the derivative has a jump of size $0.03559$. We use the truncation $\mathbf{M} = (15,15,15,15,15,20,20,30)$ to compute both the price and the derivative.\demo
\begin{figure}[!t]
\centering
\subfloat[]
{\label{fig:put_price}\includegraphics[scale=0.48]{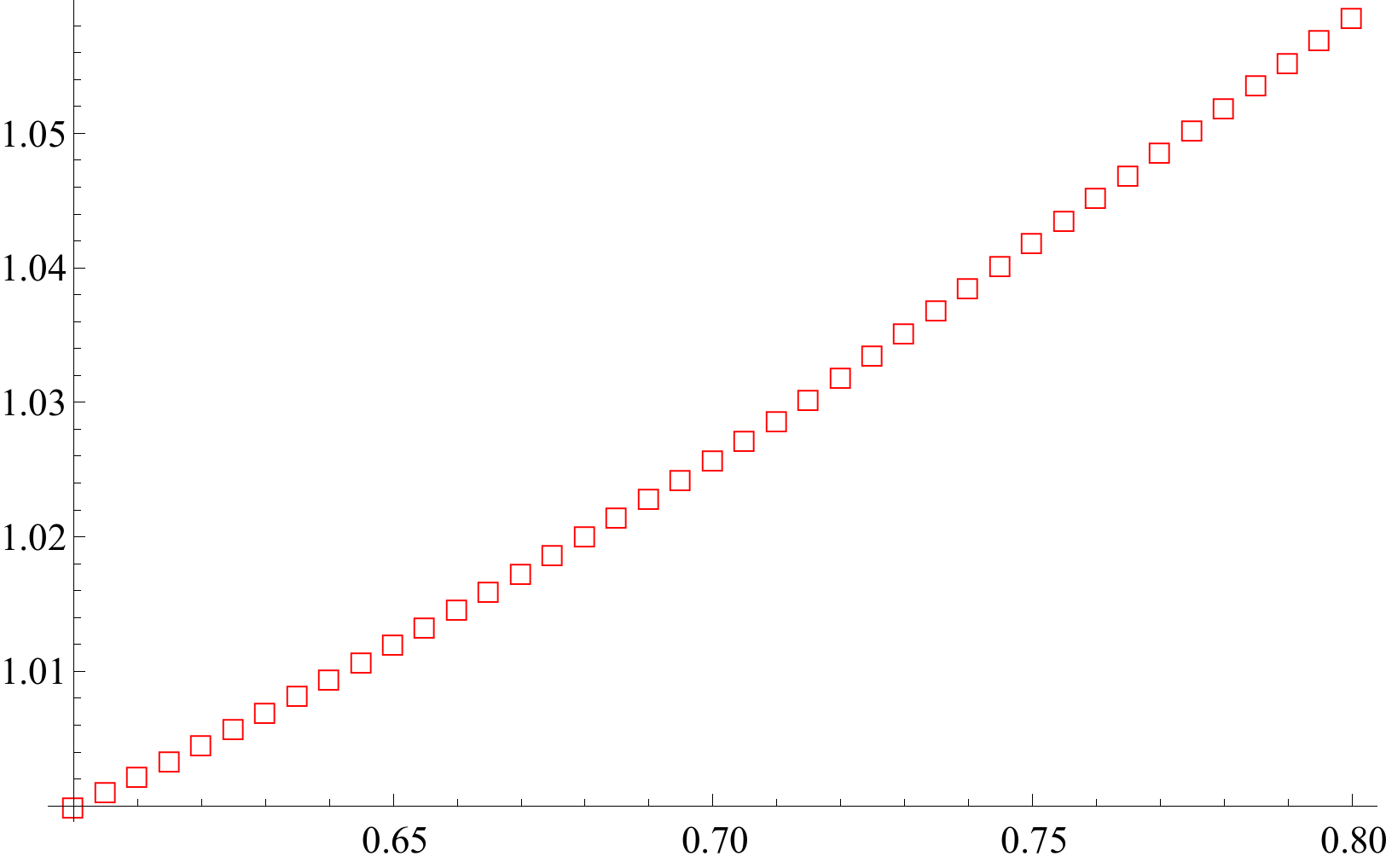}}
\subfloat[]
{\label{fig:put_price_deriv}\includegraphics[scale=0.48]{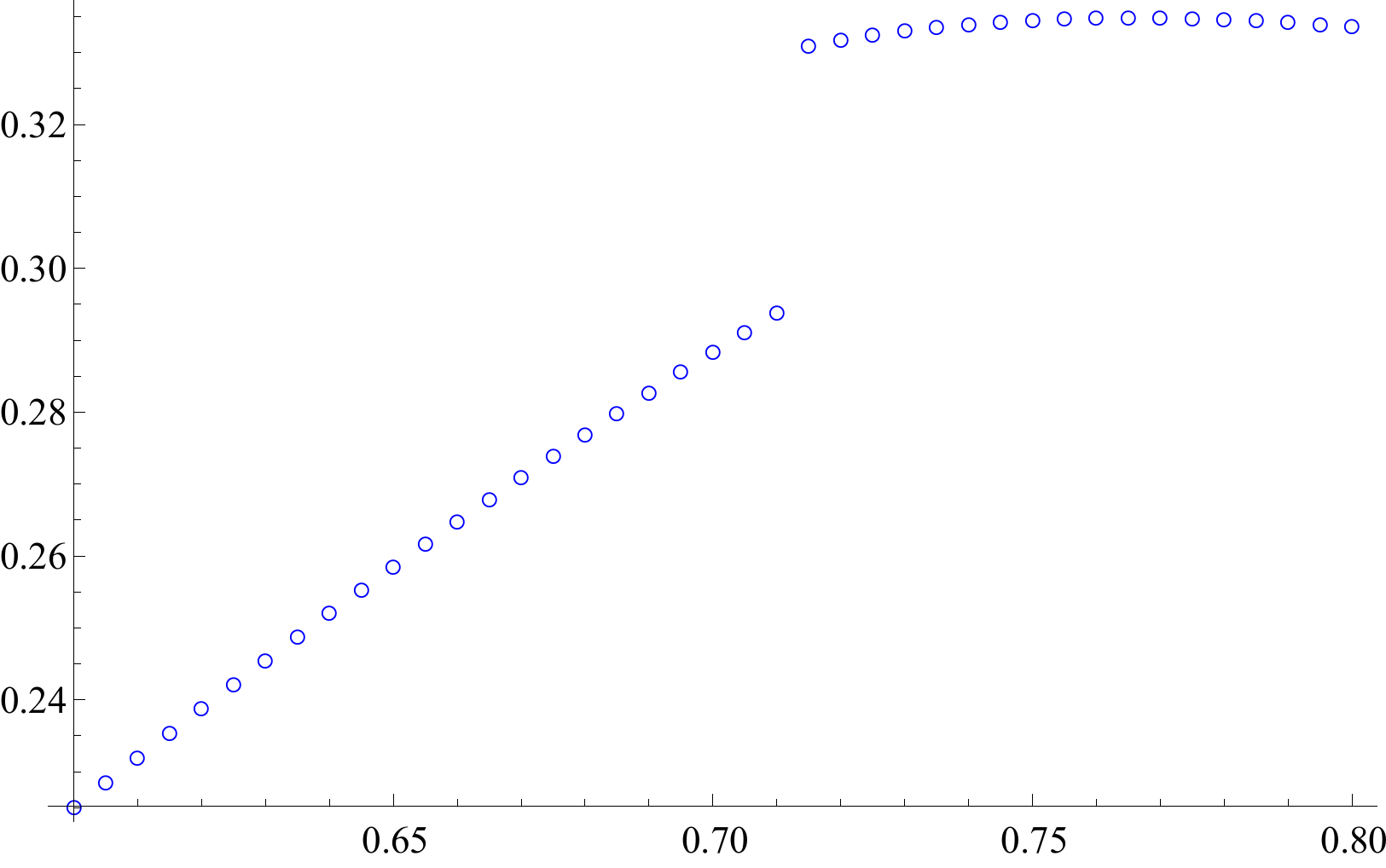}}
\caption{The put price in \ref{fig:put_price} and its derivative in \ref{fig:put_price_deriv}.}
\label{fig:price_deriv}
\end{figure}
\end{example}
\noindent To compute option deltas and gammas, i.e. the first and second derivatives with respect to $S_0$, we have to expend a little more effort and make the additional assumption that $\sigma > 0$. Going forward we will write $f(t,k)$ for the function $f(t)$ defined in \eqref{eq:def_of_f} and $p_t(x)$ for the density of $X_t$. In the Appendix in Proposition \ref{prop:cont_deriv} we show that under the assumption of risk neutrality and $\sigma > 0$ we have $\frac{\partial f(t,k)}{\partial k} = -\p(e^{X_t} > k)$, $\frac{\partial^2 f(t,k)}{\partial k^2} = \frac{p_t(\log(k))}{k}$ and that both $\frac{\partial f(t,k)}{\partial k}$ and $\frac{\partial^2 f(t,k)}{\partial k^2}$ are continuous functions of $t$ for fixed $k$. Therefore, we can carry out the same program with the Greeks as we did with the option price itself. We present this only in abbreviated form. The Laplace transform of $\frac{\partial f(t,k)}{\partial k}$ in $t$ is given by
\begin{align*}
-\int_{0}^{\infty}e^{-qt}\p(e^{X_t} > k)\d t = -\frac{\p(e^{X_{\eq}} > k)}{q} = -\begin{dcases}
\sum_{\ell = 1}^{M}\frac{\zeta'_{\ell}k^{-\zeta_{\ell}}}{\zeta_{\ell}} & k > 1, \\
\frac{1}{q} - \sum_{\ell = 1}^{\hat{M}}\frac{\hat{\zeta}'_{\ell}k^{\hat{\zeta}_{\ell}} }{\hat{\zeta}_{\ell}} & k \leq 1
\end{dcases},
\end{align*}
where the last equality is valid for $\vert q \vert$ large enough. Now with $\{\xi_{\ell}\}_{1 \leq \ell \leq M}$, $\{\hat{\xi}_{\ell}\}_{1 \leq \ell \leq \hat{M}}$ as defined in \eqref{eq:xi_varsig} and $\{\beta_{\ell}\}_{1 \leq \ell \leq M}$, $\{\hat{\beta}_{\ell}\}_{1 \leq \ell \leq \hat{M}}$  as defined in \eqref{eq:beta_1} and \eqref{eq:beta_2}  we redefine
\begin{align*}
b_n := s_n(m_{1,n},\,m_{2,n},\,\ldots,\,m_{N,n}),\quad c_n := m_{M,n},\quad m_{\ell,n} := m_{n}(\bar{\xi}_{\ell,n},\,\bar{\beta}_{\ell,n},\,\bar{\zeta'}_{\ell,n}), \quad 1 \leq \ell \leq M,
\end{align*}
and
\begin{align*}
\hat{b}_n := s_n(\hat{m}_{1,n},\,\hat{m}_{2,n},\,\ldots,\,\hat{m}_{\hat{N},n}),\quad \hat{c}_n := \hat{m}_{\hat{M},n},\quad \hat{m}_{\ell,n} := m_{n}(\bar{\hat{\xi}}_{\ell,n},\,\bar{\hat{\beta}}_{\ell,n},\,\overline{\hat{\zeta'}}_{\ell,n}), \quad 1 \leq \ell \leq \hat{M},
\end{align*}
so that
\begin{align*}
\sum_{\ell = 1}^{M}\frac{\zeta'_{\ell}k^{-\zeta_{\ell}}}{\zeta_{\ell}} = \sum_{n=2}^{\infty}\frac{b_n}{q^{n}} + \sum_{n=2}^{\infty}\frac{c_nk^{-(2q/\sigma^2)^{1/2}}}{q^{n/2}},\quad\text{and}\quad\sum_{\ell = 1}^{\hat{M}}\frac{\hat{\zeta}'_{\ell}k^{\hat{\zeta}_{\ell}}}{\hat{\zeta}_{\ell}} = \sum_{n=2}^{\infty}\frac{\hat{b}_n}{q^{n}} + \sum_{n=2}^{\infty}\frac{\hat{c}_nk^{(2q/\sigma^2)^{1/2}}}{q^{n/2}}.
\end{align*}
It follows  that
\begin{align}\label{eq:part_gre}
\frac{\partial f(t,k)}{\partial k} = 
\begin{dcases}
-\sum_{n=1}^{\infty}b_{n+1}\frac{t^n}{n!} - \sum_{n=0}^{\infty}c_{n+2}\varphi_n\left(t;c\right) & k > 1,\,c = \frac{\sqrt{2}\log(k)}{\sigma} \\
\sum_{n=1}^{\infty}\hat{b}_{n+1}\frac{t^n}{n!} + \sum_{n=0}^{\infty}\hat{c}_{n+2} \varphi_n\left(t;c\right) - 1& k \leq 1,\,c =  -\frac{\sqrt{2}\log(k)}{\sigma}
\end{dcases}
, \quad t \geq 0,
\end{align}
where the same convergence properties apply as for the series in Theorems \ref{theo:in_the_money} (i) and \ref{theo:out_the_money} (i).\\ \\
\noindent An application of Fubini's Theorem then shows that
\begin{align*}
\mathcal{L}\left\{\frac{p_t(\log(k)}{k}\right\}(q) = \frac{\d}{\d k}\left(-\frac{\p(e^{X_{\eq}} > k)}{q}\right) = \frac{1}{k}\times\begin{dcases}
\sum_{\ell = 1}^{M}\zeta'_{\ell}k^{-\zeta_{\ell}} & k > 1\\
 \sum_{\ell = 1}^{\hat{M}}\hat{\zeta}'_{\ell}k^{\hat{\zeta}_{\ell}} & k \leq 1
\end{dcases}.
\end{align*}
Redefining again
\begin{align*}
b_n := s_n(m_{1,n},\,m_{2,n},\,\ldots,\,m_{N,n}),\quad c_n := m_{M,n},\quad m_{\ell,n} := m_{n}(\bar{\beta}_{\ell,n},\,\bar{\zeta'}_{\ell,n}), \quad 1 \leq \ell \leq M,
\end{align*}
and
\begin{align*}
\hat{b}_n := s_n(\hat{m}_{1,n},\,\hat{m}_{2,n},\,\ldots,\,\hat{m}_{\hat{N},n}),\quad \hat{c}_n := \hat{m}_{\hat{M},n},\quad \hat{p}_{\ell,n} := m_{n}(\bar{\hat{\beta}}_{\ell,n},\,\overline{\hat{\zeta'}}_{\ell,n}), \quad 1 \leq \ell \leq \hat{M},
\end{align*}
gives
\begin{align*}
\sum_{\ell = 1}^{M}\zeta'_{\ell}k^{-\zeta_{\ell}} = \sum_{n=2}^{\infty}\frac{b_n}{q^{n}} + \sum_{n=1}^{\infty}\frac{c_nk^{-(2q/\sigma^2)^{1/2}}}{q^{n/2}},\quad\text{and}\quad\sum_{\ell = 1}^{\hat{M}}\hat{\zeta}'_{\ell}k^{\hat{\zeta}_{\ell}} = \sum_{n=2}^{\infty}\frac{\hat{b}_n}{q^{n}} + \sum_{n=1}^{\infty}\frac{\hat{c}_nk^{(2q/\sigma^2)^{1/2}}}{q^{n/2}}
\end{align*}
and therefore,
\begin{align}\label{eq:part_gre_2}
\frac{\partial^2 f(t,k)}{\partial k^2} = \frac{1}{k} \times
\begin{dcases}
\sum_{n=1}^{\infty}b_{n+1}\frac{t^n}{n!} + \sum_{n=-1}^{\infty}c_{n+2}\varphi_n\left(t;c\right) & k > 1,\,c = \frac{\sqrt{2}\log(k)}{\sigma} \\
\sum_{n=1}^{\infty}\hat{b}_{n+1}\frac{t^n}{n!} + \sum_{n=-1}^{\infty}\hat{c}_{n+2} \varphi_n\left(t;c\right)& k \leq 1,\,c =  -\frac{\sqrt{2}\log(k)}{\sigma}
\end{dcases}
, \quad t \geq 0,
\end{align}
where we have defined $\varphi_{-1}(t;c) := \frac{\exp(-c^2/(4t))}{\sqrt{4t}}$ and the same convergence properties apply as for the series in Theorems \ref{theo:in_the_money} (i) and \ref{theo:out_the_money} (i). Thus, writing the call/put option price now as a function also of the argument $S_0$, with $k = K/S_0$ as usual, we have:
\begin{theorem}
Suppose that $X$ fulfils the risk neutral condition and  $\sigma > 0$. Then
\begin{align*}
\frac{\partial C(T,S_0)}{\partial S_0} = e^{-rT} \left(f(T,k) - k\frac{\partial f(T,k)}{\partial k}\right), \qquad 
\frac{\partial^2 C(T,S_0)}{\partial S_0^2} = \frac{e^{-rT}}{S_0}k^2\frac{\partial ^2 f(T,k)}{\partial k ^2}
\end{align*}
where $\frac{\partial f(t,k)}{\partial k}$ and $\frac{\partial^2 f(t,k)}{\partial k^2}$  have the form \eqref{eq:part_gre} and  \eqref{eq:part_gre_2} respectively.
\end{theorem}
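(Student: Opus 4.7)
The strategy is a two-line chain rule applied to the identity $C(T,S_0) = e^{-rT} S_0\, f(T,k)$ with $k := K/S_0$ (so $\partial k/\partial S_0 = -k/S_0$), once the representations \eqref{eq:part_gre}--\eqref{eq:part_gre_2} and the twice-differentiability of $f(t,k)$ in $k$ are granted. Essentially all the analytical content of the statement has already been absorbed into the derivations of \eqref{eq:part_gre} and \eqref{eq:part_gre_2} in the paragraphs preceding the theorem, and into the appendix Proposition \ref{prop:cont_deriv}.

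First I would verify the Delta. Differentiating $e^{-rT} S_0\, f(T,k)$ in $S_0$ via the product and chain rules gives
\[
\frac{\partial C(T,S_0)}{\partial S_0} = e^{-rT}\!\left(f(T,k) + S_0 \cdot \frac{\partial f(T,k)}{\partial k}\cdot \frac{-k}{S_0}\right) = e^{-rT}\!\left(f(T,k) - k\,\frac{\partial f(T,k)}{\partial k}\right),
\]
which is the claimed formula. Differentiating once more in $S_0$ and again using $\partial k/\partial S_0 = -k/S_0$ yields
\[
\frac{\partial^2 C(T,S_0)}{\partial S_0^2} = e^{-rT}\!\left[\frac{\partial f}{\partial k}\cdot \frac{-k}{S_0} + \frac{k}{S_0}\,\frac{\partial f}{\partial k} + k\cdot \frac{\partial^2 f}{\partial k^2}\cdot \frac{k}{S_0}\right] = \frac{e^{-rT}}{S_0}\,k^2\,\frac{\partial^2 f(T,k)}{\partial k^2},
\]
where the two terms linear in $\partial f/\partial k$ cancel and only the $\partial^2 f/\partial k^2$ contribution survives. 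The series identifications of $\partial f/\partial k$ and $\partial^2 f/\partial k^2$ promised in the statement are simply \eqref{eq:part_gre} and \eqref{eq:part_gre_2}.

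The main obstacle is thus not the chain-rule step but the justification that $\partial f/\partial k$ and $\partial^2 f/\partial k^2$ actually exist pointwise and admit the series \eqref{eq:part_gre}--\eqref{eq:part_gre_2}. The differentiability itself, together with the identifications $\partial f/\partial k = -\mathbb{P}(e^{X_t}>k)$ and $\partial^2 f/\partial k^2 = p_t(\log k)/k$ and continuity in $t$ for fixed $k$, is exactly the content of Proposition \ref{prop:cont_deriv}; this is where the hypothesis $\sigma>0$ enters essentially, through the smoothing effect of the Brownian component that guarantees a continuous transition density $p_t$. Once continuity in $t$ is available, the series representations follow by the same pipeline used in Theorems \ref{theo:in_the_money} and \ref{theo:out_the_money}: take the Laplace transform of each partial in $t$, rewrite the result in terms of $\{\zeta_\ell,\hat{\zeta}_\ell\}$ via Theorem \ref{theo:atq}, expand the building blocks $\xi_\ell,\beta_\ell,\zeta'_\ell$ (and hatted counterparts) in powers of $q^{-1/2}$ using Corollaries \ref{cor:with_sig} and \ref{cor:norm}, and invert termwise via Theorem \ref{theo:tbt} together with Lemma \ref{lem:extra_laplace} and Corollary \ref{cor:main}. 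Convergence and termwise differentiability of the resulting $T$-series on $\mathbb{H}$ are controlled uniformly on compacta by Corollary \ref{cor:complx_bnd}, exactly as in Step 3 of the proof of Theorem \ref{theo:in_the_money}. With \eqref{eq:part_gre}--\eqref{eq:part_gre_2} in hand, the chain-rule step above completes the argument.
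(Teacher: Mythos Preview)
Your proposal is correct and matches the paper's intent: the theorem is stated without a formal proof environment, with all analytical content absorbed into the preceding derivations of \eqref{eq:part_gre}--\eqref{eq:part_gre_2} and Proposition~\ref{prop:cont_deriv}, leaving only the chain-rule computation you carry out. Your verification of the Delta and Gamma via $C(T,S_0)=e^{-rT}S_0\,f(T,k)$ with $\partial k/\partial S_0=-k/S_0$ is exactly what is implicitly assumed.
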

\begin{remark}
The put-call parity implies that $\frac{\partial P(T,S_0)}{\partial S_0} =\frac{\partial C(T,S_0)}{\partial S_0} - 1$ and $\frac{\partial^2 P(T,S_0)}{\partial S_0^2} = \frac{\partial^2 C(T,S_0)}{\partial S_0^2} $\rdemo
\end{remark}
\begin{example}[\textbf{Option Deltas and Gammas}]
We compute  $\frac{\partial C(0.1,S_0)}{\partial S_0}$ and $\frac{\partial^2 C(0.1,S_0)}{\partial S_0^2}$ for an option with parameters $K=10$, $r=0.03$, where the underlying process is defined by Parameter set 1, $\sigma = 0.042$ and $a=0.141875$. The results together with the price are shown in Figure \ref{fig:delta_gamma}. The truncations $\mathbf{M} = \hat{\mathbf{M}} = \{10, 10, 10, 10, 10, 12, 12, 16\}$ are used everywhere.\demo
\begin{figure}[!t]
\centering
\subfloat[]
{\label{fig:cp}\includegraphics[scale=0.35]{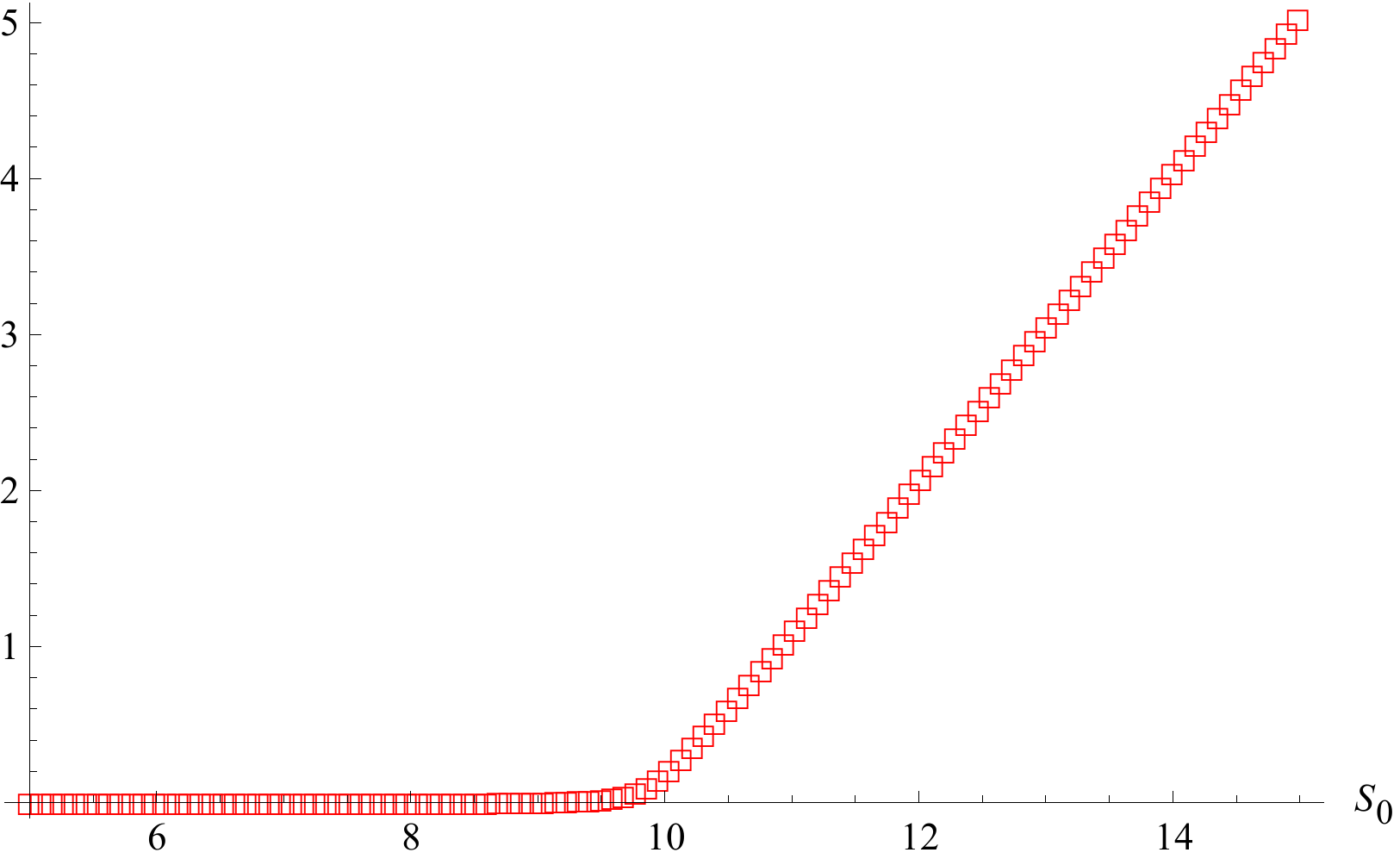}}
\subfloat[]
{\label{fig:cpd}\includegraphics[scale=0.35]{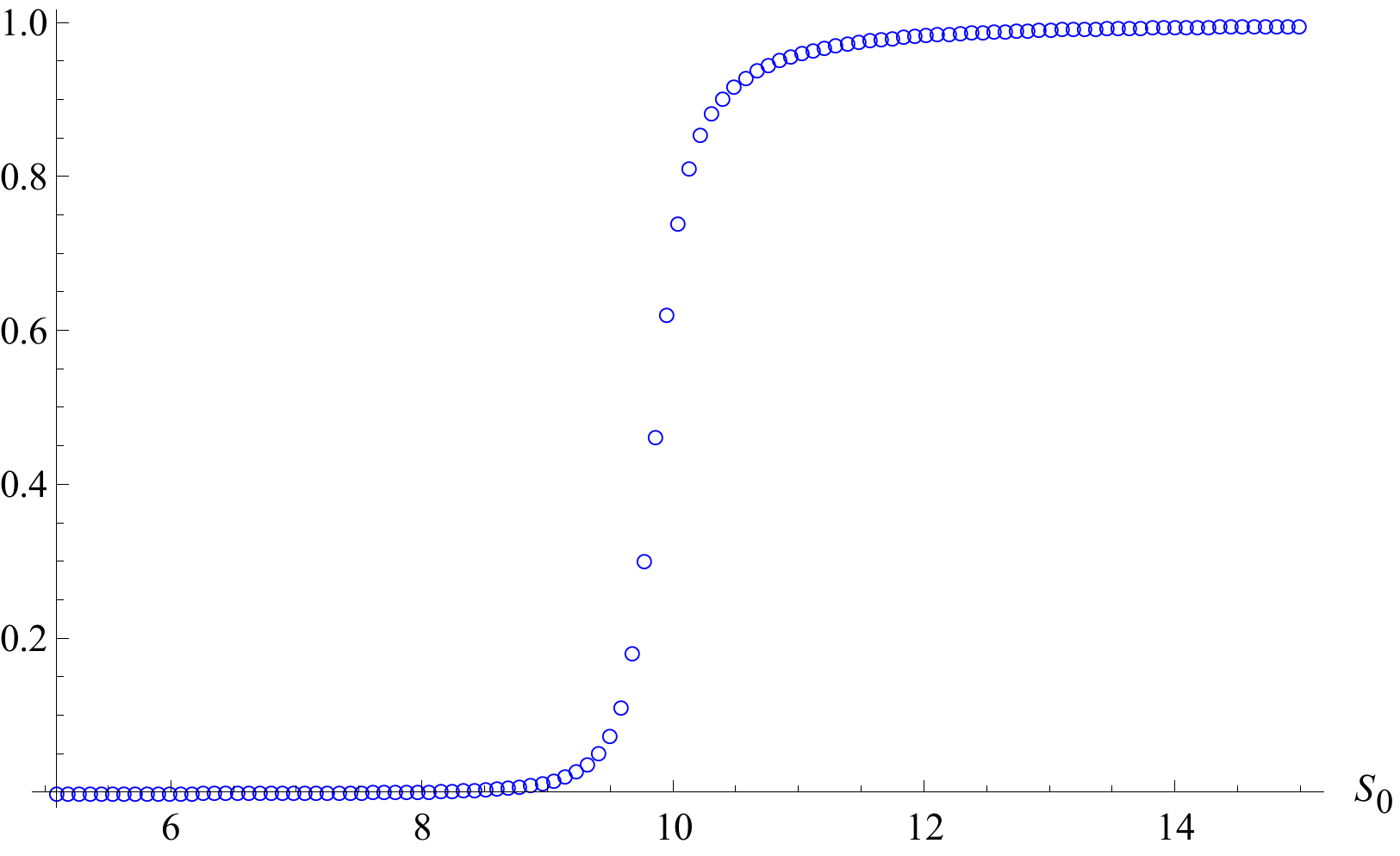}}
\subfloat[]
{\label{fig:cpd2}\includegraphics[scale=0.35]{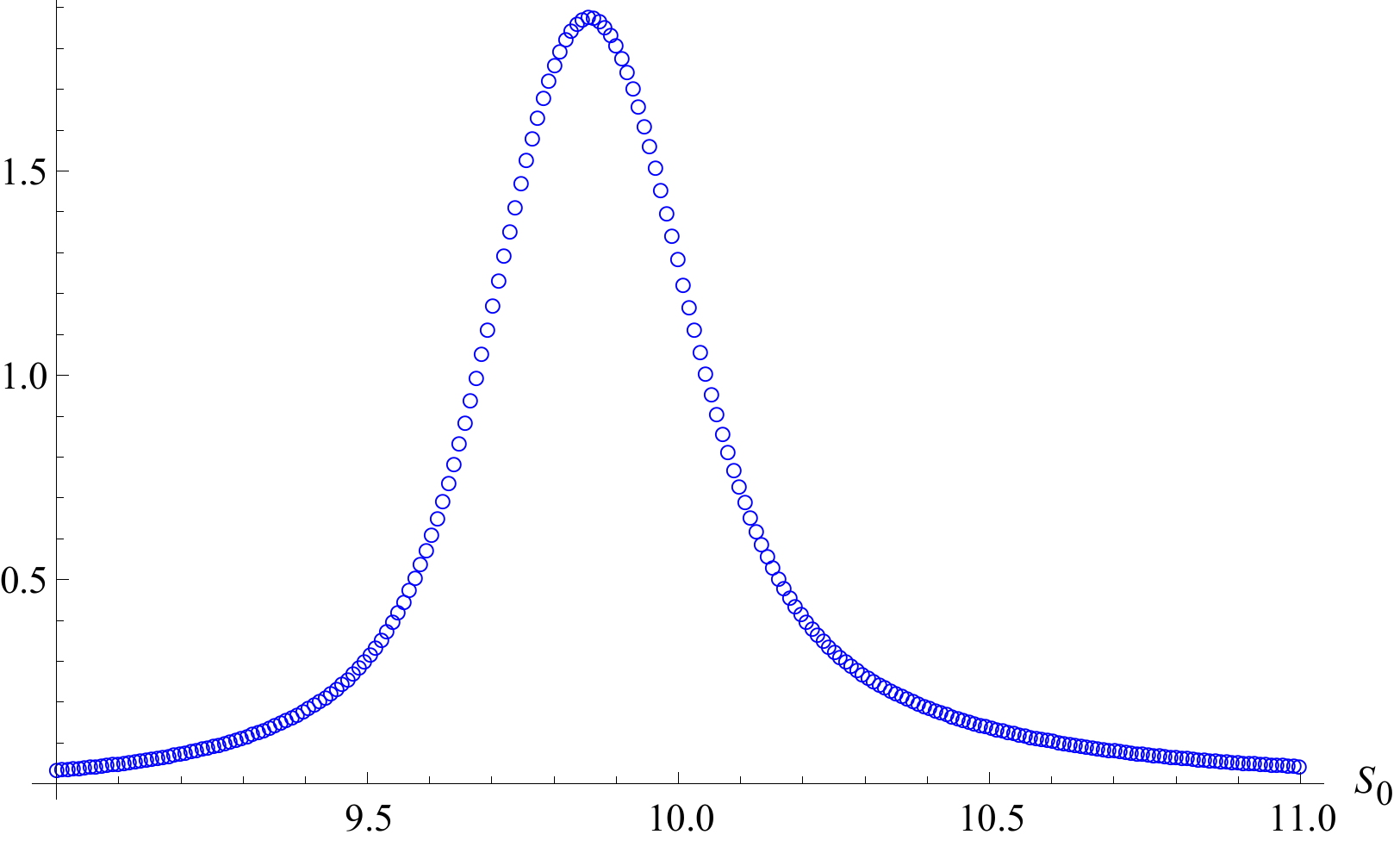}}
\caption{The price $C(0.1,S_0)$ in \ref{fig:cp}, the option delta $\frac{\partial C(0.1,S_0)}{\partial S_0}$ in \ref{fig:cpd} and the option gamma $\frac{\partial^2 C(0.1,S_0)}{\partial S_0^2}$ in \ref{fig:cpd2}}\label{fig:delta_gamma}
\end{figure}
\end{example}
\begin{remark}[\textbf{Extensions and potential extensions}]
The techniques we have outlined here are not restricted to just call and put options or financial applications. For example, it would be relatively simple to develop analytic formulas for the cumulative distribution function and density of $X_t$ (the latter provided $\sigma > 0$); we would just need to modify the approach we took for computing option deltas and gammas slightly. Additionally, looking back to Example \ref{ex:barr}, we see that the price of the up and out digital option is also expressed as a sum of basic transformations of the solutions $\{\zeta_{\ell}\}_{1\leq\ell\leq M}$. It would not be difficult (although perhaps slightly more tedious) to replicate our approach for this type of option; indeed it should be possible to adapt the approach more generally also to barrier or look-back options.\\ \\
\noindent Additionally, it is reasonable to assume that an extension to other related processes is also possible. For example the meromorphic class of processes (see e.g. \cite{KuzKyPa2011}) can be thought of as a generalization of the hyperexponential process; the key difference is that we allow $N$ and $\hat{N}$ to take the value $+\infty$ so that our Laplace exponent is no longer a rational function, but a nonetheless tractable meromorphic function. If we can justify various changes of the order of summation, most notably in Formulas \ref{eq:psi_coef}, \ref{eq:g_series}, \ref{eq:gh_series}, then the approach should generalize also to this class. Additionally, we could consider the class of processes with jumps of rational transform, essentially those L\'{e}vy processes whose Laplace exponent is a rational function (includes L\'{e}vy processes of phase-type). The complication in this case would be that we may have non-real poles with higher multiplicity and that we lose the interlacing property. 
\end{remark}\rdemo
\section*{Acknowledgement} This work was supported by the Austrian Science Fund (FWF) under the project F5508-N26, which is part of the Special Research Program ``Quasi-Monte Carlo Methods: Theory and Applications".
\newpage
\begin{appendices}
\section{Additional Proofs}\label{app:proof}
\noindent We will say a L\'{e}vy process $X$ has a $k$-th exponential moment, for $k \in \r$, if $\e[e^{kX_t}] < \infty$ for all $t \geq 0$. An equivalent statement (see Theorem 3.6 in \cite{Kyprianou}) is that $\int_{\vert x \vert > 1}e^{k}\nu(\d x) < \infty$, where $\nu(\d x)$ is the L\'{e}vy measure. We note that any hyperexponential process that satisfies the risk neutral condition has a $(1+\varepsilon)$-th exponential moment, where $0 < \varepsilon < \rho_1 - 1$.
\begin{proposition}\label{prop:cont_deriv}
If $X$ is a L\'{e}vy process with a first exponential moment and $\sigma > 0$ then
\begin{align*}
\frac{\partial f(t,k)}{\partial k} = -\p(e^{X_t} > k),\quad\text{and}\quad \frac{\partial^k f(t,k)}{\partial k^2} = \frac{p_t(\log(k))}{k},
\end{align*}
where $p_t(x)$, $x \in \r$, is the density of $X_t$.
Further, for fixed $k$, both $\frac{\partial f(t,k)}{\partial k}$ and $\frac{\partial^k f(t,k)}{\partial k^2}$ are continuous functions of $t$. 
\end{proposition}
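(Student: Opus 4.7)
The plan is to obtain the first formula by passing differentiation under the expectation via dominated convergence, deduce the second by writing the result as a tail integral against the density $p_t$ and applying the fundamental theorem of calculus, and finally handle the continuity in $t$ via stochastic continuity of $X$ together with a Gaussian convolution representation.

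For the first formula, I would form the difference quotient
\begin{align*}
\frac{f(t,k+h)-f(t,k)}{h} = \mathbb{E}\!\left[\frac{(e^{X_t}-k-h)^+ - (e^{X_t}-k)^+}{h}\right].
\end{align*}
A three-case analysis (whether $e^{X_t}$ lies above $k+h$, between $k$ and $k+h$, or below $k$, with analogous cases for $h<0$) shows that the integrand converges pointwise to $-\mathbb{1}(e^{X_t}>k)$ almost surely and is bounded in absolute value by $1$. Dominated convergence then yields $\partial f(t,k)/\partial k = -\mathbb{P}(e^{X_t}>k)$. Note the first exponential moment is only used to ensure $f(t,k)$ itself is finite; the difference quotient is already uniformly bounded.

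For the second formula, I would use that $\sigma>0$ forces $X_t$ to have a Lebesgue density $p_t$ for every $t>0$. Concretely, decomposing $X_t = \sigma B_t + J_t$ with $B$ an independent standard Brownian motion and $J$ the drift-plus-jump part yields
\begin{align*}
p_t(x) = \mathbb{E}[\phi_{\sigma^2 t}(x-J_t)],
\end{align*}
where $\phi_v$ is the centered Gaussian density of variance $v$; this is smooth in $x$. Hence $-\mathbb{P}(X_t>\log k) = -\int_{\log k}^{\infty} p_t(x)\,dx$, and differentiating in $k$ via the fundamental theorem of calculus gives $\partial^2 f(t,k)/\partial k^2 = p_t(\log k)/k$.

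For continuity in $t$: the map $t\mapsto \mathbb{P}(X_t>\log k)$ is continuous for $t>0$ because $X$ is stochastically continuous and $X_t$ has no atom at $\log k$ (density exists), so $X_t \Rightarrow X_{t_0}$ implies convergence of the tail probability at continuity points of the limit CDF. For $t\mapsto p_t(\log k)$, I would apply dominated convergence to the representation above: on any compact interval $[t_1,t_2]\subset(0,\infty)$ the integrand is uniformly bounded by $(2\pi\sigma^2 t_1)^{-1/2}$, and pointwise continuity in $t$ follows from joint continuity of $(t,y)\mapsto \phi_{\sigma^2 t}(\log k-y)$ combined with the stochastic continuity of $J$. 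The main obstacle is controlling $t\mapsto p_t(\log k)$ as $t\to 0^+$, since the naive Gaussian bound blows up; the remedy is a splitting argument based on $\{|J_t|\le |\log k|/2\}$ versus its complement, on which the Gaussian decays exponentially in $1/t$ and $\mathbb{P}(|J_t|>\varepsilon)=O(t)$ respectively, both dominating the $t^{-1/2}$ singularity. This shows $p_t(\log k)\to 0$ as $t\to 0^+$ for $k\neq 1$, matching the value $0$ that one obtains formally from $f(0,k)=(1-k)^+$ on $k\neq 1$.
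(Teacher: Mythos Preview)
Your argument is correct. The two proofs agree on the continuity of $\partial f/\partial k$ (both invoke stochastic continuity of $X$ plus the Portmanteau lemma, using that $X_t$ has a density so $(\log k,\infty)$ is a continuity set), but differ elsewhere. For the first identity the paper writes $f(t,k)=\int_{\log k}^{\infty}(e^{x}-k)\,p_t(x)\,\d x$, integrates by parts to obtain $\int_{\log k}^{\infty}e^{x}\,\mathbb{P}(X_t>x)\,\d x$, and then differentiates in $k$; your direct dominated-convergence argument on the difference quotient (using that $k\mapsto(y-k)^+$ is $1$-Lipschitz) is slightly more elementary and avoids the boundary-term bookkeeping. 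For the continuity of $p_t(\log k)$ in $t$ the routes diverge more substantially: the paper shows $\int_{\mathbb{R}}|\phi_s(z)-\phi_t(z)|\,\d z\to 0$ as $s\to t$ (easy from $|\phi_s(z)|\le e^{-s\sigma^2 z^2/2}$ and dominated convergence) and then invokes Fourier inversion, which in fact yields uniform convergence $p_s\to p_t$ on $\r$; your Gaussian-convolution representation $p_t(x)=\e[\phi_{\sigma^2 t}(x-J_t)]$ with dominated convergence is more probabilistic and handles the $t\to 0^+$ limit explicitly via the splitting $\{|J_t|\le |\log k|/2\}$. That last point is something the paper glosses over, so your treatment is in that respect more complete; your claim $\mathbb{P}(|J_t|>\varepsilon)=O(t)$ is standard for L\'evy processes (from $t^{-1}\mathbb{P}(|J_t|>\varepsilon)\to \nu(\{|x|>\varepsilon\})$) but would benefit from a one-line reference when you write it up.
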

\begin{proof}
We remark first that $X_t$ has a smooth density for all $t > 0$ since $\sigma > 0$  (see E. 29.14 in \cite{Sato}).  The first identity follows from the formula
\begin{align*}
f(t,k) = \int_{\log(k)}^{\infty}e^{x}\d F_t(x),
\end{align*}
where $F_t(x) = \p(X_t > x)$. An application of integration by parts followed by differentiation with respect to $k$ gives the result. From the stochastic continuity of $X$ it follows that $X_s$ converges in distribution to $X_t$ as $s \rightarrow t$. Since $X_t$ has a density, $(\log(k),\infty)$ is a continuity set for $X_t$, and by the Portmanteau lemma we have $\lim_{s\rightarrow t}\p(e^{X_s} > k) = \p(e^{X_t} > k)$, which proves the continuity of $\frac{\partial f(t,k)}{\partial k}$. Then, by differentiating with respect to $k$, we get immediately the formula for $\frac{\partial^k f(t,k)}{\partial k^2}$. To show continuity in $t$ is suffices to prove this for the density $p_t(x)$. It is, however, relatively easy to show that 
\begin{align*}
\lim_{s\rightarrow t}\int_{-\infty}^{\infty}\vert\phi_s(z) - \phi_t(z) \vert\d z \rightarrow 0,
\end{align*}
where $\phi_s(z) = e^{t\psi(iz)},\,z \in \r$, is the characteristic function of $X_s$. Therefore, for $x \in \r$ we have
\begin{align*}
\vert p_s(x) - p_t(x) \vert = \left \vert \frac{1}{2\pi}\int_{-\infty}^{\infty}e^{ixz}\phi_s(z)\d z - \frac{1}{2\pi}\int_{-\infty}^{\infty}e^{ixz}\phi_t(z)\d z \right \vert \leq \int_{-\infty}^{\infty}\vert\phi_s(z) - \phi_t(z) \vert\d z,
\end{align*}
and, in fact, uniform continuity follows.
\end{proof}

\end{appendices}

\end{document}